\documentclass[11pt]{llncs}
\usepackage{graphicx,amsmath,amssymb,enumerate}
\usepackage{algorithm}
\usepackage{algorithmic}
\usepackage{subfiles}
\usepackage{titling}
\usepackage{tikz}
\usetikzlibrary{shapes.geometric, arrows}

\usepackage{fullpage}
\usepackage{authblk}

\usepackage{amsmath}
\usepackage{xcolor}
\usepackage{graphicx}
\usepackage[export]{adjustbox}

\newcommand{\Patrascu }{P\v{a}tra\c{s}cu}
\newcommand{\modulo}{\operatorname{mod}{}}

\newcommand{\remove}[1]{}

\newcommand{\ThreeSUM}{\textsf{3SUM}}
\newcommand{\ThreeSUMIndexing}{\textsf{3SUM-Indexing}}
\newcommand{\KplusOneSUMIndexing}{\textsf{(k+1)-SUM-Indexing}}
\newcommand{\TSI}{\textsf{3SI}}

\newcommand{\SetDisjointness}{\textsf{SetDisjointness}}
\newcommand{\KSetDisjointness}{\textsf{k-SetDisjointness}}

\newcommand{\SetIntersection}{\textsf{SetIntersection}}

\newcommand{\KReachability}{\textsf{k-Reachability}}
\newcommand{\Reachability}{\textsf{Reachability}}
\newcommand{\KmOneReachability}{\textsf{(k-1)-Reachability}}
\newcommand*\samethanks[1][\value{footnote}]{\footnotemark[#1]}

\begin{document}

\title{Conditional Lower Bounds for Space/Time Tradeoffs}
\date{}

\author[1]{Isaac Goldstein\thanks{This research is supported by the Adams Foundation of the Israel Academy of Sciences and Humanities}}
\author[2]{Tsvi Kopelowitz\thanks{Part of this work took place while the second author was at University of Michigan.  This work is supported in part by the Canada Research Chair for Algorithm Design,  NSF grants CCF-1217338,  CNS-1318294,  and CCF-1514383}}
\author[1]{Moshe Lewenstein\thanks{This work was partially supported by an ISF grant \#1278/16}}
\author[1]{Ely Porat \samethanks}
\affil[1]{Bar-Ilan University \\  \texttt{\{goldshi,moshe,porately\}@cs.biu.ac.il}}
\affil[2]{University of Waterloo \\  \texttt{kopelot@gmail.com}}
\maketitle

\thispagestyle{empty}

\begin{abstract}
In recent years much effort has been concentrated towards achieving polynomial time lower bounds on algorithms for solving various well-known problems. A useful technique for showing such lower bounds is to prove them conditionally based on well-studied hardness assumptions such as 3SUM, APSP, SETH, etc. This line of research helps to obtain a better understanding of the complexity inside P.

A related question asks to prove conditional \emph{space} lower bounds on data structures that are constructed to solve certain algorithmic tasks after an initial preprocessing stage. This question received little attention in previous research even though it has potential strong impact.

In this paper we address this question and show that surprisingly many of the well-studied hard problems that are known to have conditional polynomial \emph{time} lower bounds are also hard when concerning \emph{space}. This hardness is shown as a tradeoff between the space consumed by the data structure and the time needed to answer queries. The tradeoff may be either smooth or admit one or more singularity points.

We reveal interesting connections between different space hardness conjectures and present matching upper bounds. We also apply these hardness conjectures to both static and dynamic problems and prove their conditional space hardness.

We believe that this novel framework of polynomial space conjectures can play an important role in expressing polynomial space lower bounds of many important algorithmic problems. Moreover, it seems that it can also help in achieving a better understanding of the hardness of their corresponding problems in terms of time.
\end{abstract}

\section{Introduction}

\subsection{Background}
Lately there has been a concentrated effort to understand the time complexity within P, the class of decision problems solvable by polynomial time algorithms. The main goal is to explain why certain problems have time complexity that seems to be non-optimal. For example, all known efficient algorithmic solutions for the 3SUM problem, where we seek to determine whether there are three elements $x,y,z$ in input set $S$ of size $n$ such that $x+y+z=0$, take $\tilde{O}(n^2)$ time\footnote{The $\tilde{O}$ and $\tilde{\Omega}$ notations suppress polylogarithmic factors}. However, the only real lower bound that we know is the trivial $\Omega(n)$. Likewise, we know how to solve the {\em all pairs shortest path}, APSP, problem in $\tilde{O}(n^3)$ time but we cannot even determine whether it is impossible to obtain an $\tilde{O}(n^2)$ time algorithm.
One may note that it follows from the time-hierarchy theorem that there exist problems in P with complexity $\Omega(n^k)$ for every fixed $k$. Nevertheless, such a separation for natural practical problems seems to be hard to achieve.

The collaborated effort to understand the internals of P has been concentrated on identifying some basic problems that are conjectured to be hard to solve  more efficiently (by polynomial factors) than their current known complexity.
These problems serve as a basis to prove conditional hardness of other problems by using reductions. The reductions are reminiscent of NP-complete reductions but differ in that they are restricted to be of time complexity strictly smaller (by a polynomial factor) than the problem that we are reducing to.
Examples of such hard problems include the well-known 3SUM problem, the fundamental APSP problem, (combinatorial) Boolean matrix multiplication, etc. Recently, conditional time lower bounds have been proven based on the conjectured hardness of these problems for graph algorithms~\cite{AGW15,WW10}, edit distance~\cite{BI15}, longest common subsequence (LCS)~\cite{ABW15,BK15}, dynamic algorithms~\cite{AW14,Patrascu10}, jumbled indexing~\cite{ACLL14}, and many other problems~\cite{ABHWZ16,ABW15a,AWW14,AWY15,Bringmann14,HKNS15,KPP16,LMNT15,PW10}. 

\subsection{Motivation}
In stark contrast to polynomial \emph{time} lower bounds, little effort has been devoted to finding polynomial \emph{space} conditional lower bounds. 
An example of a space lower bound appears in the work of Cohen and Porat~\cite{CP10a} and P\v{a}tra\c{s}cu and Roditty~\cite{PR14} where lower bounds are shown on the size of a distance oracle for sparse graphs based on a conjecture about the best possible data structure for a set intersection problem (which we call set disjointness in order to differ it from its reporting variant).

A more general question is, for algorithmic problems, what conditional lower bounds of a space/time tradeoff can be shown based on the set disjointness (intersection) conjecture? Even more general is to discover what space/time tradeoffs can be achieved based on the other algorithmic problems that we assumed are hard (in the time sense)? Also, what are the relations between these identified "hard" problems in the space/time tradeoff sense? These are the questions which form the basis and framework of this paper.

Throughout this paper we show connections between different hardness assumptions, show some matching upper bounds and propose several conjectures based on this accumulated knowledge. Moreover, we conjecture that there is a strong correlation between polynomial hardness in time and space. We note that in order to discuss space it is often more natural to consider data structure variants of problems and this is the approach we follow in this paper.

\subsection{Our Results}

\paragraph{Set Disjointness.}
In the \SetDisjointness{} problem mentioned before, it is required to preprocess a collection of $m$ sets $S_1,\cdots, S_m \subset U$, where $U$ is the universe of elements and the total number of elements in all sets is $N$. For a query, a pair of integers $(i,j)$ $(1\leq i,j \leq m)$ is given and we are asked whether $S_i \cap S_j$ is empty or not. A folklore conjecture, which appears in~\cite{CP10,PR14}, suggests that to achieve a constant query time the space of the data structure constructed in the preprocessing stage needs to be $\tilde{\Omega}(N^{2})$. We call this conjecture the \SetDisjointness{} conjecture. This conjecture does not say anything about the case where we allow \emph{higher} query time. Therefore, we suggest a stronger conjecture which admits a \emph{full tradeoff} between the space consumed by the data structure (denoted by $S$) and the query time (denoted by $T$). This is what we call the Strong \SetDisjointness{} conjecture. This conjecture states that for solving \SetDisjointness{} with a query time $T$ our data structure needs $\tilde{\Omega}(N^{2}/T^2)$ space. A matching upper bound exists for this problem by generalizing ideas from~\cite{CP10} (see also~\cite{KPP15}). Our new \SetDisjointness{} conjecture can be used to admit more expressive space lower bounds for a full tradeoff between space and query time.

\paragraph{3SUM Indexing.}
One of the basic and frequently used hardness conjectures is the celebrated 3SUM conjecture. This conjecture was used for about 20 years to show many conditional \emph{time} lower bounds on various problems. However, we focus on what can be said about its \emph{space} behavior. To do this, it is natural to consider a data structure version of 3SUM which allows one to preprocess the input set $S$. Then, the query is an external number $z$ for which we need to answer whether there are $x,y\in S$ such that $x+y=z$. It was pointed out by Chan and Lewenstein~\cite{CL15} that all known algorithms for 3SUM actually work within this model as well. We call this problem {\em 3SUM Indexing}. On one hand, this problem can easily be solved using $O(n^2)$ space by sorting $x+y$ for all $x,y\in S$ and then searching for $z$ in $\tilde{O}(1)$ time. On the other hand, by just sorting $S$ we can answer queries by a well-known linear time algorithm. The big question is whether we can obtain better than $\tilde{\Omega}(n^2)$ space while using just $\tilde{O}(1)$ time query? Can it be done even if we allow $\tilde{O}(n^{1-\Omega(1)})$ query time? This leads us to our two new hardness conjectures. The \ThreeSUMIndexing{} conjecture states that when using $\tilde{O}(1)$ query time we need $\tilde{\Omega}(n^2)$ space to solve \ThreeSUMIndexing{}. In the Strong \ThreeSUMIndexing{} conjecture we say that even when using $\tilde{O}(n^{1-\Omega(1)})$ query time we need $\tilde{\Omega}(n^2)$ space to solve \ThreeSUMIndexing{}.

\paragraph{3SUM Indexing and Set Disjointness.}
We prove connections between the \SetDisjointness{} conjectures and the \ThreeSUMIndexing{} conjectures. Specifically, we show that the Strong \ThreeSUMIndexing{} conjecture implies the Strong \SetDisjointness{} conjecture, while the \SetDisjointness{} conjecture implies the \ThreeSUMIndexing{} conjecture. This gives some evidence towards establishing the difficulty within the \ThreeSUMIndexing{} conjectures. The usefulness of these conjectures should not be underestimated. As many problems are known to be 3SUM-hard these new conjectures can play an important role in achieving \emph{space} lower bounds on their corresponding data structure variants. Moreover, it is interesting to point on the difference between \SetDisjointness{} which admits smooth tradeoff between space and query time and \ThreeSUMIndexing{} which admits a big gap between the two trivial extremes. This may explain why we are unable to show full equivalence between the hardness conjectures of the two problems. Moreover, it can suggest a separation between problems with smooth space-time behavior and others which have no such tradeoff but rather two "far" extremes.

\paragraph{Generalizations.}
Following the discussion on the \SetDisjointness{} and the \ThreeSUMIndexing{} conjectures we investigate their generalizations.

\paragraph{I. k-Set Disjointness and (k+1)-SUM Indexing.}
The first generalization is a natural parametrization of both problems. In the \SetDisjointness{} problem we query about the emptiness of the intersection between \emph{two} sets, while in the \ThreeSUMIndexing{} problem we ask, given a query number $z$, whether \emph{two} numbers of the input $S$ sum up to $z$. In the parameterized versions of these problems we are interested in the emptiness of the intersection between \emph{k} sets and ask if \emph{k} numbers sum up to a number given as a query. These generalized variants are called \KSetDisjointness{} and \KplusOneSUMIndexing{} respectively. For each problem we give corresponding space lower bounds conjectures which generalize those of \SetDisjointness{} and \ThreeSUMIndexing{}. These conjectures also have corresponding strong variants which are accompanied by matching upper bounds. We prove that the \KSetDisjointness{} conjecture implies \KplusOneSUMIndexing{} conjecture via a novel method using linear equations.

\paragraph{II. k-Reachability.}
A second generalization is the problem we call \KReachability{}. In this problem we are given as an input a directed sparse graph $G=(V,E)$ for preprocessing. Afterwards, for a query, given as a pair of vertices $u,v$, we wish to return if there is a path from $u$ to $v$ consisting of at most $k$ edges. We provide an upper bound on this problem for every fixed $k \geq 1$. The upper bound admits a tradeoff between the space of the data structure (denoted by $S$) and the query time (denoted by $T$), which is $ST^{2/(k-1)}=O(n^2)$. We argue that this upper bound is tight. That is, we conjecture that if query takes $T$ time, the space must be $\tilde{\Omega}(\frac{n^2}{T^{2/(k-1)}})$. We call this conjecture the \KReachability{} conjecture.

We give three indications towards the correctness of this conjecture.
First, we prove that the base case, where $k=2$, is equivalent to the \SetDisjointness{} problem. This is why this problem can be thought of as a generalization of \SetDisjointness{}.

Second, if we consider non-constant $k$ then the smooth tradeoff surprisingly disappears and we get "extreme behavior" as $\tilde{\Omega}(\frac{n^2}{T^{2/(k-1)}})$ eventually becomes $\tilde{\Omega}(n^2)$. This means that to answer reachability queries for non-constant path length, we can either store all answers in advance using $n^2$ space or simply answer queries from scratch using a standard graph traversal algorithm. The general problem where the length of the path from $u$ to $v$ is unlimited in length is sometimes referred to as the problem of constructing efficient reachability oracles. \Patrascu{} in~\cite{Patrascu11} leaves it as an open question if a data structure with less than $\tilde{\Omega}(n^2)$ space can answer reachability queries efficiently. Moreover, \Patrascu{} proved that for constant time query, truly superlinear space is needed. Our  \KReachability{} conjecture points to this direction, while admitting full space-time tradeoff for constant $k$.

The third indication for the correctness of the \KReachability{} conjecture comes from a connection to distance oracles. A {\em distance oracle} is a data structure that can be used to quickly answer queries about the shortest path between two given nodes in a preprocessed undirected graph. As mentioned above, the \SetDisjointness{} conjecture was used to exclude some possible tradeoffs for sparse graphs. Specifically, Cohen and Porat~\cite{CP10a} showed that obtaining an approximation ratio smaller than 2 with constant query time requires $\tilde{\Omega}(n^2)$ space. Using a somewhat stronger conjecture \Patrascu{} and Roditty~\cite{PR14} showed that a (2,1)-distance oracle for unweighted graphs with $m = O(n)$ edges requires $\tilde{\Omega}(n^{1.5})$ space. Later, this result was strengthened by P\v{a}tra\c{s}cu et al.~\cite{PRT12}. However, these results do not exclude the possibility of compact distance oracles if we allow higher query time. For stretch-2 and stretch-3 in sparse graphs, Agarwal et. al.~\cite{AGH12,AGH11} achieved a space-time tradeoff of $S \times T = O(n^2)$ and $S \times T^2 = O(n^2)$, respectively. Agarwal~\cite{Agarwal14} also showed many other results for stretch-2 and below. We use our \KReachability{} conjecture to prove that for stretch-less-than-(1+2/k) distance oracles $S \times T^{2/(k-1)}$ is bounded by $\tilde{\Omega}(n^2)$. This result is interesting in light of Agarwal~\cite{Agarwal14} where a stretch-(5/3) oracle was presented which achieves a space-time tradeoff of $S \times T = O(n^2)$. This matches our lower bound, where $k=3$, if our lower bound would hold not only for stretch-less-than-(5/3) but also for stretch-(5/3) oracles.
Consequently, we see that there is strong evidence for the correctness of the \KReachability{} conjecture.

Moreover, these observations show that on one hand \KReachability{} is a generalization of \SetDisjointness{} which is closely related to \ThreeSUMIndexing{}. On the other hand, \KReachability{} is related to distance oracles which solve the famous APSP problem using smaller space by sacrificing the accuracy of the distance between the vertices. Therefore, the \KReachability{} conjecture seems as a conjecture corresponding to the APSP hardness conjecture, while also admitting some connection with the celebrated 3SUM hardness conjecture.

\paragraph{SETH and Orthogonal Vectors.}
After considering space variants of the 3SUM and APSP conjectures it is natural to consider space variants for the Strong Exponential Time Hypothesis (SETH) and the closely related conjecture of orthogonal vectors. SETH asserts that for any $\epsilon > 0$ there is an integer $k > 3$ such that k-SAT cannot be solved in $2^{(1-\epsilon)n}$ time. The orthogonal vectors time conjecture states that there is no algorithm that for every $c \geq 1$, finds if there are at least two orthogonal vectors in a set of $n$ Boolean vectors of length $c\log{n}$ in $\tilde{O}(n^{2-\Omega(1)})$ time. We discuss the space variants of these conjectures in Section~\ref{sec:SETH_OV}. However, we are unable to connect these conjectures and the previous ones. This is perhaps not surprising as the connection between SETH and the other conjectures even in the time perspective is very loose (see, for example, discussions in ~\cite{AW14,HKNS15}).

\paragraph{Boolean Matrix Multiplication.}
Another problem which receives a lot of attention in the context of conditional time lower bounds is calculating Boolean Matrix Multiplication (BMM). We give a data structure variant of this well-known problem. We then demonstrate the connection between this problem and the problems of \SetDisjointness{} and \KReachability{}.

\paragraph{Applications.}
Finally, armed with the \emph{space} variants of many well-known conditional \emph{time} lower bounds, we apply this conditional space lower bounds to some static and dynamic problems. This gives interesting space lower bound results on these important problems which sometimes also admits clear space-time tradeoff. We believe that this is just a glimpse of space lower bounds that can be achieved based on our new framework and that many other interesting results are expected to follow this promising route.

\bigskip
Figure~\ref{fig:M1} in Appendix~\ref{sec:sketch} presents a sketch of the results in this paper.

\section{Set Intersection Hardness Conjectures}
We first give formal definitions of the \SetDisjointness{} problem and its enumeration variant:

\begin{problem}[\SetDisjointness{} Problem]\label{prob:set_dis}
Preprocess a family $F$ of $m$ sets, all from universe $U$, with total size $N=\sum_{S\in F} |S|$ so that given two query sets $S,S'\in F$ one can
determine if $S\cap S'=\emptyset$.
\end{problem}

\begin{problem}[\SetIntersection{} Problem]\label{prob:set_int}
Preprocess a family $F$ of $m$ sets, all from universe $U$, with total size $N=\sum_{S\in F} |S|$ so that given two query sets $S,S'\in F$ one can enumerate
the set $S\cap S'$.
\end{problem}

\paragraph{Conjectures.} The \SetDisjointness{} problem was regarded as a problem that admits space hardness. The hardness conjecture of the \SetDisjointness{} problem has received several closely related formulations. One such formulation, given by \Patrascu{} and Roditty~\cite{PR14}, is as follows:

\begin{conjecture}
\textbf{\SetDisjointness{} Conjecture [Formulation 1]}. Any data structure for the \SetDisjointness{} problem where $|U| =\log^cm$ for a large enough constant $c$ and with a constant query time must use $\tilde{\Omega}(m^2)$ space.
\end{conjecture}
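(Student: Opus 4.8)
Because the statement is phrased as a conjecture rather than a theorem, a full unconditional proof is not on the table: it would entail near-quadratic cell-probe \emph{space} lower bounds for a static problem answered in constant time, which is beyond what current techniques deliver. The plan is therefore to set up the encoding framework in which the bound is the natural answer, to pinpoint the single combinatorial statement on which everything hinges, and to record the partial and reduction-based evidence.

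\emph{The encoding framework.} Model any candidate structure as $S$ words of $w=\Theta(\log m)$ bits that answers each query $(i,j)$ by probing $t=O(1)$ cells. One argues that the answers to \emph{all} $\binom{m}{2}$ queries are recoverable from the memory contents alone (the query indices being ``free''), so a standard encoding argument yields $S\cdot w=\Omega(H)$, where $H$ is the number of bits needed to name an instance from within a hard family, i.e. $S=\tilde\Omega(H)$. To reach $S=\tilde\Omega(m^2)$ one thus needs a family of instances, all over a universe of size $\log^c m$ and hence of total size $N=\tilde O(m)$, whose tuple of answers over all pairs already carries $\Omega(m^2)$ bits --- equivalently, one wants to realize $2^{\Omega(m^2)}$ distinct symmetric intersection patterns $M\in\{0,1\}^{m\times m}$ by set systems over this tiny universe.

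\emph{The crux.} This realizability question is the hard part. Over a universe of size $\Theta(m)$ one can realize an \emph{arbitrary} pattern $M$ --- place a private element $e_{ij}$ into $S_i$ and $S_j$ for each $1$-entry --- but then $N=\Theta(m^2)$ and the bound $\tilde\Omega(m^2)=\tilde\Omega(N)$ is trivial. Forcing $|U|=\mathrm{polylog}(m)$ removes exactly this freedom, so one must instead build a structured subfamily (random subsets of a carefully tuned size, or sets drawn from a code or an incidence geometry), prove it still realizes $2^{\Omega(m^2)}$ patterns while keeping pairwise intersections ``generic'', and then argue that a constant-probe structure cannot shortcut the encoding even when the adversary is allowed to fix the hard bits against the cells it probes. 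Quantifying the interplay between universe size, number of distinguishable patterns, and number of probes is precisely the gap that leaves the statement a conjecture.

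\emph{Evidence.} Short of the above, I would record what is provable --- the trivial $\Omega(N)$ bound, constant-probe lower bounds for the large-universe / non-explicit version via the encoding argument, and cell-probe lower bounds known for related set-intersection-type data-structure problems --- but the principal support comes from the reductions developed later in this paper: the Strong \ThreeSUMIndexing{} conjecture implies the Strong \SetDisjointness{} conjecture, the \SetDisjointness{} conjecture implies the \ThreeSUMIndexing{} conjecture, and the $k=2$ case of \KReachability{} is equivalent to \SetDisjointness{}. These show that the conjecture sits coherently inside the web of fine-grained hardness assumptions and is corroborated by, rather than isolated from, the other conjectures.
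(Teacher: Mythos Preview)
The paper does not prove this statement at all: it is recorded as a conjecture taken from the literature (P\v{a}tra\c{s}cu--Roditty) and then used as a hardness assumption. So there is no ``paper's proof'' to compare against, and your opening observation that an unconditional proof is out of reach is exactly the right stance; the evidence paragraph, pointing to the reductions in the paper, is also appropriate and matches how the paper itself motivates the conjecture.

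That said, the specific encoding framework you sketch has a concrete obstruction, not merely a gap. You propose to find a hard family whose disjointness answer pattern carries $\Omega(m^2)$ bits, and you phrase the difficulty as ``realizing $2^{\Omega(m^2)}$ distinct intersection patterns over a universe of size $\log^c m$''. But this is impossible by counting: the number of $m$-tuples of subsets of a universe of size $|U|=\log^c m$ is at most $2^{m|U|}=2^{m\log^c m}$, so the entropy of \emph{any} function of the input --- in particular the full answer table --- is at most $m\log^c m=\tilde O(m)$ bits. An encoding argument on the answer pattern can therefore never exceed $S=\tilde\Omega(m)$, irrespective of query time. The polylogarithmic universe is precisely what makes the input tiny while the naive answer table stays quadratic; the conjecture asserts that constant-probe structures cannot beat storing the table, and any attack must exploit the \emph{probe} restriction (cell-sampling, round elimination, communication complexity) rather than the information content of the input. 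Your ``crux'' paragraph gestures at this but frames it as a hard realizability question; it is in fact an outright impossibility, and the framework needs to be replaced rather than refined.
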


Another formulation is implicitly suggested in Cohen and Porat~\cite{CP10}:
\begin{conjecture}
\textbf{\SetDisjointness{} Conjecture [Formulation 2]}. Any data structure for the \SetDisjointness{} problem with constant query time must use $\tilde{\Omega}(N^2)$ space.
\end{conjecture}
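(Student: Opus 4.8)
This statement is a \emph{conjecture}, so an unconditional proof is not on the table: a true $\tilde\Omega(N^2)$ space bound against data structures with (poly)logarithmic query time would be a quadratic cell-probe lower bound for a static problem, well beyond what current techniques can prove. What I would do instead is (a) reconcile Formulation 2 with Formulation 1, (b) pin down $N^2$ as the right threshold by recording the matching upper bound, and only then (c) point to the evidence that makes the conjecture plausible.

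For (a): in the setting of Formulation 1 the universe has size $|U| = \log^c m$, so the total size satisfies $N = \sum_{S \in F} |S| \le m\,|U| = m\log^c m$, and (assuming the sets are nonempty) $N \ge m$; hence $N = \tilde\Theta(m)$ in that regime and $\tilde\Omega(m^2) = \tilde\Omega(N^2)$. Thus Formulation 1 immediately implies Formulation 2 --- its hard family is itself a hard instance of the unrestricted problem. The converse I do not expect to prove: a hard family for Formulation 2 may live over a large universe (say $m = \sqrt N$ sets, each of size $\sqrt N$), and there is no known way to hash $U$ down to polylogarithmic size, since a collision between an element of $S$ and an element of $S'$ turns a disjoint pair into an intersecting one and so changes the answer. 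This asymmetry is exactly why the paper carries both formulations (and, separately, the \emph{Strong} \SetDisjointness{} conjecture with its smooth $S\cdot T^2 = \tilde\Omega(N^2)$ shape).

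For (b) and (c): the $O(N^2)$-space, $O(1)$-query upper bound is folklore --- store the single bit ``$S_i \cap S_j = \emptyset$?'' for every ordered pair, using $O(m^2) \le O(N^2)$ bits answered by one lookup --- and it extends to the $\tilde O(N^2/T^2)$-space, $\tilde O(T)$-query tradeoff by bucketing elements by frequency (a pairwise table over the $O(N/T)$ high-frequency elements, plus an $O(T)$-time scan of the low-frequency ones), generalizing Cohen and Porat~\cite{CP10}; this shows $N^2$ cannot be beaten for free at $T = O(1)$. The supporting evidence for the lower-bound side is indirect: every known algorithm obeys the tradeoff, and the problem is tightly linked to (Lopsided) Set Disjointness in asymmetric communication complexity --- the engine behind the \Patrascu-style data-structure lower bounds, e.g.\ the conditional $\tilde\Omega(n^{1.5})$-type distance-oracle bounds of~\cite{PR14,PRT12}. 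The main obstacle, which I do not expect to overcome, is precisely closing the polynomial gap between such partial results and the conjectured $N^2$: proving truly quadratic cell-probe lower bounds for static data structures with polylogarithmic query time is a longstanding barrier, and it is the reason this remains a conjecture rather than a theorem.
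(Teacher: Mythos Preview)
You are right that this is a conjecture, and the paper does not attempt to prove it either; it simply states it and offers a short paragraph of discussion. Your treatment is therefore appropriate and essentially in the same spirit as the paper's: both of you point to the $m\times m$ answer table as the matching upper bound and explain that Formulations~1 and~2 coincide on the sparse instances where $N=\tilde\Theta(m)$.

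Your write-up is in fact more careful than the paper's on one point: you make the direction of implication explicit (Formulation~1 $\Rightarrow$ Formulation~2, via the polylog-universe hard family, with no obvious converse), whereas the paper only says the two are ``the same if the sets are very sparse'' without spelling out which way the entailment runs. Your remarks in~(c) about the cell-probe barrier and the Lopsided Set Disjointness connection are additional context the paper does not include here, but they are accurate and relevant. Nothing to fix.
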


There is an important distinction between the two formulations, which is related to the sparsity of \SetDisjointness{} instances.
This distinction follows from the following upper bound: store an $m\times m$ matrix of the answers to all possible queries, and then queries will cost constant time. The first formulation of the \SetDisjointness{} conjecture states that if we want constant (or poly-logaritmic) query time, then this is the best we can do. At a first glance this makes the second formulation, whose bounds are in terms of $N$ and not $m$, look rather weak. In particular, why would we ever be interested in a data structure that uses $O(N^2)$ space when we can use one with $O(m^2)$ space? The answer is that the two conjectures are the same if the sets are very sparse, and so at least in terms of $N$, if one were to require a constant query time then by the second formulation the space must be at least $\Omega(N^2)$ (which happens in the very sparse case).

Nevertheless, we present a more general conjecture, which in particular captures a tradeoff curve between the space usage and query time.
This formulation captures the difficulty that is commonly believed to arise from the \SetDisjointness{} problem, and matches the upper bounds of Cohen and Porat~\cite{CP10} (see also~\cite{KPP15}).

\begin{conjecture}\label{conj:SSDH}
\textbf{Strong \SetDisjointness{} Conjecture}. Any data structure for the \SetDisjointness{} problem that answers queries in $T$ time must use $S=\tilde{\Omega}(\frac{N^2}{T^2})$ space.
\end{conjecture}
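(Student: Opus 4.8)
This statement is posed as a conjecture --- it is a common strengthening of Formulations~1 and~2, both of which it recovers by setting $T=\tilde{O}(1)$ (Formulation~2 in the sparse regime $N=\tilde{\Theta}(m)$) --- so ``proving'' it would mean either an unconditional cell-probe lower bound or a reduction from a problem independently believed hard in space. I would attempt the first. Work in the cell-probe model with word width $w=\Theta(\log N)$: a data structure is a table of $S/w$ cells and each query makes at most $T$ adaptive probes. Encode an instance as a Boolean matrix $A\in\{0,1\}^{m\times|U|}$ with $A[i,x]=1 \Leftrightarrow x\in S_i$; a query $(i,j)$ returns one bit of the Boolean product $A\bar{A}^{\top}$ (non-disjointness of rows $i,j$). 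The plan is an encoding/richness argument: fix a hard input distribution in the sparse regime (so the input itself carries only $\tilde{\Theta}(N)$ bits, which is why the bound must be stated in terms of $N$ and not $m$) for which the answer matrix $A\bar{A}^{\top}$ is ``rich'' --- essentially free of large monochromatic rectangles --- so that specifying the answers to a structured family of $\Theta(N^2)$ queries genuinely requires $\Theta(N^2)$ bits relative to that distribution; then show that an $(S,T)$ data structure lets an encoder predict all those answers from $o(N^2)$ bits of advice whenever $ST^2=o(N^2)$, a contradiction.

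Concretely: (i) choose $m$, $|U|$ and the set sizes to maximize the richness of $A\bar{A}^{\top}$ subject to $\sum_i|S_i|=N$ --- this is exactly the balance that makes Formulation~2 tight. (ii) \emph{Publishing step}: if $ST^2=o(N^2)$, a counting argument on query--probe incidences shows that a set $C$ of $o(N^2)$ cells is probed by a $1-o(1)$ fraction of the queries; publish $C$ and argue that the residual query algorithm running on the unpublished cells is so constrained that the corresponding answers become predictable, so $C$ is the promised advice. (iii) If necessary, iterate this (a cell-sampling / potential argument) to push the advice strictly below the richness bound, contradicting (i). The appearance of $T^2$ rather than $T$ is the point where the \emph{pair} structure of the query is essential: one needs roughly $T$ units of preprocessed structure anchored at each of the two query coordinates, and the cross-term contributes the square. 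Formalizing this would parallel a lopsided set-disjointness richness bound applied along both coordinates of the query, or a direct-product argument over the two sides of the query.

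\textbf{Main obstacle.} The crux is step~(ii)--(iii): proving that a small set of cells cannot ``cover'' the answers to many disjointness queries, i.e., ruling out a data structure that cleverly amortizes work across the bilinear pair structure. This is the genuinely hard part --- even the $T=\tilde{O}(1)$ special case (Formulations~1 and~2) has remained open for years, and the strongest unconditional cell-probe space bounds for Boolean-product-indexing problems of this kind are only near-linear in $N$. I therefore do not expect the argument above to close with current techniques; its role is to isolate what any proof must supply (a high-richness answer matrix in the sparse regime, plus a covering / direct-product lemma matching the $T^2$ loss), which is precisely why --- in line with the literature --- this is recorded as a conjecture and buttressed instead by the matching Cohen--Porat upper bound and, later in the paper, by reductions relating it to the \ThreeSUMIndexing{} conjectures.
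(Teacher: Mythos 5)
You are right that the paper offers no proof of this statement: it is introduced purely as a hardness assumption, supported only by the matching Cohen--Porat-style upper bound and by the conditional reductions relating it to the \ThreeSUMIndexing{} conjectures (Theorems~\ref{thm:3SI_to_SSD} and~\ref{thm:WSD_to_W3SI}), exactly as you describe, so treating it as a conjecture and outlining only what an unconditional cell-probe proof would require is the same stance the paper takes. One small slip in your framing: it is Formulation~1 (the $\tilde{\Omega}(m^2)$ bound) that is recovered only in the sparse regime $N=\tilde{\Theta}(m)$, whereas Formulation~2 follows directly by setting $T=\tilde{O}(1)$, not the other way around.
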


For example, a natural question to ask is ``what is the smallest query time possible with linear space?". This question is addressed, at least from a lower bound perspective, by the Strong \SetDisjointness{} conjecture. 

\begin{conjecture}\label{conj:SSIH}
\textbf{Strong \SetIntersection{} Conjecture}. Any data structure for the \SetIntersection{} problem that answers queries in $O(T + op)$ time, where $op$ is the size of the output of the query, must use $S=\tilde{\Omega}(\frac{N^2}{T})$ space.
\end{conjecture}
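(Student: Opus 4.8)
The plan is to prove the Strong \SetIntersection{} Conjecture conditionally, deriving it from the Strong \SetDisjointness{} Conjecture (Conjecture~\ref{conj:SSDH}): the two statements concern the same family of instances and differ only in whether one must merely certify emptiness or must actually list the intersection, so it is natural to try to reduce one to the other. Concretely I would argue the contrapositive: starting from a data structure $\mathcal{D}$ for \SetIntersection{} with query time $O(T+op)$ and space $S=\tilde{o}(\frac{N^2}{T})$, build a data structure for \SetDisjointness{} that beats the $\tilde{\Omega}(\frac{N^2}{(T')^2})$ barrier for some parameter $T'$, contradicting Conjecture~\ref{conj:SSDH}.

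The routine first step is that $\mathcal{D}$ already solves \SetDisjointness{}: to decide whether $S_i\cap S_j=\emptyset$, run the enumeration query and halt as soon as the first element is emitted, or as soon as $\mathcal{D}$ reports that the intersection is empty; this costs $O(T+1)=O(T)$ time and the same space $S$. The catch is that this only yields ``\SetDisjointness{} with query time $T$ needs space $\tilde{o}(\frac{N^2}{T})$'', which is entirely consistent with Conjecture~\ref{conj:SSDH} (whose bound is only $\tilde{\Omega}(\frac{N^2}{T^2})$), so no contradiction results. The real work is an \emph{amplification}: converting the ``free'' output-sensitive term of \SetIntersection{} into a genuine saving, so that the induced \SetDisjointness{} structure answers queries in time $o(\sqrt{T})$ while keeping space $\tilde{o}(\frac{N^2}{T})=\tilde{o}(\frac{N^2}{(\sqrt{T})^2})$, which would contradict the conjecture at $T'=\sqrt{T}$.

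The amplification I would attempt is a packing construction. Partition the $m$ sets of a \SetDisjointness{} instance into $m/\sqrt{T}$ blocks of $\sqrt{T}$ sets, and for each block form a single ``labelled union'' set that tags every element by the index, within the block, of the set it came from; doing this for all cyclic alignments of the blocks ensures that every original pair of sets becomes position-aligned inside exactly one labelled pair. A single \SetIntersection{} query on such a labelled pair then resolves $\sqrt{T}$ original disjointness tests at once, and on the instances believed to be hard for Conjecture~\ref{conj:SSDH} (where every intersection has size $O(1)$) the output has size only $O(\sqrt{T})$, so the query still runs in $O(T)$ time --- an amortised $O(\sqrt{T})$ per disjointness test --- after which one would try to de-amortise by balancing the block size against $T$ or by recursing.

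I expect the de-amortisation --- or rather the bookkeeping cost of the packing --- to be the main obstacle, and it is presumably why the statement is posed as a conjecture with only a matching upper bound of Cohen and Porat rather than as a theorem: forming the labelled sets for all alignments inflates the total size of the instance by a $\sqrt{T}$ factor, which inflates the space of the derived \SetDisjointness{} structure back up by the same factor and exactly cancels the saving. Every natural way of cashing in the output-sensitive term seems to cost either query time or space but never to trade one for the other cleanly; making the packing lossless --- or, alternatively, proving the bound unconditionally in a restricted cell-probe model, or deriving it from the \ThreeSUMIndexing{} conjectures --- is the crux on which essentially all of the effort would be spent.
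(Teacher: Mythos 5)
The statement you set out to prove is a conjecture: the paper never proves it and could not be expected to, since it is one of the paper's hardness hypotheses. The only supporting evidence the paper gives is conditional, namely Theorem~\ref{thm:3SI_to_SSI}, which shows that the Strong \ThreeSUMIndexing{} Conjecture implies the Strong \SetIntersection{} Conjecture (via hashing the \ThreeSUMIndexing{} instance into buckets with $Q=n^{1+\delta}/R$, querying $O(R)$ shifted-set intersections, and bounding the expected total output size by $O(n^{1-\delta})$ so that reporting plus direct verification stays truly sublinear; balancing parameters then forces $S\cdot T=\tilde\Omega(N^2)$). So the fair comparison is between your proposed conditional derivation and that one.

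Your chosen hypothesis — the Strong \SetDisjointness{} Conjecture — does not suffice, and you essentially concede this. The early-termination step only gives a \SetDisjointness{} structure with query time $O(T)$ and space $\tilde o(N^2/T)$, which is fully consistent with Conjecture~\ref{conj:SSDH}; and the packing amplification fails exactly where you say it does: forming labelled unions over all $\sqrt T$ alignments inflates $N$ by a $\sqrt T$ factor, so the derived structure's space, measured against the new instance size, no longer violates anything. The paper never claims an implication between the two set-problem conjectures in either direction, and the differing exponents ($N^2/T^2$ versus $N^2/T$) track genuinely different upper-bound tradeoffs (disjointness admits a tabulation of large-pair answers that intersection reporting does not), so there is reason to think this starting point cannot work at all, not merely that the bookkeeping is delicate. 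The viable route is the one you mention only in passing at the end — conditioning on the Strong \ThreeSUMIndexing{} Conjecture — which is precisely the argument of Theorem~\ref{thm:3SI_to_SSI}; if you want a conditional proof of this conjecture, that is the reduction to develop, and the key idea you would need (and which your proposal lacks) is that false positives are handled not by repetition but by enumerating the intersection and charging the verification to the output term, whose expected total size is kept sublinear by the choice of $Q$.
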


\section{\ThreeSUMIndexing{} Hardness Conjectures}

In the classic 3SUM problem we are given an integer array $A$ of size $n$ and we wish to decide whether there are 3 distinct integers in $A$ which sum up to zero. Gajentaan and Overmars~\cite{GajentaanO95} showed that an equivalent formulation of this problem receives 3 integer arrays $A_1$, $A_2$, and $A_3$, each of size $n$, and the goal is to decide if there is a triplet $x_1\in A_1, x_2\in A_2$, and $x_3\in A_3$ that sum up to zero.

We consider the data structure variant of this problem which is formally defined as follows:
\begin{problem}[\ThreeSUMIndexing{} Problem]\label{prob:3sum_ind}
Preprocess two integer arrays $A_1$ and $A_2$, each of length $n$, so that given a query integer $z$ we can decide whether there are $x\in A_1$ and $y\in A_2$ such that $z=x+y$.
\end{problem}

It is straightforward to maintain all possible $O(n^2)$ sums of pairs in quadratic space, and then answer a query in $\tilde{O}(1)$ time. On the other extreme, if one does not wish to utilize more than linear space then one can sort the arrays separately during preprocssing time, and then a query can be answered in $\tilde{O}(n)$ time by scanning both of the sorted arrays in parallel and  in opposite directions.

We introduce two conjectures with regards to the \ThreeSUMIndexing{} problem, which serve as natural candidates for proving polynomial space lower bounds.

\begin{conjecture}
\textbf{\ThreeSUMIndexing{} Conjecture}: There is no solution for the \ThreeSUMIndexing{} problem with truly subquadratic space and $\tilde{O}(1)$ query time.
\end{conjecture}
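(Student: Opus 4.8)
Since the statement above is a \emph{conjecture}, I would not try to prove it outright but conditionally, on an assumption already regarded as safe; the natural candidate is the \SetDisjointness{} conjecture (Formulation~2), and the plan is to prove the implication ``\SetDisjointness{} conjecture $\Rightarrow$ \ThreeSUMIndexing{} conjecture'' by a reduction. Taking the contrapositive, I would assume a data structure solving \ThreeSUMIndexing{} on arrays of length $n$ with $O(n^{2-\varepsilon})$ space and $\tilde{O}(1)$ query time for some constant $\varepsilon>0$, and from it build a data structure for \SetDisjointness{} with $\tilde{O}(1)$ query time and $O(N^{2-\varepsilon})$ space --- which, since $N^{\varepsilon}$ beats every polylog factor, contradicts the $\tilde{\Omega}(N^2)$ lower bound asserted by the \SetDisjointness{} conjecture.

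The reduction is an arithmetic ``digit packing'' of the set family. Given $S_1,\dots,S_m\subseteq U$ with $\sum_i|S_i|=N$, first discard elements of $U$ occurring in no set and relabel, so $|U|\le N$, and fix a base $R>2(m+|U|)$. For every $i$ and every $e\in S_i$ insert $x_{i,e}=iR+e$ into $A_1$, and for every $j$ and every $e'\in S_j$ insert $y_{j,e'}=jR^2+(R-1-e')$ into $A_2$; thus $|A_1|=|A_2|=N$, the \ThreeSUMIndexing{} instance has $n=N$, and all integers are $\mathrm{poly}(N)$. Reading numbers base $R$, the index of an $A_1$-element sits in the middle digit, the index of an $A_2$-element in the top digit, and the element (complemented, in $A_2$) in the bottom digit, so that
\[
x_{i,e}+y_{j,e'}=bR^2+aR+(R-1)\quad\Longleftrightarrow\quad i=a,\ j=b,\ e=e' .
\]
Hence a \SetDisjointness{} query on $(a,b)$ is answered by a single \ThreeSUMIndexing{} query with $z=bR^2+aR+(R-1)$: the sets are disjoint exactly when the answer is ``no''. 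This adds only $O(1)$ arithmetic per query (the set indices are never relabeled) and $O(N)$ space for the element relabeling, so the composite structure has the parameters claimed above.

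The step that actually needs care is the displayed equivalence --- ruling out spurious solutions created by carries between digits or by an ``unintended'' pair of sets. This is exactly what the choices $R>2(m+|U|)$ and the complementation $e'\mapsto R-1-e'$ buy: when $e\le e'$ the bottom digit equals $(R-1)+(e-e')<R$, so there is no carry and the triple sum hits $z$ iff $e=e'$; when $e>e'$ the bottom digit overflows to $R+((e-e')-1)$, carrying $1$ into the middle digit and turning $i$ into $i+1\ne a$, so $z$ becomes unreachable; and the top digit, being just $j$, never carries. Everything else is bookkeeping: $n=N$ (so subquadratic in $n$ is subquadratic in $N$), the integers fit in $O(\log N)$-bit words, and each \SetDisjointness{} query maps to exactly one \ThreeSUMIndexing{} query so the query time is preserved up to a constant. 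The same argument run against \SetDisjointness{} Formulation~1 goes through verbatim, since there $N=\Theta(m\cdot\mathrm{polylog}\,m)$ and subquadratic in $N$ is still subquadratic in $m$. (The reverse flavor --- deriving the full space/time \emph{tradeoff} of the Strong \ThreeSUMIndexing{} conjecture --- is genuinely different and, as the introduction already flags, is the harder direction; it is not what this statement asks for.)
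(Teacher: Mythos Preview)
Your proposal is correct and follows essentially the same approach as the paper's own proof (Theorem~\ref{thm:WSD_to_W3SI}): reduce \SetDisjointness{} to \ThreeSUMIndexing{} by packing the set index and the element value into disjoint digit blocks of a single integer, complementing the element value in one of the two arrays so that a match forces equality. The only cosmetic difference is that the paper works in binary and inserts zero ``separator'' bits between blocks to absorb carries, whereas you work in a large base $R$ and achieve the same effect by taking $R>2(m+|U|)$; your carry analysis is in fact more explicit than the paper's ``straightforward to verify''.
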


\begin{conjecture}
\textbf{Strong \ThreeSUMIndexing{} Conjecture}: There is no solution for the \ThreeSUMIndexing{} problem with truly subquadratic space and truly sublinear query time.
\end{conjecture}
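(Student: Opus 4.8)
Since the statement is a conjecture, ``proving'' it means establishing an unconditional space/time lower bound: for every constant $\gamma>0$, any cell-probe (hence word-RAM) data structure for \ThreeSUMIndexing{} on arrays of length $n$, with cells of $\Theta(\log n)$ bits, that answers queries in $T=O(n^{1-\gamma})$ time must use $S=n^{2-o(1)}$ space; equivalently, no single data structure simultaneously achieves space $O(n^{2-\delta})$ and query time $O(n^{1-\gamma})$ for constants $\delta,\gamma>0$. My plan would be to work in the cell-probe model and split the task into (i) extracting a source of hardness from \ThreeSUMIndexing{} and (ii) amplifying a weak (polylogarithmic) lower bound into a polynomial one.

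A natural first stab at (i) is a direct incompressibility argument: fix $A_1$ to be a Sidon set of size $n$, so that all $n^2$ pairwise sums are distinct, and let $A_2$ range over $n$-subsets of a large universe $U$; from the answers to all membership queries ``$z\in A_1+A_2$'' one recovers the sumset $A_1+A_2$ and, for generic $A_2$, the set $A_2$ itself, so the structure must encode $\Omega(n\log|U|)$ bits. This only yields a \emph{linear} space bound --- the information content of one instance is linear --- which is exactly why (ii) is unavoidable. For a polynomial bound I would pass to the asymmetric-communication framework of Miltersen--Nisan--Safra--Wigderson and P\v{a}tra\c{s}cu--Thorup: hand Alice the description of $A_1$ and Bob a query $z$ drawn from a structured family, so that a data structure with space $S$, cell width $w=\Theta(\log n)$, and query time $T$ induces a protocol in which Bob sends $O(Tw)$ bits and Alice sends $O(Tw)$ bits, and then prove --- via a richness or round-elimination argument --- a communication lower bound on the induced predicate that excludes $S=n^{2-\delta}$ together with $T=n^{o(1)}$. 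The amplification itself I would attempt via a direct-sum / product construction that embeds $n^{\Omega(1)}$ essentially independent sub-instances, in the spirit of P\v{a}tra\c{s}cu's multiphase and reachability-oracle lower bounds and of the reduction between \SetDisjointness{} and \ThreeSUMIndexing{} established in this paper.

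The hard part --- and the reason this must stay a conjecture --- is precisely the amplification step: no known technique proves truly polynomial space lower bounds for any natural \emph{static} data structure problem at near-constant query time. The strongest unconditional static cell-probe bounds, of the form $T=\Omega\!\bigl(\log m/\log(Sw/m)\bigr)$ (P\v{a}tra\c{s}cu--Thorup and successors), are vacuous in the $S=n^{2-\delta}$, $T=n^{\epsilon}$ regime, and any reduction-based argument would bottom out either at the still-conjectural Strong \SetDisjointness{} conjecture --- giving only the conditional implication already established in this paper --- or at SETH-type assumptions, with which we were unable to connect \ThreeSUMIndexing{}. I therefore do not expect an unconditional proof absent a real breakthrough in cell-probe complexity, and one should keep the opposite outcome genuinely open: that the two trivial extremes are \emph{not} optimal and some sub-quadratic-space, sub-linear-time data structure exists, refuting the conjecture. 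Until either happens, the support for the conjecture is circumstantial --- the failure of two decades of 3SUM algorithmics to interpolate between the $O(n^2)$-space/$\tilde{O}(1)$-time and $O(n)$-space/$\tilde{O}(n)$-time extremes, plus the reductions in this paper tying \ThreeSUMIndexing{} to the long-studied \SetDisjointness{} conjectures.
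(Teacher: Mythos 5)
This statement is a hardness \emph{conjecture}, not a theorem: the paper offers no proof, only the two trivial extremes ($O(n^2)$ space with $\tilde{O}(1)$ query versus linear space with $\tilde{O}(n)$ query) and the conditional reductions to the \SetDisjointness{} conjectures as supporting evidence. Your assessment is correct and matches the paper's treatment --- you rightly decline to ``prove'' it, explain why current cell-probe and asymmetric-communication techniques cannot yield polynomial static space lower bounds in this regime, and identify the same circumstantial evidence (the stagnation of 3SUM algorithmics and the reductions tying \ThreeSUMIndexing{} to \SetDisjointness{}) that the paper itself relies on.
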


Notice that one can solve the classic \ThreeSUM{} problem using a data structure for \ThreeSUMIndexing{} by preprocessing $A_1$ and $A_2$, and answering $n$ \ThreeSUMIndexing{} queries on all of the values in $A_3$.

Next, we prove theorems that show tight connections between the \ThreeSUMIndexing{} conjectures and the \SetDisjointness{} conjectures.
We note that the proofs of the first two theorems are similar to the proofs of~\cite{KPP16}, but with space interpretation.

\begin{theorem}\label{thm:3SI_to_SSD}
The Strong \ThreeSUMIndexing{} Conjecture implies the Strong \SetDisjointness{} Conjecture.
\end{theorem}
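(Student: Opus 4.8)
The plan is to show the contrapositive: assuming the Strong \SetDisjointness{} Conjecture fails, we build a data structure for \ThreeSUMIndexing{} that simultaneously achieves truly subquadratic space and truly sublinear query time, thereby refuting the Strong \ThreeSUMIndexing{} Conjecture. So suppose there is a data structure for \SetDisjointness{} that answers queries in time $T$ using only $S = o(N^2/T^2)$ space (more precisely, $S = \tilde{O}(N^{2-\delta}/T^2)$ for some $\delta > 0$, or some similar polynomial savings). We want to convert a \ThreeSUMIndexing{} instance $(A_1, A_2)$ with $|A_1| = |A_2| = n$ into a family of sets on which \SetDisjointness{} queries answer the \ThreeSUMIndexing{} query.

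The key step is the reduction itself. First I would partition each array into roughly $n/g$ buckets of size $g$ (for a parameter $g$ to be optimized), using a hash/partition that is "linear" or at least sum-respecting in the sense used by \Patrascu{}-style \ThreeSUM{} reductions: we want that whether $x+y = z$ for some $x \in A_1, y \in A_2$ can be tested by looking at only a small number of bucket-pairs $(A_1^{(a)}, A_2^{(b)})$ whose "bucket sums" are compatible with $z$. The standard trick (almost-linear hashing, as in Baran–Demaine–\Patrascu{}) is that each value $z$ is consistent with only $O(1)$ (amortized, or polylog) pairs of buckets. For each such bucket pair we need to decide whether there exist $x$ in one bucket and $y$ in the other with $x+y=z$. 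I would encode this as a set-disjointness instance: create, for each bucket $A_1^{(a)}$ and each candidate offset, a set recording the elements of that bucket (shifted appropriately), and for each bucket $A_2^{(b)}$ a set recording $z - y$ for $y$ in that bucket; then $S_i \cap S_j \neq \emptyset$ iff some $x \in A_1^{(a)}$ equals some $z - y$. The subtlety is that $z$ is only given at query time, so I cannot hard-code $z$ into the preprocessed sets. The way around this — and this is exactly where the proof needs care — is the same device used in \cite{KPP16}: build sets indexed by buckets and by residues/high-order bits of potential query values, so that at query time $z$ selects which $O(\mathrm{polylog})$ pairs of already-stored sets to query. Concretely, the universe of the set system records pairs (low-order-bits pattern), the sets are the buckets of $A_1$ and the (negated) buckets of $A_2$, and the target $z$ determines both which buckets are compatible and what "shift" to apply, which I arrange to have been precomputed as part of the $O(n/g)$ shifted copies.

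Once the reduction is in place, I would track parameters. The set-disjointness instance has $m = \tilde{O}(n/g)$ sets, total size $N = \tilde{O}(n)$ elements (each element of $A_1, A_2$ appears in $\tilde{O}(1)$ sets). Answering a \ThreeSUMIndexing{} query costs $\tilde{O}(1)$ set-disjointness queries, hence time $\tilde{O}(T)$, and the space is $S = \tilde{O}(N^{2-\delta}/T^2) = \tilde{O}(n^{2-\delta}/T^2)$. Choosing the partition granularity and balancing $T$ against $n$ — e.g. setting $T = n^{\alpha}$ for suitable $\alpha$ — yields space $\tilde{O}(n^{2 - \delta'})$ and query time $\tilde{O}(n^{1-\Omega(1)})$ simultaneously, contradicting the Strong \ThreeSUMIndexing{} Conjecture. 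The main obstacle, and the step I would spend the most effort on, is the query-time handling of $z$: making precise the almost-linear partition so that an arbitrary query integer touches only polylogarithmically many preprocessed set pairs, while keeping the total number of preprocessed sets and their total size near-linear in $n$; this is the technical heart inherited from the \ThreeSUM{} hashing literature, and verifying it carries over cleanly to the indexing/space setting (rather than the original time setting) is exactly the "space interpretation" the authors allude to.
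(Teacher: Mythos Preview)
Your high-level plan is right---reduce \ThreeSUMIndexing{} to \SetDisjointness{} via almost-linear hashing into buckets, precompute shifted copies so that a query $z$ selects among already-stored set pairs---and you correctly identify the handling of the unknown $z$ as the crux. But your parameter accounting is off in a way that breaks the argument.

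You claim the \SetDisjointness{} instance has total size $N=\tilde O(n)$ (``each element of $A_1,A_2$ appears in $\tilde O(1)$ sets'') and that a single \ThreeSUMIndexing{} query costs $\tilde O(1)$ \SetDisjointness{} queries. Neither holds with the shift trick you invoke. To make the precomputed sets independent of $z$, one hashes each bucket into a secondary universe of size $Q$ and stores, for each of the $R=n^\gamma$ buckets, $\sqrt Q$ shifted copies (the baby-step/giant-step device). Taking $Q\approx (n/R)^2$ to control false positives, every element now sits in $\sqrt Q \approx n^{1-\gamma}$ sets, so the instance has total size $N=\tilde O(n^{2-\gamma})$, not $\tilde O(n)$. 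And for a fixed $z$, almost-linearity of $h_1$ pins down one partner bucket per choice of $i$, i.e.\ $R=n^\gamma$ set-pair queries, not $\tilde O(1)$. These two quantities trade off against each other: you can push $N$ down to $\tilde O(n)$ only by taking $\gamma\to 1$, at which point each \ThreeSUMIndexing{} query costs $\tilde O(n)$ \SetDisjointness{} queries and you get no sublinear time.

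This trade-off parameter $\gamma$ is not a nuisance but the engine of the proof. The reduction yields: space $s_{\TSI}=\tilde O(S(N))$ and time $t_{\TSI}=\tilde O(n^\gamma T(N))$ with $N=\tilde O(n^{2-\gamma})$. Under the Strong \ThreeSUMIndexing{} conjecture this forces either $S(N)=\tilde\Omega(N^{2/(2-\gamma)})$ or $T(N)=\tilde\Omega(N^{(1-\gamma)/(2-\gamma)})$; varying $\gamma$ over $(0,1)$ sweeps out the whole curve $S\cdot T^2=\tilde\Omega(N^2)$. Your final step (``set $T=n^\alpha$'') treats $T$ as a free knob, but a hypothetical refutation of Strong \SetDisjointness{} hands you one specific point $(S_0,T_0)$; it is the freedom in $\gamma$, not in $T$, that lets you translate that point into a refutation of Strong \ThreeSUMIndexing{}. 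Without the blow-up and the $\gamma$-parametrized query count, the contrapositive does not close.
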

\begin{proof}
A family $\mathcal{H}$ of hash functions from $[u] \rightarrow [m]$
is called {\em linear} if for any $h\in\mathcal{H}$ and any $x,x' \in [u]$,
$h(x) + h(x')= h(x+x') +c_h\; (\modulo m)$, where $c_h$ is some integer that depends only on $h$.
$\mathcal{H}$ is called {\em almost linear} if for any $h\in\mathcal{H}$ and any $x,x' \in [u]$, either
$h(x) + h(x')= h(x+x') +c_h\; (\modulo m)$, or $h(x) + h(x') = h(x+x')+c_h+1 \; (\modulo m)$.

Given a hash function $h\in \mathcal{H}$ we say that a value $i\in m$ is heavy for set $S=\{x_1,\ldots, x_n\} \subset [u]$ if  $|\{x\in S : h(x) = i\}| > \frac{3n}m$ .
$\mathcal{H}$ is called {\em almost balanced} if for any set $S=\{x_1,\ldots, x_n\} \subset [u]$, the expected number of elements from $S$ that are hashed to heavy values is $O(m)$.
Kopelowitz et al. showed in~\cite{KPP16} that a family of hash functions obtained from the construction of Dietzfelbinger~\cite{Dietzfelbinger96} is almost-linear, almost-balanced, and pair-wise independent. In order to reduce clutter in the proof here we assume the existence of linear, almost-balanced, and pair-wise independent families of hash functions. Using the family of hash functions of Dietzfelbinger~\cite{Dietzfelbinger96} will only affect multiplicative constants.

We reduce an instance of the \ThreeSUMIndexing{} problem to an instance of the \SetDisjointness{} problem as follows.
Let $R=n^{\gamma}$ for some constant $0<\gamma <1$. Let $Q = (5n/R)^2$.
Without loss of generality we assume that $\sqrt Q$ is an integer.
We pick a random hash function $h_1:U\rightarrow [R]$ from a family that is linear and almost-balanced. Using $h_1$ we create $R$ buckets $\mathcal{B}_1,\hdots, \mathcal{B}_{R}$ such that $\mathcal{B}_i=\{x\in A_1:h_1(x)=i\}$, and another $R$ buckets $\mathcal{C}_1,\hdots, \mathcal{C}_{R}$ such that $\mathcal{C}_i=\{x\in A_2:h_1(x)=i\}$.
Since $h_1$ is almost-balanced, the expected number of elements from $A_1$ and $A_2$ that are mapped to buckets of size greater than $3n/R$ is $O(R)$. We use $O(R)$ space to maintain this list explicitly, together with a lookup table for the elements in $A_1$ and $A_2$.

Next, we pick a random hash function $h_2:U\rightarrow [Q]$ where $h_2$ is chosen from a pair-wise independent and linear family. For each bucket we create $\sqrt{Q}$ shifted sets as follows:
for each $0\leq j< \sqrt Q$ let $\mathcal{B}_{i,j}= \{h_2(x)- j\cdot \sqrt{Q} \, (\modulo Q) \,|\, x\in \mathcal{B}_i\}$ and $\mathcal{C}_{i,j}= \{-h_2(x)+ j \, (\modulo Q) \,|\, x\in \mathcal{C}_i\}$.
These sets are all preprocessed into a data structure for the \SetDisjointness{} problem.

Next, we answer a \ThreeSUMIndexing{} query $z$ by utilizing the linearity of $h_1$ and $h_2$,
which implies that if there exist $x\in A_1$ and $y\in A_2$ such that $x+y=z$ then $h_1(x)+h_1(y) = h_1(z) + c_{h_1} \, (\modulo R)$ and $h_2(x)+h_2(y) = h_2(z) + c_{h_2}\, (\modulo Q)$.

Thus, if $x\in \mathcal{B}_i$ then $y$ must be in $\mathcal{C}_{h_1(z)+c_{h_1}-i (\modulo R)}$. For each $i\in [R]$ we would like to intersect $\mathcal{B}_i$ with $\mathcal{C}_{h_1(z)+c_{h_1}-i (\modulo R)}$ in order to find candidate pairs of $x$ and $y$. Denote by $h_2^{\uparrow}(z) = \lfloor \frac{h_2(z)+c_{h_1}}{\sqrt{Q}}\rfloor$ and $h_2^{\downarrow}(z) =h_2(z)+c_{h_2}(\modulo \sqrt{Q})$.
Due to the almost-linearity of $h_2$, if the sets $\mathcal{B}_i$ and $\mathcal{C}_{h_1(z) +c_{h_1}-i(\modulo R)}+z$ are not disjoint then the sets $\mathcal{B}_{i,h_2^{\uparrow}(z)}$ and $\mathcal{C}_{h_1(z)+c_{h_1}-i(\modulo R),h_2^{\downarrow}(z)}$ are not disjoint (but the reverse is not necessarily true). Thus, if $\mathcal{B}_{i,h_2^{\uparrow}(z)} \cap \mathcal{C}_{h_1(z)+c_{h_1}-i(\modulo R),h_2^{\downarrow}(z)} = \emptyset$
then there is no candidate pair in $\mathcal{B}_i$ and $\mathcal{C}_{h_1(z) +c_{h_1}-i(\modulo R)}+z$. However, if $\mathcal{B}_{i,h_2^{\uparrow}(z)} \cap \mathcal{C}_{h_1(z)+c_{h_1}-i(\modulo R),h_2^{\downarrow}(z)} \neq \emptyset$ then it is possible that this is due to a \ThreeSUMIndexing{} solution, but we may have false positives. Notice that the number of set pairs whose intersection we need to examine is $O(R)$ since $z$ is given. Once we pick $i$ ($R$ choices) the rest is implicit.

Set $z$ and let $k = h_2(z)$. Since $h_2$ is pair-wise independent and linear then for any pair $x,y\in U$ where $x\neq y$ we have that if $x+y\neq z$ then $\Pr[h_2(x) + h_2(y) = k+c_{h_2}(\modulo R)] = \Pr[h_2(x+y)= h_2(z)+c_{h_2}(\modulo R)] =\frac{1}{Q}$.
Since each bucket contains at most $3n/R$ elements, the probability of a false positive due to two buckets $\mathcal{B}_i$ and $\mathcal{C}_j$ is 
not greater than $(\frac{3n}{R})^2 \frac 1 {Q} = \frac 9 {25}$.
In order to reduce the probability of a false positive to be polynomially small, we repeat the process with $O(\log n)$ different choices of $h_2$ functions (but using the same $h_1$).  This blows up the number of sets by a factor of $O(\log n)$, but not the universe.
If the sets intersect under all $O(\log n)$ choices of $h_2$ then we can spend
$O(n/R)$ time to find $x$ and $y$ within buckets $\mathcal{B}_i$ and $\mathcal{C}_j$,
which are either a \ThreeSUMIndexing{} solution (and the algorithm halts),
or a false positive, which only occurs with probability $1/\mbox{poly}(n)$.

To summarize, we create a total of $O(R\sqrt{Q}\log n)$ sets, each of size at most $3n/R$. Thus, the total size of the \SetDisjointness{} instance is $N=\tilde{O}(n^2/R)$.
For a query, we perform $\tilde{O}(R)$ queries on the \SetDisjointness{} structure, and spend another $O(R \cdot \frac{n}{R} \cdot \frac{1}{poly(n)})=O(1)$ expected time to verify that we did not hit a false positive. Furthermore, we spend $O(R)$ time to check possible solutions containing one of the expected $O(R)$ elements from buckets with too many elements by using the lookup tables. If we denote by $T(N)$ and $S(N)$ the query time and space usage, respectively, of the \SetDisjointness{} data structure on $N$ elements (in our case $N= \tilde{O}(n^{2-\gamma})$), then the query time of the reduction becomes $t_{\TSI{}} = \tilde{O}(R\cdot T(n^2/R))$ time and the space usage is $s_{\TSI{}}=\tilde{O}(S(n^2/R) + O(n))$. Since we may assume that $S(N) = \Omega(N)$, we have that $s_{\TSI{}}=\tilde{O}(S(N))$.

By the Strong \ThreeSUMIndexing{} Conjecture, either $s_{\TSI{}} =\tilde \Omega (n^2)$ or $t_{\TSI{}} = \tilde \Omega (n)$, which means that either $S(N) = \tilde \Omega(N^{\frac 2 {2-\gamma}})$ or $T(N) = \tilde \Omega(N^{\frac {1-\gamma} {2-\gamma}})$. For any constant $\epsilon>0$, if the \SetDisjointness{} data structure uses $ \tilde \Theta (N^{\frac 2 {2-\gamma} -\epsilon})$ space, then $S(N)\cdot (T(N)) ^2 = \tilde \Omega (N^{\frac 2 {2-\gamma} -\epsilon + \frac{2-2\gamma}{2-\gamma}}) = \tilde \Omega (N^{2-\epsilon})$. Since this holds for any $\epsilon > 0$ it must be that $S(N)\cdot (T(N)) ^2 = \tilde \Omega (N^{2})$.
\qed
\end{proof}

\begin{theorem}\label{thm:3SI_to_SSI}
The Strong \ThreeSUMIndexing{} Conjecture implies the Strong \SetIntersection{} Conjecture.
\end{theorem}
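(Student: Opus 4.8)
The plan is to reduce \ThreeSUMIndexing{} to \SetIntersection{} by reusing the hashing‑and‑shifting gadget from the proof of Theorem~\ref{thm:3SI_to_SSD}, but feeding the shifted sets into a \SetIntersection{} data structure rather than a \SetDisjointness{} one. As before, a linear almost‑balanced $h_1:U\to[R]$ splits $A_1$ and $A_2$ into $R$ buckets of size $\tilde O(n/R)$ (with the expected $O(R)$ overflowing elements kept explicitly with lookup tables), and a linear pairwise‑independent $h_2:U\to[Q]$ turns each bucket into $\sqrt Q$ shifted sets, so the \SetIntersection{} instance has $\tilde O(R\sqrt Q)$ sets of size $\tilde O(n/R)$ and total size $N=\tilde\Theta(\sqrt Q\cdot n)$. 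To answer a query $z$ I would run, for each $i\in[R]$, the \SetIntersection{} query on $\mathcal B_{i,h_2^{\uparrow}(z)}$ and $\mathcal C_{(h_1(z)+c_{h_1}-i)\bmod R,\,h_2^{\downarrow}(z)}$, and for every returned collision value look up its $\tilde O(1)$ expected pre‑images inside the two buckets and test whether they form a genuine pair summing to $z$, halting on the first success.

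The crucial difference from the \SetDisjointness{} reduction, and the source of the stronger tradeoff, is error handling: because the oracle \emph{enumerates} the intersection, false positives are paid for by the output term $op$ instead of being killed by the $O(\log n)$ independent repetitions of $h_2$ used in Theorem~\ref{thm:3SI_to_SSD}. With $Q=\tilde\Theta((n/R)^2)$ the expected number of spurious colliding pairs per bucket pair is $O(1)$ (by pairwise independence), and the expected number of within‑bucket $h_2$‑collisions is of the same order, so the total extra work over all $R$ subqueries is $\tilde O(R)$; this already gives $t_{\TSI}=\tilde O(R\cdot T(N))$, $s_{\TSI}=\tilde O(S(N)+n)$, matching the \SetDisjointness{} reduction (in particular it re‑derives $S\cdot T^2=\tilde\Omega(N^2)$, which is also what one gets for free since a \SetIntersection{} structure solves \SetDisjointness{}). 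To push this to the sharper $S\cdot T=\tilde\Omega(N^2)$ I would exploit that enumeration lets us \emph{tolerate a much larger false‑positive count}: take a coarser $h_2$ (smaller $Q$, hence fewer shifted sets and a smaller $N$), let $op$ grow to $\tilde O(n^{1-\Omega(1)})$ — still inside the truly sublinear query budget allowed by the Strong \ThreeSUMIndexing{} Conjecture — and then optimize over the number of buckets $R$, the hash range $Q$, and the allowed output size exactly as in the last paragraph of the proof of Theorem~\ref{thm:3SI_to_SSD}, now with the improved exponents, so that for each operating point of a hypothetical too‑good \SetIntersection{} structure the induced \ThreeSUMIndexing{} solution uses truly subquadratic space and truly sublinear time.

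The hard part will be exactly this optimization, and I expect a single‑stage reduction to be genuinely insufficient in the low‑space regime: when $S(N)$ is close to linear the reduction can only afford to ``spend'' a small amount of query time, so the false‑positive flood must be kept correspondingly tiny while $N$ is still driven well below $n^2$, and the naive balancing seems to fall short of covering the whole curve — so the proof may need to iterate the reduction (or apply it on a recursively built \SetIntersection{} instance) to close the remaining range of space bounds. The other piece of bookkeeping that must be handled carefully is concentration: one must argue that the collision and false‑positive counts are tightly enough concentrated that the $\tilde O(\cdot)$ expected bounds can be turned into high‑probability bounds (or at least suffice for a conditional lower bound), and that recovering genuine witness pairs from the enumerated intersection elements never costs asymptotically more than the charged output $op$, despite possible $h_2$‑collisions within a bucket.
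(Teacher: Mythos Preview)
Your plan is exactly the paper's proof. The paper reuses the hashing-and-shifting gadget of Theorem~\ref{thm:3SI_to_SSD}, feeds the shifted sets into a \SetIntersection{} oracle, and---this is precisely the ``coarser $h_2$'' move you describe---replaces $Q=(5n/R)^2$ by $Q=n^{1+\delta}/R$ for an arbitrarily small constant $\delta>0$; instead of repeating with $O(\log n)$ independent $h_2$'s, it enumerates the reported elements and verifies each candidate directly. With $R=n^\gamma$ this gives an instance of size $N=\tilde\Theta(n^{(3+\delta-\gamma)/2})$, expected total output $O(n^{1-\delta})$ over all $R$ subqueries, query cost $t_{\TSI}=\tilde O(R\cdot T(N)+n^{1-\delta})$ and space $s_{\TSI}=\tilde O(S(N))$; the same style of $\gamma$-optimisation as in Theorem~\ref{thm:3SI_to_SSD} then yields $S\cdot T=\tilde\Omega(N^{2-\frac{2\delta}{3+\delta-\gamma}-\epsilon})$ for every $\epsilon>0$, and letting $\delta,\epsilon\to 0$ gives the claimed $S\cdot T=\tilde\Omega(N^2)$.

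Two of your worries are over-cautious relative to what the paper does. First, the paper uses a \emph{single} stage of the reduction throughout---no iteration or recursion---and simply lets $\gamma$ and $\delta$ vary to hit each operating point; so your anticipated need to ``iterate the reduction'' in the low-space regime is not part of the argument. Second, the paper works entirely with \emph{expected} output sizes (the $O(n^{1-\delta})$ bound is an expectation), which is enough for a conditional lower bound; no concentration argument is made or needed. The witness-recovery step you flag is handled trivially in the paper (``for each pair in an intersection we can verify in constant time''), since one can store the original elements alongside their $h_2$-values.
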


\begin{proof}
The proof follows the same structure as the proof of Theorem~\ref{thm:3SI_to_SSD}, but here we set $Q = (n^{1+\delta}/R)$, where $\delta >0$ is a constant. Furthermore, we preprocess the buckets using a \SetIntersection{} data structure, and if two sets intersect then instead of repeating the whole process with different choices of $h_2$ (in order to reduce the probability of a false positive), we use the \SetIntersection{} data structure to report all of the elements in an intersection, and verify them all directly.

As before, set $z$ and let $k = h_2(z)$. Since $h_2$ is pair-wise independent and linear then for any pair $x,y\in U$ where $x\neq y$ we have that if $x+y\neq z$ then $\Pr[h_2(x) + h_2(y) = k+c_{h_2}(\modulo R)] = \Pr[h_2(x+y)= h_2(z)+c_{h_2}(\modulo R)] =\frac{1}{Q}$.
We now bound the expected output size from all of the intersections. Since each pair of buckets imply at most $(\frac {3n} R)^2$ pairs of elements, the expected size of their intersection is $E[|h_2(\mathcal{B}_i) - k \cap h_2(\mathcal{C}_j)|] =(\frac {3n} R)^2 \frac 1 {Q} = O(\frac{n^{1-\delta}}{R})$. Thus, the expected size of the output of all of the $O(R)$ intersections is $O(R\frac{n}{Rn^\delta}) = O(n^{1-\delta})$. For each pair in an intersection we can verify in constant time if together with $z$ they form a solution.

To summarize, we create a total of $O(R\sqrt{Q})$ sets, each of size at most $3n/R$. Thus, the total size of the \SetIntersection{} instance is $N=\tilde{O}(n^2/R)$.
For a query, we perform $\tilde{O}(R)$ queries on the \SetIntersection{} structure. Furthermore, we spend $O(R)$ time to check possible solutions containing one of the expected $O(R)$ elements from buckets with too many elements by using the lookup tables. If we denote by $T(N)$ and $S(N)$ the query time and space usage, respectively, of the \SetIntersection{} data structure on $N$ elements (in our case $N= \tilde{O}(R\sqrt{Q} n /R) = \tilde{O}(n^{\frac{3+\delta-\gamma} 2 })$), then the query time of the reduction becomes $t_{\TSI{}} = \tilde{O}(R\cdot T(N)+n^{1-\delta})$ time and the space usage is $s_{\TSI{}}=\tilde{O}(S(N) + O(n))$. Since we may assume that $S(N) = \Omega(N)$, we have that $s_{\TSI{}}=\tilde{O}(S(N))$.

By the Strong \ThreeSUMIndexing{} conjecture, either $s_{\TSI{}} =\tilde \Omega (n^2)$ or $t_{\TSI{}} = \tilde \Omega (n)$, which means that either $S(N) = \tilde \Omega(N^{\frac 4 {3+\delta-\gamma}})$ or $T(N) = \tilde \Omega(N^{\frac {2-2\gamma} {3+\delta-\gamma}})$. For any constant $\epsilon>0$, if the \SetIntersection{} data structure uses $ \tilde \Theta (N^{\frac 4 {3+\delta-\gamma} -\epsilon})$ space, then $S(N)\cdot T(N) = \tilde \Omega (N^{\frac 4 {3+\delta-\gamma} -\epsilon + \frac{2-2\gamma}{3+\delta-\gamma}}) = \tilde \Omega (N^{2 - \frac{2\delta}{3+\delta-\gamma}-\epsilon})$. Since this holds for any $\epsilon > 0$ and any $\delta >0$ it must be that $S(N)\cdot T(N) = \tilde \Omega (N^{2})$.
\qed
\end{proof}

\begin{theorem}\label{thm:WSD_to_W3SI}
The \SetDisjointness{} Conjecture implies the \ThreeSUMIndexing{} Conjecture.
\end{theorem}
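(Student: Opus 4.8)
The plan is to prove the contrapositive: a sufficiently good data structure for \ThreeSUMIndexing{} would yield a sufficiently good data structure for \SetDisjointness{}. So suppose, toward a contradiction, that the \ThreeSUMIndexing{} problem can be solved with truly subquadratic space $O(n^{2-\epsilon})$ (for some $\epsilon>0$) and $\tilde{O}(1)$ query time. I will construct from it a data structure for \SetDisjointness{} with $\tilde{O}(1)$ query time and $O(N^{2-\epsilon})$ space, contradicting the \SetDisjointness{} Conjecture (in the formulation stated in terms of $N$).

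The reduction encodes set-membership facts as integers at three separate ``scales''. Given a \SetDisjointness{} instance $S_1,\dots,S_m\subseteq U$ with $\sum_i|S_i|=N$, first relabel the at most $N$ distinct elements that actually occur to integers in $\{1,\dots,N\}$ and discard empty sets, so that $|U|\le N$ and $m\le N$; then fix a base $C=N+1$, which exceeds both $|U|$ and $m$. For each $i$ and each $e\in S_i$ put the integer $\alpha(i,e)=iC+e$ into $A_1$, and for each $j$ and each $e'\in S_j$ put $\beta(j,e')=jC^2-e'$ into $A_2$. Both maps are injective, so $|A_1|=|A_2|=N$ and this is a \ThreeSUMIndexing{} instance with $n=N$; preprocess it with the assumed data structure, using $O(N^{2-\epsilon})$ space. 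A \SetDisjointness{} query $(i,j)$ is then answered by querying this structure with $z=iC+jC^2$ and reporting ``$S_i\cap S_j\neq\emptyset$'' exactly when the query succeeds; the query time is $\tilde{O}(1)$.

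Correctness is a short calculation. If $e\in S_i\cap S_j$ then $\alpha(i,e)+\beta(j,e)=iC+jC^2=z$, so there are no false negatives. Conversely, if $\alpha(i',e)+\beta(j',e')=z$ for some $e\in S_{i'}$ and $e'\in S_{j'}$, then $(j'-j)C^2+(i'-i)C+(e-e')=0$; since $|(i'-i)C+(e-e')|\le(m-1)C+(|U|-1)<C^2$, this forces $j'=j$, whereupon $(i'-i)C=e'-e$ with $|e'-e|<C$ forces $i'=i$ and $e=e'$, so $e\in S_i\cap S_j$. Hence the reduction is false-positive free, and the derived \SetDisjointness{} data structure has $\tilde{O}(1)$ query time and truly subquadratic space $O(N^{2-\epsilon})$, which is the desired contradiction.

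The only real obstacle is making the reduction false-positive free: the naive single-scale encoding $\alpha(i,e)=i|U|+e$, $\beta(j,e')=j|U|-e'$, $z=(i+j)|U|$ merely detects whether some element of some $S_{i'}$ equals some element of some $S_{j'}$ with $i'+j'=i+j$, rather than whether $S_i$ and $S_j$ themselves share an element. The fix is precisely the multi-scale layout above, with a base $C$ larger than both $m$ and $|U|$, so that the ``digit'' carrying the $A_2$-set index (scale $C^2$), the ``digit'' carrying the $A_1$-set index (scale $C$), and the ``digit'' carrying the element value (scale $1$) cannot interfere; the single inequality that needs checking is $(m-1)C+(|U|-1)<C^2$. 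The remaining ingredients — relabeling the universe so the encoded numbers fit in $O(\log N)$ bits, and handling empty sets — are routine. (This is simply the space-oriented reading of the standard reduction placing \SetDisjointness{}-type problems below \ThreeSUM{}-type problems.)
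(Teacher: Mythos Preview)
Your proof is correct and follows essentially the same approach as the paper: encode each (set-index, element) pair as an integer via a positional (digit-based) layout so that a sum matching the query $z$ forces both set indices and the element to agree. The paper phrases this with explicit bit blocks and uses $x+(M-x)=M$ in the element field, while you use base-$C$ arithmetic and a direct subtraction $e-e'=0$; these are cosmetic variants of the same reduction.
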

\begin{proof}
Given an instance of \SetDisjointness{}, we construct an instance of \ThreeSUMIndexing{} as follows. Denote with $M$ the value of the largest element in the \SetDisjointness{} instance. Notice that we may assume that $M \leq N$ (otherwise we can use a straightforward renaming). For every element $x\in U$ that is contained in at least one of the sets we create two integers $x_A$ and $x_B$, which are represented by $2\lceil\log{m}\rceil + \lceil \log{N} \rceil + 3$ bits each (recall that $m$ is the number of sets).

The $\lceil \log{N} \rceil$ least significant bits in $x_A$ represent the value of $x$. The following bit is a zero. The following $\lceil\log{m}\rceil$ bits in $x_A$ represent the index of the set containing $x$, and the rest of the $2+\lceil\log{m}\rceil$ are all set to zero. The $\lceil \log{N} \rceil$ least significant bits in $x_B$ represent the value of $M-x$. The following $2+\lceil\log{m}\rceil$ are all set to zero. The following $\lceil\log{m}\rceil$ bits in $x_B$ represent the index of the set containing $x$, and the last bit is set to zero. Finally, the integer $x_A$ is added to $A_1$ of the \ThreeSUMIndexing{} instance, while the integer $x_B$ is added to $A_2$.

We have created two sets of $n \leq M$ integers. We then preprocess them to answer \ThreeSUMIndexing{} queries. Now, to answer a \SetDisjointness{} query on sets $S_i$ and $S_j$, we query the \ThreeSUMIndexing{} data structure with an integer $z$ which is determined as follows.
The $\lceil \log{N} \rceil$ least significant bits in $z$ represent the value of $M$. The following bit is a zero. The following $\lceil\log{m}\rceil$ bits represent the index $i$ and are followed by a zero. The next $\lceil\log{m}\rceil$ bits represent the index $j$ and the last bit is set to zero.

It is straightforward to verify that there exists a solution to the \ThreeSUMIndexing{} problem on $z$ if and only if the sets $S_i$ and $S_j$ are not disjoint.
Therefore, if there is a solution to the \ThreeSUMIndexing{} problem with less than $\tilde{\Omega}(n^2)$ space and constant query time then there is a solution for the \SetDisjointness{} problem which refutes the \SetDisjointness{} Conjecture.
\qed
\end{proof}

\section{Parameterized Generalization: \\k-Set Intersection and (k+1)-SUM}
Two parameterized generalizations of the \SetDisjointness{} and \ThreeSUMIndexing{} problems are formally defined as follows:

\begin{problem}[\KSetDisjointness{} Problem]\label{prob:k_set_dis}
Preprocess a family $F$ of $m$ sets, all from universe $U$, with total size $N=\sum_{S\in F} |S|$ so that given $k$ query sets $S_1,S_2,\dots, S_k\in F$ one can quickly
determine if $\cap_{i=1}^k S_i=\emptyset$.
\end{problem}

\begin{problem}[\KplusOneSUMIndexing{} Problem]\label{prob:kp1sum_ind}
Preprocess $k$ integer arrays $A_1, A_2, \dots, A_k$, each of length $n$, so that given a query integer $z$ we can decide if there is $x_1\in A_1, x_2\in A_2, \dots, x_k\in A_k$ such that $z=\sum_{i=1}^k x_i$.
\end{problem}

It turn out that a natural generalization of the data structure of Cohen and Porat~\cite{CP10} leads to a data structure for \KSetDisjointness{} as shown in the following lemma.
\begin{lemma}\label{lem:k_set_dis}
There exists a data structure for the
\KSetDisjointness{} problem where the query time is $T$ and the space usage is $S=O((N/T)^k)$.
\end{lemma}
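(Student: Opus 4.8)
The plan is to generalize the data structure of Cohen and Porat~\cite{CP10} by classifying the $m$ input sets of the \KSetDisjointness{} instance according to their size rather than classifying the universe elements. Fix a threshold $\tau$, to be set to $\Theta(T)$. Call a set \emph{large} if it contains at least $\tau$ elements and \emph{small} otherwise; since $N=\sum_{S\in F}|S|$, there are at most $N/\tau$ large sets. In the preprocessing stage I would (i) build, for every set $S\in F$, a structure supporting $O(1)$ membership queries (e.g.\ via hashing), using $O(N)$ space in total, and (ii) renumber the large sets $1,\dots,\lceil N/\tau\rceil$ and store a $k$-dimensional bit array indexed by $k$-tuples of large-set identifiers, where the entry of $(S_{i_1},\dots,S_{i_k})$ records whether $\bigcap_{j=1}^k S_{i_j}=\emptyset$. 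This array has at most $(N/\tau)^k$ entries and hence occupies $O((N/\tau)^k)$ space; its entries can be filled by brute force, and the lemma imposes no constraint on preprocessing time.

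To answer a query $(S_1,\dots,S_k)$ I would split into two cases. If some queried set $S_i$ is small, then since $\bigcap_{j} S_j\subseteq S_i$ it suffices to iterate over the fewer than $\tau$ elements of $S_i$ and, for each, test membership in the other $k-1$ sets using their membership structures; this decides emptiness in $O(k\tau)=O(\tau)$ time for constant $k$. If all $k$ queried sets are large (possibly with repetitions, which the array handles uniformly), the answer has been precomputed, so a single lookup in $O(1)$ time suffices. Hence the query time is $O(\tau)$; choosing $\tau=\Theta(T)$ yields query time $O(T)$ and total space $O(N+(N/T)^k)$.

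Finally I would note that the claimed bound $S=O((N/T)^k)$ is stated for the regime in which it actually beats the trivial linear-space solution, namely $T=O(N^{(k-1)/k})$; in that regime $(N/T)^k=\Omega(N)$, so the $O(N)$ term is absorbed and $S=O((N/T)^k)$ as claimed. For larger $T$ one simply keeps the sorted input sets, answers each query by scanning the smallest of the $k$ queried sets, and uses only $O(N)$ space.

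There is no serious obstacle here: once the size split is applied to the sets (not to the elements), the argument is a routine parametrization. The only points needing a little care are setting the threshold at $\Theta(T)$ so that the small-set scan costs exactly $O(T)$, checking that the precomputed table ranges over $k$-tuples \emph{with} repetition so its size is $(N/\tau)^k$ rather than a binomial coefficient, and keeping $k$ constant so that the factors of $k$ in the query time are suppressed.
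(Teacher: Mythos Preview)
Your proposal is correct and follows essentially the same approach as the paper: classify sets as large or small, precompute a $k$-dimensional answer table over large sets, and for queries containing a small set scan its elements against membership structures of the others. The only cosmetic difference is that the paper picks the $f$ largest sets as ``large'' (so small sets have size at most $N/f$ and the table has size $f^k$), whereas you impose a size threshold $\tau$ directly; setting $f=N/\tau$ makes the two parametrizations identical, and your explicit treatment of the regime $T=O(N^{(k-1)/k})$ is a welcome bit of extra care that the paper leaves implicit.
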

\begin{proof}
We call the $f$ largest sets in $F$ \emph{large sets}. The rest of the sets are called \emph{small sets}. In the preprocessing stage we explicitly maintain a $k$-dimensional table with the answers for all \KSetDisjointness{} queries where all $k$ sets are large sets. The space needed for such a table is $S=f^k$. Moreover, for each set (large or small) we maintain a look-up table that supports disjointness queries (with this set) in constant time. Since there are $f$ large sets and the total number of elements is $N$, the size of each of the small sets is at most $N/f$.

Given a \KSetDisjointness{} query, if all of the query sets are large then we look up the answer in the $k$-dimensional table. If at least one of the sets is small then using a brute-force search we look-up each of the at most $O(N/f)$ elements in each of the other $k-1$ sets. Thus, the total query time is bounded by $O(kN/f)$, and the space usage is $S=O(f^k)$. The rest follows.
\qed
\end{proof}

Notice that for the case of $k=2$ in Lemma~\ref{lem:k_set_dis} we obtain the same tradeoff of Cohen and Porat~\cite{CP10} for \SetDisjointness{}.
The following conjecture suggests that the upper bound of Lemma~\ref{lem:k_set_dis} is the best possible.

\begin{conjecture}\label{conj:GSSIH}
\textbf{Strong \KSetDisjointness{} Conjecture}. Any data structure for the \KSetDisjointness{} problem that answers queries in $T$ time must use $S=\tilde{\Omega}(\frac{N^k}{T^k})$ space.
\end{conjecture}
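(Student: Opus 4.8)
Since statement~\ref{conj:GSSIH} is posed as a conjecture, the realistic target is not an unconditional proof: a genuinely super-linear space lower bound for a natural data-structure problem would run into the same cell-probe barriers that keep even the $k=2$ case open (that case is precisely the Strong \SetDisjointness{} Conjecture, Conjecture~\ref{conj:SSDH}). Rather, the plan is to (i) certify that $S=\tilde\Omega(N^k/T^k)$ is the correct curve and (ii) derive it from an already-believed hardness assumption. Point (i) is handled by Lemma~\ref{lem:k_set_dis}, which supplies the matching upper bound $S=O((N/T)^k)$; so the conjecture asserts exactly that the Cohen--Porat-style $k$-dimensional table is optimal, and any argument must specialize at $k=2$ to Conjecture~\ref{conj:SSDH}.

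For point (ii) the plan is to prove the conditional statement ``the Strong \KplusOneSUMIndexing{} Conjecture (the $k$-fold analogue of the Strong \ThreeSUMIndexing{} Conjecture) implies the Strong \KSetDisjointness{} Conjecture'' by lifting the reduction of Theorem~\ref{thm:3SI_to_SSD} from two dimensions to $k$. Given a \KplusOneSUMIndexing{} instance $A_1,\dots,A_k$ and a parameter $R=n^{\gamma}$, draw an almost-linear, almost-balanced, pairwise-independent hash $h_1:U\to[R]$ from the Dietzfelbinger family \cite{Dietzfelbinger96,KPP16} and bucket each array, $\mathcal B^{(i)}_j=\{x\in A_i:h_1(x)=j\}$; the $O(R)$ expected elements in heavy buckets (size $>3n/R$) are handled separately with look-up tables. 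For a query $z$, linearity of $h_1$ forces any solution $x_1+\cdots+x_k=z$ to obey $\sum_i h_1(x_i)\equiv h_1(z)+c_{h_1}\ (\modulo R)$, so one iterates over the $R^{k-1}$ tuples $(j_1,\dots,j_{k-1})$ with $j_k$ determined, and for each tuple issues one \KSetDisjointness{} query on $k$ ``shifted'' sets derived from a second almost-linear, pairwise-independent hash $h_2:U\to[Q]$, engineered so that the $k$ sets share a common element iff some $x_i\in\mathcal B^{(i)}_{j_i}$ satisfy $\sum_i h_2(x_i)\equiv h_2(z)+c_{h_2}\ (\modulo Q)$. Taking $Q=\Theta((n/R)^k)$ keeps the expected number of false positives per tuple $O(1)$; repeating with $O(\log n)$ fresh copies of $h_2$ pushes the overall error to $1/\mathrm{poly}(n)$, and a surviving tuple is verified in $O(n/R)$ time. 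This yields a \KSetDisjointness{} instance with $\tilde O(n^2/R)=\tilde O(N)$ elements and sets of size $\le 3n/R$, query time $\tilde O(R^{k-1}T(N))$, and space $\tilde O(S(N))$; feeding this into the Strong \KplusOneSUMIndexing{} Conjecture and optimizing $\gamma\to 0$ — exactly as in Theorem~\ref{thm:3SI_to_SSD}, whose $k=2$ instance gives $S\cdot T^{2}=\tilde\Omega(N^2)$ — should produce $S(N)\cdot T(N)^{k}=\tilde\Omega(N^{k})$.

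The step I expect to be the main obstacle is the $k$-wise shift construction for $h_2$. For two sets an intersection is a single equality, so the base-$\sqrt Q$ split $h_2(z)+c_{h_2}=h_2^{\uparrow}\!\cdot\!\sqrt Q+h_2^{\downarrow}$ suffices; for $k\ge 3$ a common element of $k$ sets encodes $k-1$ equalities, so forcing it to be equivalent to the single linear relation $\sum_i h_2(x_i)\equiv t\ (\modulo Q)$ over the candidates — rather than to a system of constraints — is not obviously achievable by a naive mixed-radix split, and likely requires the linear-equation encoding the paper develops for the converse direction (\KSetDisjointness{}$\Rightarrow$\KplusOneSUMIndexing{}), now run backwards, while keeping the query count at $\tilde O(R^{k-1})$ rather than $R^{k}$. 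The remaining difficulties are bookkeeping that is routine in spirit but delicate for general $k$: the heavy-bucket case now spawns a sub-instance with one fewer array instead of a single lookup; the additive overheads ($O(R)$ from heavy buckets, $O(R^{k-1}(n/R)^k/\mathrm{poly}(n))$ from verification) must stay dominated by $S(N)$; and the exponents collapse to exactly $k$ only in the limit $\gamma,\epsilon\to 0$. As an independent strand of evidence one could also target an unconditional lower bound in a restricted model (non-adaptive or pointer-machine-style data structures) via a $k$-party set-disjointness communication argument, mirroring the intuition that underlies Conjecture~\ref{conj:SSDH} at $k=2$.
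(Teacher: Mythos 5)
This statement is a conjecture, and the paper offers no proof of it: its only supporting evidence is the matching upper bound of Lemma~\ref{lem:k_set_dis} (the $k$-dimensional Cohen--Porat table with $S=O((N/T)^k)$) and the fact that $k=2$ coincides with Conjecture~\ref{conj:SSDH}. You correctly identify both points. Note, however, that the implication you propose to prove in part (ii) is not in the paper, and for the parameterized case the paper's only proved relation runs in the \emph{opposite} direction (Theorem~\ref{thm:WKSD_to_WK+1SI}: the \KSetDisjointness{} conjecture implies the \KplusOneSUMIndexing{} conjecture); there is no claim that Strong \KplusOneSUMIndexing{} implies Strong \KSetDisjointness{}.

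There is also a concrete gap in your plan beyond the $k$-wise shift construction you flag. Even granting that construction with the parameters you state ($N=\tilde O(n^{2-\gamma})$, query overhead $\tilde O(R^{k-1}T(N))$), the exponent arithmetic of Theorem~\ref{thm:3SI_to_SSD} does not survive for $k\ge 3$. The conjecture only gives ``space $\tilde\Omega(n^k)$ or query time $\tilde\Omega(n)$,'' so you obtain $S(N)=\tilde\Omega(N^{k/(2-\gamma)})$ or $T(N)=\tilde\Omega(N^{(1-\gamma(k-1))/(2-\gamma)})$, hence at best $S\cdot T^k=\tilde\Omega\bigl(N^{k(2-\gamma(k-1))/(2-\gamma)}\bigr)$. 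For $k=2$ the exponent is identically $2$ for every $\gamma$, which is the cancellation Theorem~\ref{thm:3SI_to_SSD} exploits; for $k\ge 3$ it is strictly below $k$ for every fixed $\gamma>0$, and letting $\gamma\to 0$ only certifies the single regime $S\approx N^{k/2}$ rather than the whole curve (in particular it says nothing about polylogarithmic $T$ forcing $S=\tilde\Omega(N^k)$). The root cause is that your query overhead is $R^{k-1}$ while the assumed time hardness is only linear in $n$, so the reduction cannot pay for itself when $k\ge 3$; moreover, encoding the single relation $\sum_i h_2(x_i)\equiv t \pmod Q$ as a $k$-wise intersection plausibly requires enumerating $(k-1)$-dimensional shift vectors, blowing $N$ up to roughly $n^k/R^{k-1}$ and worsening the exponents further. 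So part (ii), as sketched, would not establish the conjectured tradeoff even conditionally; the honest status, as in the paper, is a conjecture backed by the upper bound and the $k=2$ analogy.
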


Similarly, a natural generalization of the Strong \ThreeSUMIndexing{} conjecture is the following.
\begin{conjecture}
\textbf{Strong \KplusOneSUMIndexing{} Conjecture}. There is no solution for the \KplusOneSUMIndexing{} problem with $\tilde{O}(n^{k-\Omega(1)})$ space and truly sublinear query time.
\end{conjecture}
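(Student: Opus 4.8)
The plan is to establish this conjecture (together with its non-strong counterpart) by deriving it from the \KSetDisjointness{} conjecture, of which Conjecture~\ref{conj:GSSIH} is the strong form, in the same spirit as Theorem~\ref{thm:WSD_to_W3SI} derived the \ThreeSUMIndexing{} conjecture from the \SetDisjointness{} conjecture. I would reduce an arbitrary instance of \KSetDisjointness{} (Problem~\ref{prob:k_set_dis}) to an instance of \KplusOneSUMIndexing{} (Problem~\ref{prob:kp1sum_ind}), so that a space-efficient and fast data structure for the latter gives one for the former. Compared with the $k=2$ case, the new difficulty is that a single sum $x_1+\cdots+x_k$ must now simultaneously certify that all $k$ chosen integers refer to the \emph{same} universe element and that the $t$-th integer was drawn from the $t$-th queried set; one linear relation, as in Theorem~\ref{thm:WSD_to_W3SI}, no longer does this, so I would encode a whole \emph{system} of linear equations, one per block of bits --- the ``linear equations'' method mentioned in the introduction.

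First I set up the encoding. Let $F=\{S_1,\dots,S_m\}$ have total size $N$, with all universe values renamed into $[N]$. I split each integer into $2k-1$ blocks of $\Theta(\log N)$ bits each, padded with $\lceil\log k\rceil$ guard bits so that a sum of block-values never carries into the next block: $k-1$ ``equation blocks'' $E_1,\dots,E_{k-1}$ and $k$ ``index blocks'' $I_1,\dots,I_k$. For every element $v$ and every set $S_j\ni v$ I insert into $A_t$, for each $t\in[k]$, an integer encoding the pair $(v,j)$: block $E_{t-1}$ holds $C-v$ when $t>1$, block $E_t$ holds $v$ when $t<k$, block $I_t$ holds $j$, and every remaining block is $0$; here $C\ge 2N$ is a fixed offset absorbing the negative contribution $-v$. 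To answer a query on $S_{i_1},\dots,S_{i_k}$ I form the query integer $z$ with $E_s=C$ for all $s\in[k-1]$ and $I_t=i_t$ for all $t\in[k]$, and ask the \KplusOneSUMIndexing{} structure whether some $x_1\in A_1,\dots,x_k\in A_k$ sum to $z$.

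Then I would verify correctness. Agreement in block $E_s$ forces $\mathrm{val}(x_s)+(C-\mathrm{val}(x_{s+1}))=C$, i.e.\ $x_s$ and $x_{s+1}$ carry the same element, and chaining over $s\in[k-1]$ yields one common element $v$; agreement in block $I_t$ forces $x_t$'s set-index to equal $i_t$. Hence a solution exists iff some element lies in all of $S_{i_1},\dots,S_{i_k}$, i.e.\ iff $\bigcap_{t=1}^k S_{i_t}\neq\emptyset$, and the guard bits make this equivalence exact. For the accounting, each $A_t$ holds at most $N$ integers, so $n=O(N)$, and each \KSetDisjointness{} query costs one \KplusOneSUMIndexing{} query plus $O(k\log N)$ time to assemble $z$. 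Thus a \KplusOneSUMIndexing{} data structure with $\tilde{O}(n^{k-\Omega(1)})$ space and $\tilde{O}(1)$ query time would yield a \KSetDisjointness{} data structure with $\tilde{O}(N^{k-\Omega(1)})$ space and $\tilde{O}(1)$ query time, contradicting the \KSetDisjointness{} conjecture.

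The routine obstacle is the bit-block bookkeeping --- fixing block widths and guard-bit counts so that sums of block-values never overflow, the offset $C$ never manufactures spurious equalities, and negative contributions are handled uniformly --- but this is of the same flavour as the proof of Theorem~\ref{thm:WSD_to_W3SI} and should go through. The genuinely delicate step is the \emph{strong} form, which permits truly sublinear rather than $\tilde{O}(1)$ query time: the reduction above, with $n=\Theta(N)$ and $O(1)$ query overhead, does not amplify, so one would have to prepend a hashing layer in the style of Theorem~\ref{thm:3SI_to_SSD} --- hash each $A_t$ into $R$ buckets with a linear, almost-balanced family, restrict a query to the $R^{k-1}$ compatible bucket-tuples, and reduce each resulting ``$(k+1)$-SUM over $k$ buckets of size $O(n/R)$'' task to a single $k$-way disjointness query on suitably shifted sets. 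Designing that shifted-set gadget, so that the common intersection of the $k$ sets is nonempty exactly when the bucketed sum hits its target (up to false positives that are cheap to eliminate because the buckets are small), is, I believe, the real technical core of the strong statement.
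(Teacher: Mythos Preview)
The statement you were asked to address is a \emph{conjecture}, not a theorem: the paper does not prove it, nor does it derive it from any other hypothesis. It is simply posed as an unproven hardness assumption, so there is no ``paper's own proof'' to compare against.

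What the paper \emph{does} prove, in Theorem~\ref{thm:WKSD_to_WK+1SI}, is only the weak implication: the (non-strong) \KSetDisjointness{} Conjecture implies the (non-strong) \KplusOneSUMIndexing{} Conjecture. Your bit-block reduction is essentially that proof. The only cosmetic difference is the shape of the linear system: the paper uses a ``star'' encoding in which $x_1$ carries $M-x$ in \emph{every} equation block and each $x_i$ ($i>1$) carries $x$ in a single block, yielding the equations $x_1=x_i$ for all $i>1$; you use a ``chain'' encoding with equations $x_t=x_{t+1}$. Both are correct and equivalent, and your accounting (each $A_t$ has size $O(N)$, one query per disjointness query) matches the paper's.

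For the strong form itself, however, your proposed fix points the wrong way. The hashing layer you sketch ``in the style of Theorem~\ref{thm:3SI_to_SSD}'' hashes the arrays $A_t$ and reduces \KplusOneSUMIndexing{} \emph{to} \KSetDisjointness{}; if it worked, it would show that the Strong \KplusOneSUMIndexing{} Conjecture \emph{implies} the Strong \KSetDisjointness{} Conjecture, not that it \emph{follows} from it. To derive the strong conjecture along the lines you announced, you would need amplification in the reduction \emph{from} \KSetDisjointness{} \emph{to} \KplusOneSUMIndexing{}, and neither your proposal nor the paper supplies that --- the paper does not claim any implication toward the Strong \KplusOneSUMIndexing{} Conjecture at all. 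So the ``genuinely delicate step'' you flag is not merely delicate; as formulated, it establishes the opposite implication, and the Strong \KplusOneSUMIndexing{} Conjecture remains, in the paper, an unconditioned conjecture.
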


We also consider some weaker conjectures, similar to the \SetDisjointness{} and \ThreeSUMIndexing{} conjectures.

\begin{conjecture}\label{conj:GSIH}
\textbf{\KSetDisjointness{} Conjecture}. Any data structure for the \KSetDisjointness{} problem that answers queries in constant time must use $\tilde{\Omega}(N^k)$ space.
\end{conjecture}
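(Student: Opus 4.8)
The statement is a conjecture, so ``proving it'' means either deriving it from hardness assumptions already adopted or marshalling the evidence that motivates it; I would do both. The immediate formal content is that Conjecture~\ref{conj:GSIH} is precisely the $T=\tilde O(1)$ slice of the Strong \KSetDisjointness{} Conjecture (Conjecture~\ref{conj:GSSIH}): plugging $T=\mathrm{polylog}(N)$ into $S=\tilde\Omega(N^k/T^k)$ already forces $S=\tilde\Omega(N^k)$, while Lemma~\ref{lem:k_set_dis} with $T=\tilde O(1)$ gives the matching $O(N^k)$ upper bound, so the two bounds pin each other down. Moreover, for $k=2$ the conjecture is exactly Formulation~2 of the classical \SetDisjointness{} Conjecture, which is widely believed; so the task reduces to exhibiting independent support for $k\ge 3$.

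The natural candidate for such support is to generalise the reduction behind Theorem~\ref{thm:3SI_to_SSD} from two arrays to $k$ arrays, basing \KSetDisjointness{} hardness on the (Strong) \KplusOneSUMIndexing{} Conjecture. Given arrays $A_1,\dots,A_k$, pick a linear, almost-balanced $h_1\colon U\to[R]$, split each $A_\ell$ into buckets $\mathcal B^{(\ell)}_1,\dots,\mathcal B^{(\ell)}_R$ of size $O(n/R)$, and note that a solution $x_1+\dots+x_k=z$ with $x_\ell\in\mathcal B^{(\ell)}_{i_\ell}$ forces $\sum_\ell i_\ell\equiv h_1(z)+c_{h_1}\pmod R$, so only $O(R^{k-1})$ bucket-tuples need to be queried (and the $O(R)$ heavy-bucket elements are checked directly). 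Then, as in Theorem~\ref{thm:3SI_to_SSD}, apply a second linear, pairwise-independent $h_2\colon U\to[Q]$ with $Q=\Theta((n/R)^k)$, replace each bucket by $Q^{1/k}$ shifted sets so that the shifts of the $k$ coordinates combine to address $h_2(z)+c_{h_2}$ in base $Q^{1/k}$, feed everything to a \KSetDisjointness{} structure, and repeat $O(\log n)$ times to kill the per-tuple false-positive probability $(3n/R)^k/Q$. This yields a \KplusOneSUMIndexing{} solution of space $\tilde O(S(N)+n)$ and query time $\tilde O(R^{k-1}T(N))$ on an instance of total size $N=\tilde O(n^2/R)$.

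The obstacle, and the reason this is posed as a conjecture, surfaces exactly here: the produced \KSetDisjointness{} instance has size $N=\tilde O(n^2/R)$ \emph{independent of $k$} (the $k$ arrays contribute only a constant factor, and the shift count $Q^{1/k}=\Theta(n/R)$ does not grow with $k$), whereas the \KplusOneSUMIndexing{} hardness is a statement about $n^k$ space. Balancing $R$ against the $R^{k-1}$ query blow-up, the best one can extract for constant-query \KSetDisjointness{} is $S(N)=\tilde\Omega(N^{k(k-1)/(2k-3)})$, which equals the target $\tilde\Omega(N^k)$ only at $k=2$ and degrades towards $\tilde\Omega(N^{k/2})$ as $k$ grows. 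So the sum-based route does not reach Conjecture~\ref{conj:GSIH} for $k\ge 3$ --- it shows the $N$-parameterised \KSetDisjointness{} conjecture is strictly more demanding than \KplusOneSUMIndexing{} hardness --- and a genuine proof seems to require a $k$-party analogue of the known cell-probe lower bounds for (lopsided) set disjointness, which is beyond current techniques. Consequently I would present Conjecture~\ref{conj:GSIH} as justified by (i) consistency with the matching upper bound of Lemma~\ref{lem:k_set_dis}, (ii) being the $T=\tilde O(1)$ case of Conjecture~\ref{conj:GSSIH}, and (iii) agreeing with the established $k=2$ \SetDisjointness{} conjecture, rather than as a theorem.
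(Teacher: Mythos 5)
You are right that this statement is a conjecture: the paper contains no proof of it, and none is expected, so the correct move is exactly what you did --- identify it as the constant-query slice of Conjecture~\ref{conj:GSSIH}, note the matching $O((N/T)^k)$ upper bound of Lemma~\ref{lem:k_set_dis} (which at $T=\tilde O(1)$ gives $O(N^k)$), and observe that $k=2$ coincides with the established \SetDisjointness{} conjecture; this is precisely the paper's own motivation for stating it. One clarification on your second paragraph: the only formal result the paper proves about this conjecture goes in the \emph{opposite} direction to the reduction you sketch, namely Theorem~\ref{thm:WKSD_to_WK+1SI} shows that the \KSetDisjointness{} conjecture \emph{implies} the \KplusOneSUMIndexing{} conjecture (via a bit-packing/linear-equations encoding), whereas you attempt to derive \KSetDisjointness{} hardness \emph{from} \KplusOneSUMIndexing{} by generalizing the hashing reduction of Theorem~\ref{thm:3SI_to_SSD}. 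Your analysis of why that route falls short of $\tilde\Omega(N^k)$ for $k\ge 3$ --- the instance size $N=\tilde O(n^2/R)$ does not scale with $k$ while the query blow-up grows like $R^{k-1}$ --- is sound and is a reasonable explanation of why the paper proves only the one implication and leaves the $k$-parameterized space bound as an independent conjecture; so your proposal is consistent with, and slightly more explicit than, the paper's treatment.
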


\begin{conjecture}
\textbf{\KplusOneSUMIndexing{} Conjecture}. There is no solution for the \KplusOneSUMIndexing{} problem with $\tilde{O}(n^{k-\Omega(1)})$ space and constant query time.
\end{conjecture}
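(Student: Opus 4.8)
The final conjecture is best read as the conclusion of the implication announced in the introduction, so the plan is to prove that the \KSetDisjointness{} Conjecture implies the \KplusOneSUMIndexing{} Conjecture, generalizing the bit-packing reduction of Theorem~\ref{thm:WSD_to_W3SI} from $k=2$ to arbitrary constant $k$. The new ingredient---the ``linear equations'' idea mentioned in the introduction---is to replace the single equality test $x=y$ (which suffices for two sets) by a chain of $k-1$ equalities $v_1=v_2=\cdots=v_k$ packed into a single integer. Concretely, start from a \KSetDisjointness{} instance with sets $S_1,\dots,S_m\subseteq U$, total size $N$, and (after renaming) maximum element $M\le N$. For every membership pair $(x,\ell)$ with $x\in S_\ell$, and every $j\in\{1,\dots,k\}$, put one integer into $A_j$ whose bits are partitioned into $k-1$ \emph{value blocks} and $k$ \emph{index blocks}, each padded with a few high-order zero bits. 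In value block $t$ the $A_t$-integer stores $x$, the $A_{t+1}$-integer stores $M-x$, and every other array stores $0$; in index block $j$ the $A_j$-integer stores $\ell$ and every other array stores $0$. Thus value block $t$ is designed to enforce $v_t+(M-v_{t+1})=M$, i.e.\ $v_t=v_{t+1}$, and---unlike the $k=2$ case---each intermediate array's chosen element contributes to two value blocks at once, as ``$M-v_j$'' in block $j-1$ and as ``$v_j$'' in block $j$.

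For a query on sets $S_{i_1},\dots,S_{i_k}$, construct the integer $z$ that holds $M$ in every value block and $i_j$ in index block $j$. The correctness claim, to be checked block by block, is that there exist $x_1\in A_1,\dots,x_k\in A_k$ with $\sum_{j=1}^k x_j=z$ if and only if $\bigcap_{j=1}^k S_{i_j}\neq\emptyset$: the forward direction takes, for $v\in\bigcap_j S_{i_j}$, the integer associated with $(v,i_j)$ in each $A_j$; the converse holds because the value-block identities force $v_1=\cdots=v_k=:v$ and the index-block identities force each chosen $A_j$-integer to be the one recording that $v\in S_{i_j}$, whence $v\in\bigcap_j S_{i_j}$.

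To conclude, I account for the parameters: each $A_j$ gets exactly one integer per membership, so $n=N$; each integer uses $O\bigl((k-1)\log M+k\log m\bigr)=O(\log N)$ bits (taking $m\le N$ without loss of generality), hence $O(1)$ machine words for constant $k$. A \KSetDisjointness{} query reduces, in $O(1)$ extra time, to a single query on a \KplusOneSUMIndexing{} data structure built over the $A_j$'s. Therefore a \KplusOneSUMIndexing{} data structure with $\tilde O(n^{k-\Omega(1)})$ space and constant query time would give a \KSetDisjointness{} data structure with $\tilde O(N^{k-\Omega(1)})$ space and constant query time, contradicting the \KSetDisjointness{} Conjecture (Conjecture~\ref{conj:GSIH}); taking the contrapositive yields the stated implication.

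I expect the crux to be the carry-free decomposition: the block widths must be chosen so that the worst-case per-block sum of the $k$ addends---at most $2M$ in a value block, at most $m$ in an index block---stays strictly below the corresponding modulus, which is exactly what makes the integer identity $\sum_j x_j=z$ equivalent to the system of per-block identities, while keeping the overall length $O(\log N)$ so that the reduction stays within the word-RAM model. A secondary point, absent when $k=2$, is to set up the overlapping two-block participation of the intermediate arrays correctly so that the per-block equalities actually chain into $v_1=\cdots=v_k$.
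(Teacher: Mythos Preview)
Your proposal is correct and matches the paper's approach: the target statement is a conjecture, and you correctly identify that what is actually to be proved is Theorem~\ref{thm:WKSD_to_WK+1SI} (the \KSetDisjointness{} conjecture implies the \KplusOneSUMIndexing{} conjecture), via the same bit-packing reduction with $k-1$ value blocks and $k$ index blocks. The one cosmetic difference is that you encode the equalities $v_1=\cdots=v_k$ as a \emph{chain} (block $t$ enforces $v_t=v_{t+1}$, so each intermediate $A_j$ writes into two adjacent value blocks), whereas the paper uses a \emph{star} centred at $A_1$ (block $t$ enforces $x_1=x_{t+1}$, with $A_1$ writing $M-x$ into all $k-1$ value blocks and each $A_{t+1}$ writing $x$ into block $t$ only); both variants keep each value block's sum bounded by $2M$ and hence carry-free with one padding bit, and both give $n=N$ with $O(\log N)$-bit integers, so the conclusion is identical.
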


Similar to Theorem~\ref{thm:WSD_to_W3SI}, we prove the following relationship between the \KSetDisjointness{} conjecture and the \KplusOneSUMIndexing{} conjecture.

\begin{theorem}\label{thm:WKSD_to_WK+1SI}
The \KSetDisjointness{} conjecture implies the \KplusOneSUMIndexing{} conjecture
\end{theorem}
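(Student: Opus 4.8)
The plan is to adapt the bit-packing reduction of Theorem~\ref{thm:WSD_to_W3SI} to $k$ sets, replacing the single equality test ``$x=y$'' by a chain of $k-1$ equality tests, which is where the ``linear equations'' come in. Recall that in the $k=2$ case a common element of the two query sets was detected by placing $x$ into $A_1$, $M-x$ into $A_2$, and asking for a pair summing, in the value block, to $M$. For $k$ sets the chosen elements $x^{(1)}\in S_{i_1},\dots,x^{(k)}\in S_{i_k}$ lie in the common intersection exactly when the linear system $x^{(1)}=x^{(2)},\ x^{(2)}=x^{(3)},\ \dots,\ x^{(k-1)}=x^{(k)}$ holds, so I devote one dedicated block of every integer to each of these $k-1$ equations.

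In detail: given a \KSetDisjointness{} instance with $m$ sets, total size $N$, and largest element value $M\le N$ (WLOG after renaming), I build a \KplusOneSUMIndexing{} instance on $k$ arrays. Each integer is split into $k$ ``index blocks'' of $\lceil\log m\rceil$ bits and $k-1$ ``difference blocks'' $D_1,\dots,D_{k-1}$ of $\lceil\log N\rceil+O(1)$ bits, with a zero pad bit between consecutive blocks so that no block can generate a carry into the next. For every incidence $x\in S_a$ I add one integer to each array: the integer put into $A_1$ stores $a$ in index block $1$ and $x$ in $D_1$; for $2\le j\le k-1$ the integer put into $A_j$ stores $a$ in index block $j$, $M-x$ in $D_{j-1}$, and $x$ in $D_j$; the integer put into $A_k$ stores $a$ in index block $k$ and $M-x$ in $D_{k-1}$. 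A query on sets $S_{i_1},\dots,S_{i_k}$ is answered by querying \KplusOneSUMIndexing{} with the integer $z$ whose $j$-th index block equals $i_j$ and whose every difference block equals $M$.

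I then verify correctness. Because of the pad bits each block behaves independently, so any \KplusOneSUMIndexing{} solution $(x^{(1)},a^{(1)}),\dots,(x^{(k)},a^{(k)})$ summing to $z$ forces $a^{(j)}=i_j$ for all $j$ (hence each $x^{(j)}\in S_{i_j}$) and $x^{(j)}+(M-x^{(j+1)})=M$ in block $D_j$ (hence $x^{(j)}=x^{(j+1)}$) for all $j=1,\dots,k-1$; chaining the equalities yields a single element lying in $S_{i_1}\cap\dots\cap S_{i_k}$, and conversely any such common element gives a valid choice of summands. Each array contains exactly one integer per incidence, so it has $N$ integers, i.e.\ $n=N$, and a single \KplusOneSUMIndexing{} query resolves a \KSetDisjointness{} query in constant extra time. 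Consequently a \KplusOneSUMIndexing{} data structure using $\tilde O(n^{k-\Omega(1)})$ space with constant query time would yield a \KSetDisjointness{} data structure using $\tilde O(N^{k-\Omega(1)})$ space with constant query time, contradicting the \KSetDisjointness{} conjecture; taking the contrapositive proves the theorem.

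The main obstacle is the step that genuinely differs from the $k=2$ case: once $k\ge 3$, merely fixing the sum $x^{(1)}+\dots+x^{(k)}$ no longer forces the terms equal, so equality must be enforced pairwise through the $k-1$ difference blocks, and one must check that the $2k-1$ blocks can be laid out with enough width and padding that the arithmetic in each is isolated while the total length stays $O(k\log N)$ bits (a constant number of machine words for constant $k$). A secondary point to get right is the bookkeeping that $n=N$, not $n=m$, so that the sub-$n^k$ space regime of \KplusOneSUMIndexing{} lines up with the $\tilde\Omega(N^k)$ lower bound asserted for \KSetDisjointness{}.
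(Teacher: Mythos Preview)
Your proposal is correct and follows essentially the same bit-packing reduction as the paper: both encode $k$ set-index blocks and $k-1$ value blocks, use padding to isolate blocks, and set up $k-1$ linear equations whose simultaneous satisfaction forces all $k$ chosen elements to coincide. The only cosmetic difference is that you chain the equalities ($x^{(1)}=x^{(2)},\,x^{(2)}=x^{(3)},\dots$) whereas the paper uses a star centered at $x^{(1)}$ (placing $M-x$ in \emph{all} $k-1$ value blocks of the $A_1$ integer and $x$ in a single block of each $A_i$, $i\ge 2$, to enforce $x^{(1)}=x^{(i)}$); both arrangements work and yield the same parameters.
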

\begin{proof}

Given an instance of \KSetDisjointness{}, we construct an instance of \KplusOneSUMIndexing{} as follows.
Denote by $M$ the value of the largest element in the \SetDisjointness{} instance. Notice that we may assume that $M \leq N$ (otherwise we use a straightforward renaming). For every element $x\in U$ that is contained in at least one of the sets we create $k$ integers $x_1,x_2,...,x_k$, where each integer is represented by $k\lceil\log{m}\rceil + (k-1)\lceil\log{N}\rceil + 2k-1$ bits.

For integer $x_i$, if $i>1$ the $(k-1)\lceil \log{N} \rceil+k-1$ least significant bits are all set to zero, except for the bits in indices $(i-2)(\lceil\log{N}\rceil+1)+1,...,(i-1)(\lceil\log{N}\rceil+1)$ that represent the value of $x$. If $i=1$ the value of the bits in the indices $(j-1)(\lceil\log{N}\rceil+1)+1,...,j(\lceil\log{N}\rceil+1)$ is set to $M-x$ for all $1 \leq j \leq k-1$. The $k\lceil\log{m}\rceil+k$ following bits are all set to zero, except for the bits in indices $(i-1)(\lceil\log{m}\rceil+1)+1,...,i(\lceil\log{m}\rceil+1)$ which represent the index of the set containing $x$.

We now create an instance of \KplusOneSUMIndexing{} where the $j$th input array $A_j$ is the set of integers $x_j$ for all $x\in U$ that is contained in at least one set of our family. Thus, the size of each array is at most $N$. Now, given a \KSetDisjointness{} query $(i_1,i_2,...,i_k)$ we must decide if $S_{i_1} \cap S_{i_2} \cap ... \cap S_{i_k} = \emptyset$. To answer this query we will query the instance of \KplusOneSUMIndexing{} we have created with an integer $z$ whose binary representation is as follows: In the $(k-1)\lceil\log{N}\rceil+k-1$ least significant bits the value of the bits in the indices $(j-1)(\lceil\log{N}\rceil+1)+1,...,j(\lceil\log{N}\rceil+1)$ is set to $M$ for all $1 \leq j \leq k-1$. In the $k\lceil\log{m}\rceil+k$ following bits, the bits at locations $(j-1)(\lceil\log{m}\rceil+1)+1,...,j(\lceil\log{m}\rceil +1)$ represent $i_j$ (for $1 \leq j \leq k$). The rest of the bits are padding zero bits (in between representations of various $i_j$s and $M$s).

If $S_{i_1} \cap S_{i_2} \cap ... \cap S_{i_k} \neq \emptyset$ then by our construction it is straightforward to verify that the \KplusOneSUMIndexing{} query on $z$ will return that there is a solution. If $S_{i_1} \cap S_{i_2} \cap ... \cap S_{i_k} = \emptyset$ then at least for one $j\in[k-1]$ the sum of values in the bits in indices $(j-1)(\lceil\log{N}\rceil+1)+1,...,j(\lceil\log{N}\rceil+1)$ in the $(k-1)\lceil \log{N} \rceil+k-1$ least significant bits will not be $M$. This is because we can view each block of $\lceil\log{N}\rceil+1$ bits in the $(k-1)\lceil \log{N} \rceil+k-1$ least significant bits as solving a linear equation. This equation is of the form $M-x_1+x_i=M$ for every block $i-1$ where $2 \leq i \leq k$. The solution of each of these equations is $x_1=x_i$ for all $2 \leq i \leq k$. Consequently, a solution can be found only if there is a specific $x$ which is contained in all of the $k$ sets. Therefore, we get a correct answer to a \KSetDisjointness{} query by answering a \KplusOneSUMIndexing{} query.

Consequently, if for some specific constant $k$ there is a solution to the \KplusOneSUMIndexing{} problem with less than $\tilde{\Omega}(n^k)$ space and constant query time, then with this reduction we refute the \KSetDisjointness{} conjecture. \qed
\end{proof} 

\section{Directed Reachability Oracles as a Generalization of Set Disjointness Conjecture}

An open question which was stated by \Patrascu{} in~\cite{Patrascu11} asks if it is possible to preprocess a sparse directed graph in less than $\Omega(n^2)$ space so that \Reachability{} queries (given two query vertices $u$ and $v$ decide whether there is a path from $u$ to $v$ or not) can be answered efficiently. A partial answer, given in~\cite{Patrascu11}, states that for constant query time truly superlinear space is necessary. In the undirected case the question is trivial and one can answer queries in constant time using linear space. This is also possible for planar directed graphs (see Holm et al.~\cite{HRT14}).

We now show that \Reachability{} oracles for sparse graphs can serve as a generalization of the \SetDisjointness{} conjecture. We define the following parameterized version of \Reachability{}. In the \emph{\KReachability{} problem} the goal is to preprocess a directed sparse graph $G=(V,E)$ so that given a pair of distinct vertices $u,v \in V$ one can quickly answer whether there is a path from $u$ to $v$ consisting of at most $k$ edges. We prove that 2-\Reachability{} and \SetDisjointness{} are tightly connected.

\begin{lemma}
There is a linear time reduction from \SetDisjointness{} to 2-\Reachability{} and vice versa which preserves the size of the instance.
\end{lemma}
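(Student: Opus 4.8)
The plan is to exhibit two gadget reductions, one in each direction, each computable in linear time and each changing the size parameter only by a constant factor.

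\smallskip
\noindent\emph{From \SetDisjointness{} to 2-\Reachability{}.} Given a family $F=\{S_1,\dots,S_m\}$ over a universe $U$ with $\sum_i|S_i|=N$, I would build a three-layer digraph: a left vertex $a_i$ and a right vertex $b_i$ for every set $S_i$, and a middle vertex $e_x$ for every element $x$ occurring in some set, with an arc $a_i\to e_x$ whenever $x\in S_i$ and an arc $e_x\to b_j$ whenever $x\in S_j$. A disjointness query on $(S_i,S_j)$ is answered by the $2$-reachability query on the distinct vertices $(a_i,b_j)$. Since every arc runs from the left layer to the middle layer or from the middle layer to the right layer, every $a_i$--$b_j$ walk has length exactly two, so a walk of at most two arcs exists iff some $e_x$ has $x\in S_i$ and $x\in S_j$, i.e.\ iff $S_i\cap S_j\neq\emptyset$; keeping separate copies $a_i,b_i$ makes this correct even for a query naming the same set twice. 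The graph has $2m+|\bigcup_i S_i|=O(N)$ vertices and exactly $2N$ arcs, and it is sparse in the relevant regime: the $O(m^2)$-space answer table already refutes the \SetDisjointness{} conjecture unless $m=\Theta(N)$, so $|V|=\Theta(N)=\Theta(|E|)$. It is produced by one scan over the incidences.

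\smallskip
\noindent\emph{From 2-\Reachability{} to \SetDisjointness{}.} Given a sparse digraph $G=(V,E)$, I would set $U=V$ and create, for each vertex $u$, the out-ball $A_u=\{u\}\cup\{w:(u,w)\in E\}$ and the in-ball $B_u=\{u\}\cup\{w:(w,u)\in E\}$, taking the family to be $\{A_u\}_{u\in V}\cup\{B_u\}_{u\in V}$. A $2$-reachability query on distinct vertices $u,v$ is answered by the disjointness query on $(A_u,B_v)$. Correctness is a short case analysis on a witness $w\in A_u\cap B_v$: the conditions ``$w=u$ or $u\to w$'' and ``$w=v$ or $w\to v$'' yield a $u$--$v$ walk with at most two edges in every case except $w=u=v$, which cannot occur since $u\neq v$; conversely a length-$1$ walk is witnessed by $w=v$ and a length-$2$ walk by its midpoint. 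The family has $2n$ sets, universe size $n$, and total size $N=\sum_u(|A_u|+|B_u|)=2n+2|E|=O(n)=\Theta(n)$, and is built in one pass over $V$ and $E$.

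\smallskip
Both maps run in linear time and preserve size up to constants ($N\leftrightarrow\Theta(N)$ vertices and arcs), which is exactly what the statement asks. No step is genuinely hard; the only thing that needs care is getting the ``at most two edges'' correspondence tight — routing intersection witnesses through a dedicated middle layer so that no spurious length-$1$ shortcut is created in the first direction, and dually placing each query vertex inside its own in-ball/out-ball so that genuine length-$1$ edges are not missed in the second. I expect the most delicate (though still elementary) point to be the witness case analysis of the second reduction, verifying that $A_u\cap B_v\neq\emptyset$ is equivalent to the existence of a $u$--$v$ path of one or two edges with neither false positives nor false negatives, together with the two mild degeneracies (a self-query on one set, and the non-witness $w=u=v$ ruled out by $u\neq v$).
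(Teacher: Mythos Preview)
Your proposal is correct and follows essentially the same approach as the paper: a bipartite set/element gadget for the \SetDisjointness{} $\to$ 2-\Reachability{} direction, and in/out neighborhood sets for the reverse direction. The only differences are cosmetic---you use separate left and right copies $a_i,b_i$ of each set vertex where the paper uses a single copy with bidirectional arcs to the element layer, and you include the base vertex in both $A_u$ and $B_u$ where the paper includes it only in the out-set---but the underlying idea and the size accounting are the same.
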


\begin{proof}
Given a graph $G=(V,E)$ as an instance for 2-\Reachability{}, we construct a corresponding instance of \SetDisjointness{} as follows. For each vertex $v$ we create the sets $V_{in}=\{u | (u,v) \in E\}$ and $V_{out}=\{u | (v,u) \in E\} \cup \{v\}$. We have $2n$ sets and $2m+n$ elements in all of them ($|V|=n$ and $|E|=m$). Now, a query $u,v$ is reduced to determining if the sets $U_{out}$ and $V_{in}$ are disjoint or not. Notice, that the construction is done in linear time and preserves the size of the instance. In the opposite direction, we are given $m$ sets $S_1,S_2,...,S_m$ having $N$ elements in total $e_1,e_2,...,e_N$. We can create an instance of 2-\Reachability{} in the following way. For each set $S_i$ we create a vertex $v_i$. Moreover, for each element $e_j$ we create a vertex $u_j$. Then, for each element $e_j$ in a set $s_i$ we create two directed edges $(v_i, u_j)$ and $(u_j,v_i)$. These vertices and edges define a directed graph, which is preprocessed for 2-\Reachability{} queries. It is straightforward to verify that the disjointness of $S_i$ and $S_j$ is equivalent to determining if there is a path of length at most $2$ edges from $v_i$ to $v_j$. Moreover, the construction is done in linear time and preserves the size of the instance.
\qed
\end{proof}

Furthermore, we consider \KReachability{} for $k \geq 3$. First we show an upper bound on the tradeoff between space and query time for solving \KReachability{}.

\begin{lemma}\label{lem:k_reach_UB}
There exists a data structure for \KReachability{} with $S$ space and $T$ query time  such that $ST^{2/(k-1)}=O(n^2)$.
\end{lemma}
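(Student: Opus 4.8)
The plan is to generalize the high-degree/low-degree splitting idea that underlies the Cohen–Porat \SetDisjointness{} data structure (the $k=2$ case, where $ST = O(n^2)$ since the exponent $2/(k-1)$ equals $2$). Fix a threshold parameter $\Delta$, to be chosen at the end. Call a vertex \emph{heavy} if its total degree (in plus out) in $G$ exceeds $\Delta$, and \emph{light} otherwise. Since $G$ is sparse, $|E| = O(n)$, so the number of heavy vertices is $O(n/\Delta)$. The data structure stores two things: (i) for every ordered pair $(a,b)$ of heavy vertices, a single bit recording whether there is a path of at most $k$ edges from $a$ to $b$; this costs $O((n/\Delta)^2)$ space; and (ii) the adjacency lists of $G$ themselves, in $O(n)$ space, so that from any vertex we can enumerate its out-neighbors (and in-neighbors) in time proportional to their number.

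To answer a query $(u,v)$, I would first handle short detours through light vertices by a bounded branching search, and defer the ``middle'' of the path to the precomputed heavy-pair table. Concretely: do a depth-limited exploration from $u$ that only branches at light vertices — from $u$ we expand all out-neighbors (at most $\deg(u)$ of them if $u$ is light, but we must also allow $u$ heavy, in which case we still expand it, treating the first step specially), and whenever we reach a light vertex with fewer than $k$ edges used so far we continue expanding; whenever we reach a heavy vertex $a$ we stop and record the triple (heavy vertex $a$, number of edges used to reach it). Symmetrically explore backwards from $v$, recording heavy vertices $b$ together with the number of remaining edges budgeted to get from $b$ to $v$. Then $u \rightsquigarrow v$ in $\le k$ edges iff either (a) the forward light-search already reached $v$ within $k$ edges, or (b) there is a recorded pair $(a, i)$ from the forward search and $(b, j)$ from the backward search with $i + j \le k$ and $a \rightsquigarrow b$ in $\le k - i - j$ edges — but since the table only stores the ``$\le k$ edges'' bit, I need a table indexed also by the allowed length, i.e. store for each ordered heavy pair $(a,b)$ and each $\ell \le k$ whether $a \rightsquigarrow b$ in $\le \ell$ edges. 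Since $k$ is a fixed constant this multiplies the table size only by $O(1)$, keeping it $O((n/\Delta)^2)$. One must also be slightly careful that a path may alternate heavy/light several times, but the key point is that any path with $\le k$ edges has at most $k$ vertices, so it visits at most $k$ heavy vertices; decomposing at the \emph{first} and \emph{last} heavy vertex on the path reduces any query to: a light-only prefix of length $< k$ from $u$, a single heavy-to-heavy lookup, and a light-only suffix of length $< k$ to $v$ (the lookup already accounts for any heavy vertices strictly inside).

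The query time is dominated by the light-only searches. A light vertex has degree $\le \Delta$, and the search tree from $u$ has depth $< k$ and branching $\le \Delta$ at each light node (and we stop at heavy nodes), so it visits $O(\Delta^{k-1})$ vertices; same backwards. Matching heavy endpoints costs $O(\Delta^{k-1})$ table lookups plus possibly sorting the $O(\Delta^{k-1})$ recorded heavy vertices, all within $\tilde O(\Delta^{k-1})$ time. So $T = \tilde O(\Delta^{k-1})$ and $S = O(n^2/\Delta^2 + n)$. Setting $\Delta = T^{1/(k-1)}$ gives $S = O(n^2 / T^{2/(k-1)})$, i.e. $S T^{2/(k-1)} = O(n^2)$, as claimed (and the additive $O(n)$ is absorbed since $S \ge n$ for the bound to be nontrivial). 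The main obstacle I anticipate is getting the path-decomposition bookkeeping exactly right — in particular making sure the depth-limited forward and backward searches, together with the length-parameterized heavy-pair table, correctly cover \emph{all} $\le k$-edge paths without double counting and without missing paths that start or end at a heavy vertex, or paths that stay entirely within light vertices; this is a finite case analysis driven by "$\le k$ edges $\Rightarrow \le k$ vertices $\Rightarrow \le k$ heavy vertices on the path," but it needs to be spelled out carefully.
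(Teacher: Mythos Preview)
Your high-level plan---threshold $\Delta$, tabulate $\ell$-reachability between heavy pairs for all $\ell\le k$, and restrict branching to light vertices---is right and matches the paper. But the query algorithm you propose (a forward light-only search from $u$, a backward one from $v$, then cross-match the recorded heavy endpoints) has a genuine gap beyond the bookkeeping you flag. The forward search can record $\Theta(\Delta^{k-1})$ heavy pairs $(a,i)$ and the backward search another $\Theta(\Delta^{k-1})$ pairs $(b,j)$; testing ``$a\rightsquigarrow b$ in $\le k-i-j$ edges'' is one table lookup \emph{per pair}, so your case~(b) costs $\Theta(\Delta^{2(k-1)})$, not $O(\Delta^{k-1})$---sorting the two lists does not help, because the predicate depends on the pair and not on any order. (Your correctness worries are also real and not merely cosmetic: if the only heavy vertex on a length-$k$ path is $u$ itself, your depth-$(k{-}1)$ backward search never reaches a heavy vertex along that path, so neither (a) nor (b) fires; the all-light case is similar.)

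The paper avoids both problems with a \emph{one-sided recursion} rather than a two-sided search. For each $\ell=2,\dots,k$ it stores the heavy--heavy $\ell$-reachability matrix together with an $(\ell{-}1)$-Reachability oracle. To answer an $\ell$-query $(u,v)$: if both endpoints are heavy, look up the $\ell$-matrix; otherwise pick whichever endpoint is light---say $u$---and issue $(\ell{-}1)$-queries $(w,v)$ for each of the $\le\Delta$ out-neighbors $w$ of $u$ (symmetrically, in-neighbors of $v$ if $v$ is the light one). This yields $Q(\ell)=\Delta\cdot Q(\ell{-}1)+O(1)$ with $Q(1)=O(1)$, hence $T=O(\Delta^{k-1})$, and $S(\ell)=n^2/\Delta^2+S(\ell{-}1)$ with $S(1)=O(n)$, hence $S=O((k{-}1)\,n^2/\Delta^2)$. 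The crucial difference from your scheme is that at every level of the recursion one of the two endpoints stays \emph{fixed}, so you branch $\Delta$-fold $k{-}1$ times but never form a cross product; this is also exactly what makes the ``one endpoint heavy, rest of the path light'' case fall out automatically.
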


\begin{proof}
Let $\alpha>0$ be an integer parameter to be set later. Given a directed graph $G=(V,E)$, we call vertex $v \in V$ a \emph{heavy vertex} if  $deg(v)>\alpha$  and a vertex $u \in V$a \emph{light vertex} if $deg(u)\leq \alpha$. Notice that the number of heavy vertices is at most $n/\alpha$. For all heavy vertices in $V$ we maintain a matrix containing the answers to any \KReachability{} query between two heavy vertices. This uses $O(n^2/\alpha^2)$ space.

Next, we recursively construct a data structure for \KmOneReachability{}. Given a query $u,v$, if both vertices are heavy then the answer is obtained from the matrix. Otherwise, either $u$ or $v$ is light vertex. Without loss of generality, say $u$ is a light vertex. We consider each vertex $w \in N_{out}(u)$ ($N_{out}(u)=\{v| (u,v) \in E \}$) and query the \KmOneReachability{} data structure with the pair $w,v$. Since $u$ is a light node, there are no more than $\alpha$ queries. One of the queries returns a positive answer if and only if there exists a path of length at most $k$ from $u$ to $v$.

Denote by $S(k,n)$ the space used by our \KReachability{} oracle on a graph with $n$ vertices and denote by $Q(k,n)$ the corresponding query time. In our construction we have $S(k,n) = n^2/ \alpha^2 + S(k-1,n)$ and $Q(k,n) = \alpha Q(k-1,n)+O(1)$. For $k=1$ it is easy to construct a linear space data structure using hashing so that queries can be answered in constant time. Thus, $S = S(k,n) = O((k-1)n^2/ \alpha^2)$ and $T = Q(k,n)= O(\alpha^{k-1})$.
\qed
\end{proof}

Notice that for the case of $k=2$ the upper bounds from Lemma~\ref{lem:k_reach_UB} exactly match the tradeoff of the Strong \SetDisjointness{} Conjecture ($ST^2=\tilde{O}(n^2)$). We expand this conjecture by considering the tightness of our upper bound for \KReachability{}, which then leads to some interesting consequences with regard to distance oracles.

\begin{conjecture}\label{DCRH}
\textbf{Directed \KReachability{} Conjecture}. Any data structure for the \KReachability{} problem with query time $T$ must use $S =\tilde{\Omega}(\frac{n^2}{T^{2/(k-1)}})$ space.
\end{conjecture}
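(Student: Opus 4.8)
The plan is not to \emph{prove} this statement outright: like the \SetDisjointness{} and \ThreeSUMIndexing{} conjectures, the Directed \KReachability{} conjecture is offered as a hardness hypothesis rather than a theorem, and an unconditional proof would resolve the long-standing open problem on reachability oracles. What I would do instead is assemble three indications that the upper bound of Lemma~\ref{lem:k_reach_UB} is tight, so that the conjecture becomes a credible companion to the APSP hardness hypothesis. Each indication either nails down a boundary value of the parameter $k$ or ties the conjecture to a statement that is already believed.

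First I would pin down the base case $k=2$. The preceding lemma gives a size-preserving, linear-time reduction in \emph{both} directions between \SetDisjointness{} and 2-\Reachability{}, so the $k=2$ instance of Conjecture~\ref{DCRH}, which reads $S\cdot T^2 = \tilde{\Omega}(n^2)$, is literally equivalent to the Strong \SetDisjointness{} conjecture (Conjecture~\ref{conj:SSDH}). This anchors the conjecture at one extreme of its range and justifies viewing \KReachability{} as a parameterized generalization of \SetDisjointness{}; it also explains why we expect the same ``$S$ times (query work)$^2$'' shape, now spread across $k-1$ recursive layers, each contributing a factor $T^{1/(k-1)}$.

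Second I would examine the super-constant regime. As $k\to\infty$ the exponent $2/(k-1)\to 0$, so $\tilde{\Omega}(n^2/T^{2/(k-1)})$ degrades to $\tilde{\Omega}(n^2)$ for any polynomial query time $T$: for non-constant path lengths the smooth tradeoff collapses into the dichotomy ``store all $n^2$ answers, or recompute by graph traversal.'' This is exactly the regime addressed by \Patrascu{}~\cite{Patrascu11}, who proved that a general \Reachability{} oracle with constant query time needs truly superlinear space and left open whether $o(n^2)$ space ever suffices; Conjecture~\ref{DCRH} is consistent with that partial result and extrapolates it smoothly down to constant $k$. Third, I would route the conjecture through distance oracles: embedding a directed $k$-reachability instance into an undirected graph whose approximate distances encode whether a $\le k$-edge path exists shows that a stretch-less-than-$(1+2/k)$ distance oracle must obey $S\cdot T^{2/(k-1)} = \tilde{\Omega}(n^2)$. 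For $k=3$ this predicts $S\cdot T = \tilde{\Omega}(n^2)$ for stretch-below-$5/3$ oracles, matching the stretch-$5/3$ upper bound $S\cdot T=O(n^2)$ of Agarwal~\cite{Agarwal14} and the stretch-$2$, stretch-$3$ tradeoffs of Agarwal et al.~\cite{AGH12,AGH11}, which is strong evidence that the exponent $2/(k-1)$ is the right one.

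The main obstacle is that these three consistency checks do not combine into an unconditional proof, and in fact I do not expect one to exist via the tools of this paper. One cannot derive Conjecture~\ref{DCRH} for $k\ge 3$ from the \SetDisjointness{} or \ThreeSUMIndexing{} conjectures: the extra intermediate vertices on a length-$k$ path give a lower-bound adversary more freedom than a single set-intersection query, which is precisely the phenomenon responsible for the gap between the smooth \SetDisjointness{} tradeoff and the ``two far extremes'' behavior discussed for \ThreeSUMIndexing{}. Consequently the honest ``proof'' is the package consisting of the matching upper bound in Lemma~\ref{lem:k_reach_UB} together with the three indications above — the $k=2$ equivalence, the $k\to\infty$ agreement with \Patrascu{}'s reachability-oracle bound, and the distance-oracle corollary matching Agarwal's construction — which is what we would present in place of a derivation.
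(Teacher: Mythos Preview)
Your proposal is correct and mirrors the paper's own treatment: the statement is a conjecture, not a theorem, and the paper supports it with precisely the three indications you list --- the $k=2$ equivalence with Strong \SetDisjointness{}, the collapse to $\tilde{\Omega}(n^2)$ for non-constant $k$ in line with \Patrascu{}'s reachability-oracle question, and the distance-oracle lower bound that nearly matches Agarwal's stretch-$5/3$ upper bound. There is nothing to add; your write-up is essentially the paper's argument.
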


Notice that when $k$ is non-constant then by our upper bound $\tilde{\Omega}(n^2)$ space is necessary independent of the query time. This fits nicely with what is currently known about the general question of \Reachability{} oracles: either we spend $n^2$ space and answer queries in constant time or we do no preprocessing and then answer queries in linear time. This leads to the following conjecture.

\begin{conjecture}\label{DRH}
\textbf{Directed Reachability Hypothesis}. Any data structure for the \Reachability{} problem must either use $\tilde \Omega(n^2)$ space, or linear query time.
\end{conjecture}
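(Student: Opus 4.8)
The statement is a conjecture, so the purpose of a sketch here is to explain why it is the natural limiting form of the Directed \KReachability{} Conjecture (Conjecture~\ref{DCRH}) and to collect the supporting evidence. Morally, Conjecture~\ref{DRH} is the ``$k=\Theta(n)$'' instance of Conjecture~\ref{DCRH}: a path uses at most $n-1$ edges, so unrestricted \Reachability{} is \KReachability{} for $k$ as large as we like, and the exponent $2/(k-1)$ governing the space--time tradeoff degenerates to $0$, leaving only the two extremes $S=\tilde\Omega(n^2)$ or $T=\tilde\Omega(n)$.

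To make this precise conditionally on Conjecture~\ref{DCRH}, the plan is a layering reduction. Given a sparse directed $G=(V,E)$ and a parameter $k$, build $G_k$ on vertex set $V\times\{0,1,\dots,k\}$ with a ``real'' arc $(u,i)\to(v,i+1)$ for every $(u,v)\in E$ and a ``wait'' arc $(v,i)\to(v,i+1)$ for every $v$ and every $i<k$. Every arc advances the layer by exactly one, so $(u,0)$ reaches $(v,k)$ in $G_k$ iff there is a walk, hence a path, of at most $k$ edges from $u$ to $v$ in $G$. Since $G_k$ has $n(k+1)$ vertices and $O((m+n)k)=O(nk)$ arcs it is again sparse, so a \Reachability{} oracle with space $S(\cdot)$ and query time $T(\cdot)$ yields a \KReachability{} oracle on $n$-vertex graphs with space $S(nk)$ and query time $T(nk)$. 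Choosing $k=k(n)\to\infty$ slowly so that $n':=nk=n^{1+o(1)}$, Conjecture~\ref{DCRH} forces $S(n')=\tilde\Omega((n')^{2}/T(n')^{2/(k-1)})=\tilde\Omega((n')^{2-o(1)})$; translating back, any sparse \Reachability{} oracle with truly sublinear query time must use essentially quadratic space, which is Conjecture~\ref{DRH}.

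The main obstacle is calibrating $k$. The reduction pays a multiplicative $k$ blowup, so one needs $2/(k-1)$ small enough to wipe out any fixed polynomial query time yet $k$ small enough that $(nk)^{2-o(1)}$ is still $\tilde\Omega(n^2)$; more seriously, for super-constant $k$ and query time $T=\Theta(n)$ the trivial graph-search oracle ($O(n)$ space, $O(n)$ query time) already contradicts the literal bound of Conjecture~\ref{DCRH}, so the argument is only valid in the truly-sublinear-query regime --- which is exactly the regime Conjecture~\ref{DRH} concerns. Pinning down the intended statement of Conjecture~\ref{DCRH} for non-constant $k$ (for instance restricting to $T\le n^{1-\epsilon}$) is the crux; once that is fixed the rest is routine. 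Independently of Conjecture~\ref{DCRH}, three facts lend the Hypothesis credence: \Patrascu{}~\cite{Patrascu11} shows constant query time forces truly superlinear space; the trivial search oracle shows the ``linear query time'' clause cannot be removed; and the $k=2$ equivalence with \SetDisjointness{} ties the base layer to a well-studied conjecture.
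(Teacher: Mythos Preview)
This is a conjecture, and the paper does not attempt to prove it; it offers only one line of informal motivation. Specifically, the paper simply observes that as $k$ grows, the exponent $2/(k-1)$ in Conjecture~\ref{DCRH} tends to $0$, so the conjectured bound degenerates to $\tilde\Omega(n^2)$ space irrespective of $T$; it then notes that this matches the two trivial extremes known for general \Reachability{} oracles (store the full matrix, or do a fresh traversal), and states Conjecture~\ref{DRH} on that basis. There is no reduction, layered or otherwise, and no formal derivation from Conjecture~\ref{DCRH}.

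Your sketch therefore goes considerably further than the paper. The layering construction you propose is sound and is in fact the same gadget the paper uses later to reduce \KReachability{} to approximate distance oracles, so it is natural here too. What you gain over the paper is an explicit, conditional link from Conjecture~\ref{DCRH} to Conjecture~\ref{DRH} in the truly-sublinear-query regime; what you pay is exactly the calibration issue you flag: the literal statement of Conjecture~\ref{DCRH} is formulated for constant $k$, and for super-constant $k$ the trivial $O(n)$-space, $O(n)$-time oracle already violates the stated tradeoff, so any formal argument has to first restrict to $T\le n^{1-\epsilon}$. The paper sidesteps this entirely by keeping the discussion at the level of heuristic limiting behavior. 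Your three independent pieces of evidence (P\v{a}tra\c{s}cu's superlinear-space lower bound for constant query time, the trivial upper bound, and the $k=2$ equivalence with \SetDisjointness{}) are all consistent with the paper's surrounding discussion.
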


The conjecture states that in the general case of \Reachability{} there is no full tradeoff between space and query time. We believe the conjecture is true even if the path is limited to lengths of some non-constant number of edges. 

\section{Distance Oracles and Directed Reachability}

There are known lower bounds for constant query time distance oracles based on the \SetDisjointness{} hypothesis. Specifically, Cohen and Porat~\cite{CP10} showed that stretch-less-than-2 oracles need $\Omega(n^2)$ space for constant queries. Patrascu et al.~\cite{PRT12} showed a conditional space lower bound of $\Omega(m^{5/3})$ for constant-time stretch-2 oracles. Applying the Strong \SetDisjointness{} conjecture to the same argument as in~\cite{CP10} we can prove that for stretch-less-than-2 oracles the tradeoff between $S$ (the space for the oracle) and $T$ (the query time) is by $S \times T^2 = \Omega(n^2)$.

Recent effort was taken toward constructing compact distance oracles where we allow non-constant query time. For stretch-2 and stretch-3 Agarwal et al.~\cite{AGH11}~\cite{AGH12} achieves a space-time tradeoff of $S \times T = O(n^2)$ and $S \times T^2 = O(n^2)$, respectively, for sparse graphs. Agarwal~\cite{Agarwal14} also showed many other results for stretch-2 and below. Specifically, Agarwal showed that for any integer $k$ a stretch-(1+1/k) oracle exhibits the following space-time tradeoff: $S \times T^{1/k} = O(n^2)$. Agarwal also showed a stretch-(1+1/(k+0.5)) oracle that exhibits the following tradeoff: $S \times T^{1/(k+1)} = O(n^2)$. Finally, Agarwal gave a stretch-(5/3) oracle that achieves a space-time tradeoff of $S \times T = O(n^2)$. Unfortunately, no lower bounds are known for non-constant query time.

Conditioned on the directed \KReachability{} conjecture we prove the following lower bound.

\begin{lemma}
Assume the directed \KReachability{} conjecture holds.
Then stretch-less-than-$(1+2/k)$ distance oracles with query time $T$ must use $S \times T^{2/(k-1)} = \tilde{\Omega}(n^2)$ space.
\end{lemma}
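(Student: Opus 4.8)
The plan is to reduce \KReachability{} to the undirected distance-oracle problem via a layered-graph construction, so that a stretch-less-than-$(1+2/k)$ oracle can decide whether $u$ reaches $v$ within $k$ edges. This generalizes the $k=2$ case, which is exactly the Cohen--Porat style reduction from \SetDisjointness{} to stretch-less-than-$2$ oracles.

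First I would take a directed sparse graph $G=(V,E)$, an instance of \KReachability{}, and build an undirected unweighted graph $H$ on vertex set $V\times\{0,1,\dots,k\}$, i.e.\ $k+1$ layered copies of $V$. For every directed edge $(a,b)\in E$ and every $0\le i<k$, add an undirected edge between $(a,i)$ and $(b,i+1)$; additionally, for every vertex $a$ and every $0\le i<k$, add a ``stay'' edge between $(a,i)$ and $(a,i+1)$. Since $G$ is sparse and $k$ is constant, $H$ has $\Theta(n)$ vertices and $\Theta(n)$ edges, so it is again a sparse graph on $\Theta(n)$ vertices, and the reduction runs in linear time.

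Next I would establish the distance gap. The graph $H$ is bipartite, with sides the even-indexed and odd-indexed layers, so every edge changes the layer index by exactly $1$; hence $d_H((u,0),(v,k))\ge k$ always, and $d_H((u,0),(v,k))\equiv k\pmod 2$. A path of length exactly $k$ from $(u,0)$ to $(v,k)$ must be monotone in the layer index, and its projection to $V$ is a walk of length $\le k$ from $u$ to $v$ in $G$ (stay edges projecting to ``no move''), hence a directed path of $\le k$ edges; conversely such a directed path lifts to a monotone length-$k$ path after padding with stay edges. Therefore $d_H((u,0),(v,k))=k$ if $u$ reaches $v$ within $k$ edges in $G$, and $d_H((u,0),(v,k))\ge k+2$ otherwise (the next value of the correct parity, or $+\infty$). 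Querying a stretch-less-than-$(1+2/k)$ oracle on $((u,0),(v,k))$ returns $\hat d<(1+2/k)\cdot k=k+2$ in the first case and $\hat d\ge k+2$ in the second, so testing $\hat d<k+2$ answers the \KReachability{} query. Thus an oracle for $H$ with space $S$ and query time $T$ yields a \KReachability{} data structure for $G$ with space $S+O(n)$ and query time $O(T)$, and the Directed \KReachability{} Conjecture (Conjecture~\ref{DCRH}) forces $S\cdot T^{2/(k-1)}=\tilde{\Omega}(n^2)$.

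The step needing the most care is passing from a \emph{directed} reachability problem to an \emph{undirected} distance problem without blowing up the instance size by more than a constant factor: the layering is exactly what bridges directedness, and its bipartite structure is what produces the clean additive gap of $2$ between the ``reachable-within-$k$'' and ``not'' cases, which is precisely the gap a stretch-$(1+2/k)$ oracle can resolve. Beyond that, the remaining points are routine: checking the boundary arithmetic (distance $k+1$ is excluded by parity, and $(1+2/k)k=k+2$), and noting that the $O(n)$ bookkeeping for the vertex correspondence is absorbed since $S=\Omega(n)$.
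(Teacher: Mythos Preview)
Your proof is correct and follows the same layered-graph approach as the paper. Your addition of ``stay'' edges $(a,i)$--$(a,i+1)$ is a helpful refinement the paper's terse proof omits: it is needed to handle the ``at most $k$'' clause in the \KReachability{} definition (so that a directed path of length $j<k$ still yields distance exactly $k$ in $H$), and your explicit parity argument for the $k$-versus-$(k+2)$ gap makes precise what the paper only asserts.
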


\begin{proof}
Given a graph $G=(V,E)$ for which we want to preprocess for \KReachability{}, we create a layered graph with $k$ layers where each layer consists of a copy of all vertices of $V$. Each pair of neighboring layers is connected by a copy of all edges in $E$. We omit all directions from the edges. For every fixed integer $k$, the layered graph has $O(|V|)$ vertices and $O(|E|)$ edges. Next, notice that if we construct a distance oracle that can distinguish between pairs of vertices of distance at most $k$ and pairs of vertices of distance at least $k+2$, then we can answer \KReachability{} queries. Consequently, assuming the \KReachability{} conjecture we have that $S \times T^{2/(k-1)} = \Omega(n^2)$ for stretch-less-than-$(1+2/k)$ distance oracles (For $k=2$ this is exactly the result we get by the \SetDisjointness{} hypothesis).
\qed
\end{proof}

Notice, that the stretch-(5/3) oracle shown by Agarwal~\cite{Agarwal14} achieves a space-time tradeoff of $S \times T = O(n^2)$. Our lower bound is very close to this upper bound since it applies for any distance oracle with stretch-less-than-$(5/3)$, by setting $k=3$.

\section{SETH and Orthogonal Vectors Space Conjectures}\label{sec:SETH_OV}

Solving SAT using $O(2^n)$ time where $n$ is number of variables in the formula can be easily done using only $O(n)$ space. However, the question is how can we use space in the case that we have only a partial assignment of $R$ variables and we would like to quickly figure out whether this partial assignment can be completed to a full satisfying assignment or not. On one end, by using just $O(n)$ space we can answer queries in $O(2^{n-R})$ time. On the other end, we can save the answers to all possible queries using $O(2^R)$ space. It is not clear if there is some sort of a tradeoff in between these two. A related problem is the problem of Orthogonal Vectors (OV). In this problem one is given a collection of $n$ vectors of length $O(\log{n})$ and need to answer if there are two of them which are orthogonal to one another. A reduction from SETH to OV was shown in~\cite{Williams05}. By this reduction given a k-CNF formula of $n$ variables one can transform it using $O(2^{\epsilon n})$ time to $O(2^{\epsilon n})$ instances of OV in which the vectors are of length $2f(k,\epsilon)\log{n}$ (for any $\epsilon > 0$, where $f(k,\epsilon)n$ is the number of clauses of each sparse formula represented by one instance of OV). This reduction leads to the following conjecture regarding OV, which is based on SETH: There is no algorithm that, for every $c \geq 1$, solves the OV problem on $n$ boolean vectors of length $c\log{n}$ in $\tilde{O}(n^{2-\Omega(1)})$ time.

We can consider a data structure variant of the OV problem, which we call OV indexing. Given a list of $n$ boolean vectors of length $c\log{n}$ we should preprocess them and create a suitable data structure. Then, we answer queries of the following form: Given a vector $v$, is there a vector in the list which is orthogonal to $v$?

We state the following conjecture which is the space variant of the well-studied OV (time) conjecture:
\begin{conjecture}
\textbf{Orthogonal Vectors Indexing Hypothesis}: There is no algorithm for every $c \geq 1$  that solves the OV indexing problem with $\tilde{O}(n^{2 -\Omega(1)})$ space and truly sublinear query time.
\end{conjecture}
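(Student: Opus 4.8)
Because the statement is framed as a conjecture rather than a theorem, a ``proof'' here can only mean reducing it to a more widely believed assumption; the natural target is the OV (time) conjecture recalled just above, which is itself implied by SETH. The plan is to show that any data structure violating the Orthogonal Vectors Indexing Hypothesis yields an $\tilde{O}(n^{2-\Omega(1)})$-time algorithm for OV, contradicting that conjecture.

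I would use the obvious self-reduction. Given an OV instance of $n$ boolean vectors of length $c\log n$, first run the hypothetical OV-indexing data structure on these vectors in the preprocessing phase, and then issue $n$ queries, one per input vector, each asking whether that vector has an orthogonal partner in the list; the instance is a ``yes'' instance iff at least one query answers ``yes.'' If the data structure uses $S=\tilde{O}(n^{2-\epsilon})$ space and answers queries in truly sublinear time $T=\tilde{O}(n^{1-\delta})$, the query phase costs $n\cdot T=\tilde{O}(n^{2-\delta})$, which is truly subquadratic, and together with a preprocessing phase of size $S$ this would refute the OV time conjecture (hence SETH).

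The hard part will be the gap between the data-structure model and the RAM time model: a structure occupying $\tilde{O}(n^{2-\epsilon})$ cells need not be \emph{constructible} in $\tilde{O}(n^{2-\epsilon})$ time, whereas the OV time conjecture constrains the running time of an algorithm that sees the whole input, not the query time of a structure built by an unbounded preprocessor. To push the argument through one would need to either restrict attention to data structures whose preprocessing time is polynomial in their space (the regime in which the upper bounds of this paper live) or inspect directly whatever construction realizes a claimed tradeoff. A second, softer obstacle is that — as this paper already notes for the time analogues — SETH-style hypotheses are only loosely tied even to one another, so such a reduction is not expected to connect the OV Indexing Hypothesis to the \SetDisjointness{} or \ThreeSUMIndexing{} conjectures.

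Absent a preprocessing-time assumption, the fallback is a plausibility argument. The two trivial regimes are $O(n)$ space with $\tilde{O}(n)$ query time (store the list and rescan) and $\tilde{O}(n^2)$ space with $\tilde{O}(1)$ query time (precompute, over the $\mathrm{poly}(n)$ relevant bit-patterns, whether an orthogonal vector exists) — exactly the $2^{n-R}$-versus-$2^{R}$ dichotomy noted here for completing a partial SAT assignment. The conjecture asserts there is no genuine middle ground, consistent both with the belief that OV has no truly subquadratic algorithm and with the ``extreme behavior'' that appears elsewhere in this paper, for instance \KReachability{} with non-constant $k$.
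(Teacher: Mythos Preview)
The paper does not prove this statement at all: it is posed purely as a conjecture, with no reduction to any other hypothesis. The only accompanying content is the remark that the authors believe the conjecture holds \emph{even if superpolynomial preprocessing time is allowed}, and perhaps even for some fixed constant $c$ slightly above~$2$.

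Your instinct to try to derive it from the OV time conjecture is natural, and you correctly put your finger on why that route does not go through: a data structure using $\tilde{O}(n^{2-\epsilon})$ space need not be \emph{buildable} in $\tilde{O}(n^{2-\epsilon})$ time, so the self-reduction ``preprocess, then ask $n$ queries'' does not yield a subquadratic-time OV algorithm without an extra assumption on preprocessing time. The paper's own remark about superpolynomial preprocessing makes this explicit: the authors are deliberately \emph{not} grounding the conjecture in the OV time conjecture, since under unbounded preprocessing that reduction is vacuous. In other words, the Orthogonal Vectors Indexing Hypothesis is being put forward as an independent space-hardness assumption, in the same spirit as the Strong \ThreeSUMIndexing{} and Strong \SetDisjointness{} conjectures, rather than as a corollary of SETH. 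Your fallback plausibility discussion (the two trivial extremes, the analogy with partial SAT assignments and with non-constant-$k$ reachability) is therefore the right register here --- that is essentially all the paper offers as well.
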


We note that we believe that the last conjecture is true even if we allow superpolynomial preprocessing time. Moreover, it seems that it also may be true even for some constant $c$ slightly larger than 2. 

\section{Space Requirements for Boolean Matrix Multiplication}

Boolean Matrix Multiplication(BMM) is one of the most fundamental problems in Theoretical Computer Science. The question of whether computing the Boolean product of two Boolean matrices of size $ n \times n$ is possible in $O(n^2)$ time is one of the most intriguing open problems. Moreover, finding a \emph{combinatorial} algorithm for BMM taking $O(n^{3-\epsilon})$ time for some $\epsilon >0$ is considered to be impossible to do with current algorithmic techniques.

We focus on the following data structure version of BMM, preprocess two $n \times n$ Boolean matrices $A$ and $B$, such that given a query $(i,j)$ we can quickly return the value of $c_{i,j}$ where $C=\{c_{i,j}\}$ is the Boolean produce $A$ and $B$.
Since storing all possible answers to queries will require $\theta(n^2)$ space in the worst case, we focus on the more interesting scenario where we have only $O(n^{2-\Omega(1)})$ space to store the outcome of the preprocessing stage. In case the input matrices are dense (the number of ones and the number of zeroes are both $\theta(n^2)$) it seems that this can be hard to achieve as storing the input matrices alone will take $\theta(n^2)$ space. So we consider a complexity model, which we call the \emph{read-only input model},  in which storing the input is for free (say on read-only memory), and the space usage of the data structure is only related to the additional space used. We now demonstrate that BMM in the read-only input model is equivalent to \SetDisjointness{}.

\begin{lemma}
BMM in the read-only input model and \SetDisjointness{} are equivalent.
\end{lemma}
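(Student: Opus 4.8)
The plan is to establish the equivalence by exhibiting two reductions, each of which preserves both the instance size and the space/query-time behavior (up to constant or polylogarithmic factors). Recall that in the read-only input model the two $n\times n$ Boolean matrices $A$ and $B$ are given for free on read-only memory, and a query $(i,j)$ asks for the value $c_{i,j}=\bigvee_{\ell}(A_{i,\ell}\wedge B_{\ell,j})$; the only space we account for is the additional space built in preprocessing.

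First I would reduce BMM (read-only) to \SetDisjointness{}. For each row $i$ of $A$ define the set $R_i=\{\ell : A_{i,\ell}=1\}$, and for each column $j$ of $B$ define the set $C_j=\{\ell : B_{\ell,j}=1\}$. These $2n$ sets all live in the universe $U=[n]$, and their total size is $N=\sum_i|R_i|+\sum_j|C_j|$, which is exactly the number of ones in $A$ plus the number of ones in $B$. The key observation is that $c_{i,j}=1$ if and only if $R_i\cap C_j\neq\emptyset$, so a \SetDisjointness{} query on $(R_i,C_j)$ answers the BMM query on $(i,j)$. The sets themselves need not be stored — they are implicit in the read-only matrices — so the extra space of the resulting data structure equals the space of the \SetDisjointness{} structure, and the query time is unchanged. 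Conversely, given a \SetDisjointness{} instance with $m$ sets $S_1,\dots,S_m$ over a universe with $N$ total elements, form the $N\times m$ Boolean matrix $A$ whose $(\ell,i)$ entry is $1$ iff element $\ell\in S_i$, and set $B=A^T$ (this is just a relabeling, so it can also be treated as read-only). Then the $(i,j)$ entry of $A^T A$ read off as a BMM query is nonzero iff $S_i\cap S_j\neq\emptyset$. Again the reduction is linear time, preserves instance size, and carries over the space/time tradeoff verbatim.

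Having both directions, I would then note that since the read-only input is not counted against the space budget in either model (the \SetDisjointness{} structure also gets to keep the input sets for free in the usual formulation, or equivalently the renaming makes them implicit), the two problems have exactly the same achievable space/query-time pairs, up to constant factors from the two bit-representations. In particular any data structure beating the Strong \SetDisjointness{} tradeoff $S=\tilde\Omega(N^2/T^2)$ would yield one beating it for BMM in the read-only model and vice versa.

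The main obstacle I anticipate is bookkeeping the size parameters so that the equivalence is genuinely tight: the natural BMM instance is parameterized by $n$, whereas \SetDisjointness{} is parameterized by $N$ (total element count), and these coincide only in the worst (dense) case. I would therefore be careful to state the equivalence in terms of $N$ = number of ones in the matrices on the BMM side, matching Formulation 2 of the \SetDisjointness{} conjecture, and to observe that the non-rectangular shapes produced by the reductions ($2n$ sets, or an $N\times m$ matrix) do not affect the relevant lower-bound quantity since padding to a square matrix or adding empty sets changes $N$ by at most a constant factor. The reductions themselves are routine; the care is entirely in matching the complexity parameters on the two sides.
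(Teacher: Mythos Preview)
Your proposal is correct and follows essentially the same approach as the paper: both directions use the standard characteristic-vector reduction (rows/columns of $A,B$ become sets; sets become rows/columns of a matrix and its transpose). The paper's proof is slightly terser and pads to a square $N\times N$ matrix, while you discuss the size-parameter bookkeeping more carefully; one cosmetic point is that your $N\times m$ matrix should really have $|U|\le N$ rows (one per universe element), and since your BMM instance computes $AB$ you should name the $m\times N$ matrix $A$ and its transpose $B$ so that $AB=A^T\!A$ in your notation---but these are presentation details, not gaps.
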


\begin{proof}
Given an instance of \SetDisjointness{} let $e_1,...,e_N$ denote the elements in an input instance. We construct an instance of BMM as follows. Assume without loss of generality that all sets are not empty, and so $m \leq N$. Row $i$ in matrix $A$ represents a set $S_i$ while each column $j$ represents element $e_j$. An entry $a_{i,j}$ equals $1$ if $e_j \in S_i$ and equals zero otherwise. We also set $B = A^T$. We also pad each of the matrices with zeroes so their size will be $N \times N$. Clearly, $c_{i,j}$ in matrix $C$, which is the product of $A$ and $B$, is an indicator whether $S_i \cap S_j=\emptyset$.

In the opposite direction, given two matrices $A$ and $B$ having $m$ ones we view each row $i$ of $A$ as a characteristic vector of a set $S_i$ (the elements in the set correspond to the ones in that row) and each column $j$ of $B$ as a characteristic vector of a set $S_{j+n}$ (the elements in the set corresponds to the ones in that column). Thus, the instance of \SetDisjointness{} that have been created consists of $2n$ set with $O(m)$ elements. The value of an element $c_{i,j}$ in the product of $A$ and $B$ can be determined by the intersection of $S_i$ and $S_{j+n}$.
\qed
\end{proof}

Another interesting connection between BMM and the other problems discussed in this paper is the connection to the problem of calculating the \emph{transitive closure} of a graph, which is the general directed reachability mentioned above. It is well-known that BMM and transitive closure are equivalent in terms of time as shown by Fischer and Meyer~\cite{FM71}. But what happens if we consider space? It is easy to see that BMM can be reduced to transitive-closure (directed reachability) even in terms of space. However, the opposite direction is not clear as the reduction for time involves recursive squaring, which cannot be implemented efficiently in terms of space.

Another fascinating variant of BMM is the one in which an $n \times n$ matrix $A$ is input for preprocessing and afterwards we need to calculate the result of multiplying it by a given query vector $v$. This can be seen as the space variant of the celebrated OMV (online matrix-vector) problem discussed by Henzinger et al.~\cite{HKNS15}. It is interesting to see if one can make use of a data structure so that $n$ consecutive vector queries can be answered in $\tilde{O}(n^{3-\Omega(1)})$ time. 

\section{Applications}

We now provide applications of our rich framework for proving conditional space lower bounds. In the following subsections we consider both static and dynamic problems.

\subsection{Static Problems}

\subsubsection{Edge Triangles}

The first example we consider is in regards to triangles. In a problem that is called \emph{edge triangles detection}, we are given a graph $G=(V,E)$ to preprocess and then we are given an edge $(v,u)$ as a query and need to answer whether $(u,v)$ belongs to a triangle. In a reporting variant of this problem, called \emph{edge triangles} we need not only to answer if $(u,v)$ belongs to a triangle but also report all triangles it belongs to. This problem was considered in~\cite{AKLPPS15}.

It can be easily shown that these problems are equivalent to \SetDisjointness{} and \SetIntersection{}. We just construct a set $S_v$ per each vertex $v$ containing all its neighbors. Querying if there is a triangle containing the edge $(u,v)$ is equivalent to asking if $S_v \cap S_u$ is empty or not. Considering the reporting variant, reporting all triangles containing $(u,v)$ is thus equivalent to finding all the elements in $S_v \cap S_u$. Therefore, we get the following results:

\begin{theorem}
Assume the Strong \SetDisjointness{} conjecture.
Suppose there is a data structure for edge triangles detection problem for a graph $G=(V,E)$, with $S$ space and query time $T$. Then $S=\tilde \Omega(|E|^2/T^2)$.
\end{theorem}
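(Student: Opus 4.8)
The plan is to derive the bound from the equivalence between \SetDisjointness{} and edge triangles detection, used in the direction that converts the conjectured hardness of \SetDisjointness{} into a lower bound for the graph problem. So I would start from a \SetDisjointness{} instance $F=\{S_1,\dots,S_m\}$ over a universe $U$ with $N=\sum_i|S_i|$ and build a graph $G=(V,E)$ whose edge-triangle queries encode the disjointness queries.

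Concretely, create two vertices $a_i,b_i$ for every set $S_i$ and one vertex $w_x$ for every element $x\in U$; put the edges $\{a_i,w_x\}$ and $\{b_i,w_x\}$ into $E$ whenever $x\in S_i$, and put a query edge $\{a_i,b_j\}$ into $E$ for every $i\neq j$. Splitting each set into an $a$-copy and a $b$-copy is what prevents the query edges from spanning triangles among themselves (the subgraph induced on the $a$'s and $b$'s is bipartite, hence triangle-free), so the only triangles through $\{a_i,b_j\}$ have the form $a_i-w_x-b_j$, and such a triangle exists if and only if some $x$ lies in $S_i\cap S_j$. Hence an edge triangles detection query on $\{a_i,b_j\}$ decides whether $S_i\cap S_j=\emptyset$ (the degenerate case $i=j$ is handled separately by recording, for each set, whether it is empty). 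Consequently, a data structure for edge triangles detection on $G$ with space $S$ and query time $T$ yields a data structure for \SetDisjointness{} on $F$ with space $\tilde{O}(S)$ and query time $O(T)$, and Conjecture~\ref{conj:SSDH} then forces $S=\tilde{\Omega}(N^2/T^2)$.

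It remains to re-express this bound in terms of $|E|$, which is where the care is needed: the incidence edges number $2N$, but the query edges number $\Theta(m^2)$, so to get $|E|=\tilde{\Theta}(N)$ --- and therefore $\tilde{\Omega}(N^2/T^2)=\tilde{\Omega}(|E|^2/T^2)$ --- one should reduce from a hard \SetDisjointness{} instance in which $m=\tilde{O}(\sqrt{N})$ (equivalently, one whose sets are sufficiently large). I expect this to be the only real obstacle: one must argue that such instances still witness the Strong \SetDisjointness{} Conjecture in the relevant range of query times, which is plausible since for $T=\tilde{\Omega}(\sqrt{N})$ the Cohen-Porat data structure already attains $S=\tilde{\Theta}(N^2/T^2)$ on instances with $m=\Theta(\sqrt{N})$, so the conjectured matching lower bound is precisely what is invoked. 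Granting this, plugging $N=\Theta(|E|)$ into the bound finishes the argument; checking the triangle characterization and the absence of false positives is routine.
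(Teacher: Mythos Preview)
Your reduction runs in the direction that the lower bound actually needs (sets $\to$ graph), whereas the paper's text only spells out the opposite direction: it takes a graph, defines $S_v=N(v)$, and observes that an edge $(u,v)$ lies in a triangle iff $S_u\cap S_v\neq\emptyset$. That direction merely shows edge-triangle detection is no harder than \SetDisjointness{}; the paper then asserts ``equivalence'' and states the theorem without working out the reverse construction. Your bipartite doubling $a_i,b_i$ is a genuine refinement that the paper glosses over: the naive inverse (set-vertices, element-vertices, incidence edges, and a clique on set-vertices to supply the query edges) creates spurious triangles $(v_i,v_j,v_k)$ inside the clique, so every query would return ``yes''. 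Making the query-edge layer bipartite is exactly what kills those, and your correctness check is clean.

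The gap you flag is real and not just a technicality. With all pairs $\{a_i,b_j\}$ present you have $\Theta(m^2)$ query edges, so $|E|=\Theta(N)$ forces $m=\tilde O(\sqrt N)$; but Conjecture~\ref{conj:SSDH} is stated for arbitrary families and does not by itself assert that instances with $m=\tilde O(\sqrt N)$ are hard. In fact the theorem as written cannot hold for $T=\tilde O(1)$: there are only $|E|$ possible queries, so storing one bit per edge gives an $O(|E|)$-space, $O(1)$-time oracle, contradicting $S=\tilde\Omega(|E|^2)$. So the statement is only meaningful in the regime $T=\tilde\Omega(\sqrt{|E|})$ (equivalently $S=\tilde O(|E|)$), which is precisely the range where your restriction $m=\tilde O(\sqrt N)$ is consistent with the believed hard instances for \SetDisjointness{}. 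Your plan is therefore sound for the non-vacuous range, but to make it a proof you should either (i) invoke a version of the conjecture restricted to families with $m=\tilde O(\sqrt N)$, or (ii) state the conclusion only for $T=\tilde\Omega(\sqrt{|E|})$. The paper does neither; it simply leans on the claimed equivalence.
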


\begin{theorem}
Assume the Strong \SetIntersection{} conjecture.
Suppose there is a data structure for edge triangles problem for a graph $G=(V,E)$, with $S$ space and query time $O(T + op)$ time, where $op$ is the size of the output of the query. Then $S=\tilde \Omega(|E|^2/T)$.
\end{theorem}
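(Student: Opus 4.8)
The plan is to prove the statement by showing that the reporting version of the edge-triangles problem and the \SetIntersection{} problem are, up to a linear-time and linear-space reduction, the same problem, and then to invoke the Strong \SetIntersection{} Conjecture. This is the reporting analogue of the preceding theorem, which relates edge-triangles \emph{detection} to the Strong \SetDisjointness{} Conjecture, so I would follow that proof's structure with the reporting bookkeeping added.

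Concretely, I would first describe the two directions of the reduction. From a graph $G=(V,E)$, put $S_v=N(v)$ for each vertex $v$; then the total size is $N=\sum_v\deg(v)=2|E|$, and for an edge $uv\in E$ the triangles through $uv$ are in bijection with $S_u\cap S_v$, with identical output size, so an edge-triangles data structure is (a restricted) \SetIntersection{} data structure with $N=\Theta(|E|)$ and query time $O(T+op)$. For the direction needed for the lower bound, given sets $S_1,\dots,S_m$ over a universe $U$ with $\sum_i|S_i|=N$, I build a tripartite graph on vertex classes $X=\{x_1,\dots,x_m\}$, $Y=\{y_1,\dots,y_m\}$ and $Z=\{z_e : e\in U\}$, with all $m^2$ edges between $X$ and $Y$ and edges $x_iz_e$, $y_jz_e$ whenever $e\in S_i$ (resp. $e\in S_j$). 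Since $X,Y,Z$ are each independent sets, every triangle of $G$ has the form $\{x_i,y_j,z_e\}$ and such a triangle exists exactly when $e\in S_i\cap S_j$; hence the \SetIntersection{} query on $(S_i,S_j)$ is answered by the edge-triangles query on the edge $x_iy_j$, with matching output size. An edge-triangles data structure with space $S$ and query time $O(T+op)$ thus yields a \SetIntersection{} data structure with the same asymptotic bounds, on an instance in which $|V|=2m+|U|=O(N)$ and $|E|=m^2+2N$; invoking the Strong \SetIntersection{} Conjecture then gives $S=\tilde\Omega(N^2/T)=\tilde\Omega(|E|^2/T)$.

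The step I expect to need the most care is the size bookkeeping, specifically the $m^2$ edges of the complete bipartite layer between $X$ and $Y$: these are genuinely needed so that \emph{every} query pair $(S_i,S_j)$ corresponds to an actual edge of $G$, but they can be larger than $N$ when there are many small sets, which would make $|E|$ and $N$ differ super-linearly. This is exactly the issue that arises in the proof of the previous theorem, and I would dispose of it in the same way, by working in the (sparse-set, $m=O(\sqrt N)$) regime in which the \SetIntersection{} conjecture is posed, so that $m^2=O(N)$ and $|E|=\Theta(N)$. Everything else is routine: the tripartite graph has no spurious triangles, the construction is linear time and space, and the output sizes of the two problems match on the nose, so the $O(T+op)$ query bound transfers verbatim.
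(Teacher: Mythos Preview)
Your proposal is correct and in fact more careful than the paper's own argument. The paper dispatches both the detection and the reporting theorems in a single short paragraph: it only writes out the graph-to-sets direction (set $S_v=N(v)$ for each vertex $v$, so that an edge-triangle query on $(u,v)$ is literally the \SetIntersection{} query on $(S_u,S_v)$, with $N=2|E|$), asserts that the problems are ``equivalent,'' and stops. It never spells out the sets-to-graph reduction that is actually needed for the lower bound, and in particular never confronts the $m^2$-edges issue you raise. Your tripartite construction is a correct and standard way to fill that gap.

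Two small corrections to your expectations. First, you assume the paper handles the $m^2$ blowup in the preceding (detection) theorem and plan to ``dispose of it in the same way''; in fact the paper uses the identical one-line sketch there as well, so there is nothing to copy. Your own fix---restricting attention to hard \SetIntersection{} instances with $m=\tilde O(\sqrt N)$ so that the complete $X$--$Y$ layer contributes only $O(N)$ edges and hence $|E|=\Theta(N)$---is a reasonable way to close the gap, but it is an extra hypothesis about the conjecture's hard regime that the paper's statement of the Strong \SetIntersection{} Conjecture does not make explicit. Second, calling $m=O(\sqrt N)$ the ``sparse-set'' regime is backwards: with $m=O(\sqrt N)$ the average set has size $\Omega(\sqrt N)$, which is the \emph{dense}-set case.
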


\subsubsection{Histogram Indexing}
A {\em histogram}, also called a {\em Parikh vector}, of a string $T$ over alphabet $\Sigma$ is a $|\Sigma|$-length vector containing the character count of $T$. For example, for $T=aaccbacab$ the histogram is $v(T)=(4,2,3)$.
In the \emph{histogram indexing problem} we preprocess an $N$-length string $T$ to support the following queries: given a query histogram $v$, return whether there is a substring $T'$ of $T$ such that $v(T')=v$.

This problem has received much attention in the recent years. The case where the alphabet size is 2 (binary alphabet) was especially studied. A simple algorithm for this case solves the problem in $O(N^2)$ preprocessing time and constant query time. There was a concentrated effort to reduce the quadratic preprocessing time for some years. However, an algorithm with preprocessing time that is $O(N^{2-\epsilon})$ for some $\epsilon>0$ was unknown until a recent breakthrough by Chan and Lewenstein~\cite{CL15}. They showed an algorithm with $O(N^{1.859})$ preprocessing time and constant query time. For alphabet size $\ell$ they obtained an algorithm with $\tilde{O}(N^{2-\delta})$ preprocessing time and $\tilde{O}(N^{2/3+\delta(\ell+13)/6})$ query time for $0 \leq \delta \leq 1$. Regarding space complexity, it is well known how to solve histogram indexing for binary alphabet using linear space and constant query time. For alphabet size $\ell$, Kociumaka et al.~\cite{KRR13} presented a data structure with $\tilde{O}(N^{2-\delta})$ space and $\tilde{O}(N^{\delta (2\ell-1)})$ query time. Chan and Lewenstein~\cite{CL15} improved their result and showed a solution by a data structure using $\tilde{O}(N^{2-\delta})$ space with only $\tilde{O}(N^{\delta(\ell +1)/2})$ query time.

Amir et al.~\cite{ACLL14} proved conditional lower bound on the tradeoff between the preprocessing and query time of the histogram indexing problem. Very recently, their lower bound was improved and generalized by Goldstein et al.~\cite{GKLP16}.
Following the reduction by Goldstein et al.~\cite{GKLP16} and utilizing our framework for conditional space lower bounds, we obtain the following lower bound on the tradeoff between the space and query time of histogram indexing:

\begin{theorem}\label{thm:generalized_space_histogram_indexing}
Assume the Strong \ThreeSUMIndexing{} conjecture holds.
The histogram indexing problem for a string of length $N$ and constant alphabet size $\ell \geq 3$ cannot be solved with $O(N^{2-\frac{2(1-\alpha)}{\ell-1-\alpha}-\Omega(1)})$ space and $O(N^{1-\frac{1+\alpha(\ell-3)}{\ell-1-\alpha}-\Omega(1)})$ query time, for any $0 \leq \alpha \leq 1$.
\end{theorem}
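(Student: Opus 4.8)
The plan is to re-read the parameterized reduction of Goldstein et al.~\cite{GKLP16} from \ThreeSUM{} to histogram indexing as a reduction from \ThreeSUMIndexing{}, and to track the \emph{space} of the resulting data structure (and its query time) rather than its preprocessing time. Recall that in that reduction each integer of the \ThreeSUM{}/\ThreeSUMIndexing{} instance is first split into a ``high'' and a ``low'' part at a threshold governed by a free parameter --- this parameter is exactly the $\alpha\in[0,1]$ of the statement --- and the split pieces, distributed over the $\ell$ alphabet symbols, are encoded into a single string $T$; a target integer $z$ is then translated into a family of histogram queries on $T$, one per admissible way of distributing (the split of) $z$ across the blocks, in such a way that some histogram query answers ``yes'' iff the original query $z$ admits a \ThreeSUMIndexing{} solution. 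Working through the encoding, a \ThreeSUMIndexing{} instance of size $n$ produces a string of length $N=\Theta(n^{(\ell-1-\alpha)/(\ell-2)})$ together with $r=\Theta(n^{\alpha})$ histogram queries per \ThreeSUMIndexing{} query (after the standard reduction to a polynomially-bounded universe, using the almost-linear, almost-balanced hashing of the proof of Theorem~\ref{thm:3SI_to_SSD} if needed, which affects $N$ and $r$ by at most polylogarithmic factors).

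First I would verify that this reduction is \emph{space-preserving}, in the same sense as Theorems~\ref{thm:3SI_to_SSD}--\ref{thm:WSD_to_W3SI}: the string $T$ depends only on $A_1,A_2$ and not on $z$, so building it is legitimate preprocessing; handing $T$ to a histogram-indexing data structure together with $O(n)$ space for lookup tables on $A_1,A_2$ yields a data structure for \ThreeSUMIndexing{}; and a \ThreeSUMIndexing{} query costs $\tilde O(r)$ histogram queries plus lower-order terms. Consequently, a histogram-indexing structure using $S$ space and $T_q$ query time on a length-$N$ string solves \ThreeSUMIndexing{} on $n$ elements with $\tilde O(S+n)$ space and $\tilde O(r\cdot T_q)$ query time.

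Now I would invoke the conjecture. Suppose, towards a contradiction, that for some constant $\ell\ge 3$ and some $\alpha\in[0,1)$ histogram indexing admitted a data structure with $S=O(N^{2-\frac{2(1-\alpha)}{\ell-1-\alpha}-\Omega(1)})$ space and $T_q=O(N^{1-\frac{1+\alpha(\ell-3)}{\ell-1-\alpha}-\Omega(1)})$ query time. Substituting $N=\Theta(n^{(\ell-1-\alpha)/(\ell-2)})$, the space exponent becomes $\frac{\ell-1-\alpha}{\ell-2}\cdot\big(2-\frac{2(1-\alpha)}{\ell-1-\alpha}\big)-\Omega(1)=2-\Omega(1)$, and the total query exponent becomes $\alpha+\frac{\ell-1-\alpha}{\ell-2}\cdot\big(1-\frac{1+\alpha(\ell-3)}{\ell-1-\alpha}\big)-\Omega(1)=1-\Omega(1)$. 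Hence \ThreeSUMIndexing{} would be solved with truly subquadratic space and truly sublinear query time, contradicting the Strong \ThreeSUMIndexing{} Conjecture. Since the two claimed exponent savings are precisely the values that force these two cancellations for every $\alpha\in[0,1)$, the theorem follows (for $\alpha=1$ the claimed query-time exponent is already nonpositive, so the statement is vacuous there).

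The step I expect to be the main obstacle is the first one: confirming that a reduction designed to lower-bound \emph{time} carries over to the data-structure setting with no hidden space overhead. Concretely one must check (i) that constructing $T$ and translating $z$ into its $r$ histogram queries uses at most $O(n)$ plus lower-order working space beyond the histogram structure itself; (ii) that the per-block ``linear equation'' gadgets underlying the encoding are sound in both directions (in the spirit of the gadgets in the proof of Theorem~\ref{thm:WKSD_to_WK+1SI}); and (iii) that the universe-shrinking hashing, together with any boosting needed to eliminate false positives, inflate $N$ and $r$ only polylogarithmically, exactly as in Theorem~\ref{thm:3SI_to_SSD}. Given the close analogy with the already-established theorems I expect all three to go through, but this is where the real verification lies.
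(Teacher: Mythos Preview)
Your proposal is correct and follows essentially the same route as the paper: reinterpret the Goldstein et al.\ reduction as a preprocessing-plus-queries reduction from \ThreeSUMIndexing{}, observe that the string $T$ depends only on $A_1,A_2$, count $r=\Theta(n^\alpha)$ histogram queries per target $z$, and then back-solve the exponents. Your write-up is in fact more careful than the paper's on two points: you explicitly isolate the ``space-preserving'' checks (i)--(iii), and your string-length parameter $N=\Theta(n^{(\ell-1-\alpha)/(\ell-2)})$ is the one consistent with the theorem's exponents (the paper prints $N=O(n^{(\ell-2-\alpha)/(\ell-3)})$, which is evidently a typo since it blows up at $\ell=3$ and does not reproduce the stated bounds); your subsequent arithmetic verifying that the space exponent collapses to $2$ and the query exponent to $1$ is exactly the computation the paper summarizes with ``plugging in the value of $n$ in terms of $N$.''
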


\begin{proof}
We use the same reduction as in \cite{GKLP16}. This time it will be used to reduce an instance of \ThreeSUMIndexing{} (on $2n$ numbers) to histogram indexing, instead of reducing from an instance of \ThreeSUM{}. The space consumed by the reduction is dominated by the space needed to prepare a histogram indexing instance with string length $N=O(n^{\frac{\ell-2-\alpha}{\ell-3}})$ for histogram queries. The number of histogram queries we do for each query number $z$ of the \ThreeSUMIndexing{} instance is $O(n^{\alpha})$. The query time is dominated by the time required by these queries. Let $S(N,\ell)$ denote the space required by a data structure for histogram indexing on $N$-length string over alphabet size $\ell$ and let $Q(N,\ell)$ denote the query time for the same parameters. Assuming the strong \ThreeSUMIndexing{} conjecture and following our reduction, we have that $S(N,\ell)=O(n^{2-\Omega(1)})$ and $Q(N,\ell)=O(n^{1-\alpha-\Omega(1)})$. Plugging in the value of $n$ in terms of $N$ we get the required lower bound.
\qed
\end{proof}

If we plug in the previous theorem $\delta=\frac{2(1- \alpha)}{\ell - 1 - \alpha}$, we get that if the strong \ThreeSUMIndexing{} conjecture is true we cannot have a solution for histogram indexing with $\tilde{O}(N^{2-\delta})$ space and $\tilde{O}(N^{\delta(\ell-2)/2})$ query time. This lower bound is very close to the upper bound obtained by Chan and Lewenstein~\cite{CL15} as there is only a gap of $\frac{3}{2}\delta$ in the power of $N$ in the query time. Moreover, if the value of $\delta$ becomes close to 0 (so the value of $\alpha$ is close to 1) the upper bound and the lower bound get even closer to each other. This is very interesting, as it means that to get truly subquadratic space solution for histogram indexing for alphabet size greater than 2, we will have to spend polynomial query time. This is in stark contrast to the simple linear space solution for histogram indexing over binary alphabets that supports queries in constant time.

\bigskip

Following reductions presented in~\cite{KPP16}, from \SetIntersection{} or \SetDisjointness{} to several other problems, we are able to show that based on the Strong \SetDisjointness{} conjecture, the same problems admit a space/query time lower bounds. For sake of completeness, we reproduce these reductions in the next three subsections and show that they admit the space lower bounds as needed.

\subsubsection{Distance Oracles for Colors}\label{section:colored_dist}

Let $P$ be a set of points in some metric with distance function $d(\cdot,\cdot)$, where each point $p\in P$ has some associated colors $C(p)\subset [\ell]$. For $c\in [\ell]$ we denote by $P(c)$ the set of points from $P$ with color $c$. We generalize $d$ so that the distance between a point $p$ and a color $c$ is denoted by $d(p,c)=\min_{q\in P(c)}\{d(p,q)\}$. In the \textit{(Approximate) Distance Oracles for Vertex-Labeled Graphs problem}~\cite{Chechik12,HLWY11} we are interested in preprocessing $P$ so that given a query of a point $q$ and a color $c$ we can return $d(q,c)$ (or some approximation). We further generalize $d$ so that the distance between two colors $c$ and $c'$ is denoted by $d(c,c') = \min_{p\in P(c)}\{d(p,c')\}$. In the \textit{Distance Oracle for Colors problem} we are interested in preprocessing $P$ so that given two query colors $c$ and $c'$ we can return $d(c,c')$. In the \textit{Approximate Distance Oracle for Colors problem} we are interested in preprocessing $P$ and some constant $\alpha > 1 $ so that given two query colors $c$ and $c'$ we can return some value $\hat d$ such that $d(c,c') \leq \hat d \leq \alpha d(c,c')$.

We show evidence of the hardness of the Distance Oracle for Colors problem and the Approximate Distance Oracle for Colors problem by focusing on the 1-D case.
\begin{theorem}\label{thm:dist_oracle_color}
Assume the Strong \SetDisjointness{} conjecture.
Suppose there is a 1-D Distance Oracle for Colors with constant stretch $\alpha \geq 1$ for an input array of size $N$ with $S$ space and query time $T$. Then $S=\tilde \Omega(N^2/T^2)$.
\end{theorem}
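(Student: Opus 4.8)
The plan is to reduce \SetDisjointness{} to the 1-D Distance Oracle for Colors problem while preserving the instance size up to constant factors, and then invoke the Strong \SetDisjointness{} conjecture. This is exactly in the spirit of the reductions of~\cite{KPP16}, but tracking space instead of preprocessing time.

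First I would describe the reduction. Given a \SetDisjointness{} instance with sets $S_1,\dots,S_m\subseteq U$ and $\sum_i|S_i|=N$, list the distinct elements occurring in some set as $e_1,\dots,e_{N'}$ with $N'\le N$, and reserve for element $e_j$ a tiny interval $I_j=[j,\,j+\varepsilon]$ on the real line, where $\varepsilon=\varepsilon(\alpha)$ is a small constant fixed below. For each incidence $e_j\in S_i$ I place one point inside $I_j$ and give it color $i$ (using the set index itself as the color name, so no auxiliary translation table is needed); distinct incidences of the same element get distinct coordinates within $I_j$, so the point set is a bona fide finite metric on a line. The total number of points is exactly $\sum_i|S_i|=N$, so this is a legal 1-D input of size $N$, produced in linear time, and a \SetDisjointness{} query $(S_i,S_j)$ maps to the color-distance query $(i,j)$.

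Next I would verify correctness. By definition $d(c,c')=\min_{p\in P(c),\,q\in P(c')} d(p,q)$. If $S_i\cap S_j\neq\emptyset$, witnessed by $e_\ell$, then colors $i$ and $j$ both have a point in $I_\ell$, so $d(i,j)\le\varepsilon$ and a stretch-$\alpha$ oracle returns $\hat d\le\alpha\varepsilon$. If $S_i\cap S_j=\emptyset$, every color-$i$ point and every color-$j$ point lie in different intervals, so $d(i,j)\ge 1-\varepsilon$ and the oracle returns $\hat d\ge 1-\varepsilon$. Choosing $\varepsilon<\frac{1}{1+\alpha}$ (possible since $\alpha$ is a constant) separates the two cases, so one color-distance query decides the \SetDisjointness{} query. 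Hence a 1-D Distance Oracle for Colors with constant stretch, space $S$, and query time $T$ on an array of size $\Theta(N)$ yields a \SetDisjointness{} data structure using $\tilde{O}(S)$ space and $\tilde{O}(T)$ query time; by the Strong \SetDisjointness{} conjecture this forces $\tilde{O}(S)=\tilde{\Omega}(N^2/\tilde{O}(T)^2)$, i.e. $S=\tilde{\Omega}(N^2/T^2)$.

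The steps are all routine; the only points that need care are making the constructed point set a genuine metric (handled by the $\varepsilon$-clusters with distinct coordinates inside a cluster) and ruling out that the multiplicative stretch conflates a near-zero "intersecting" distance with a "disjoint" distance (handled by taking $\varepsilon$ small relative to the constant $\alpha$). There is no substantive obstacle beyond these bookkeeping issues, since each element–set incidence contributes exactly one point and the size of the instance is therefore preserved exactly.
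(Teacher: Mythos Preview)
Your reduction is correct and cleaner than the paper's, but it is worth noting a modelling difference. You place the $N$ incidences at \emph{real} coordinates and use an $\varepsilon$-cluster per element so that ``same element'' distances are at most $\varepsilon$ while ``different element'' distances are at least $1-\varepsilon$; one oracle query then decides disjointness. The paper instead works in the \emph{integer-array} model (each array cell carries one color), where the cluster for an element $e$ necessarily has length $|e|$ and these lengths vary. To make the separating padding larger than every cluster in its array without letting the total array length exceed $\tilde{O}(N)$, the paper buckets elements into $\Theta(\log m)$ degree classes $P_i=\{e:2^{i-1}<|e|\le 2^i\}$, builds one array per class with padding $2^i+1$ (or $\alpha 2^i+1$ for stretch $\alpha$), and answers a \SetDisjointness{} query by querying all $\Theta(\log m)$ oracles.

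Both arguments are valid; yours establishes the bound for 1-D color distance oracles over arbitrary real point sets, while the paper's establishes it even for integer-indexed arrays. If ``input array of size $N$'' is meant in the latter sense, your $\varepsilon$-trick does not port directly: a naive integer version would need gaps of size roughly $\alpha\cdot\max_e|e|$ after every cluster, which can push the array length to $\omega(N)$, and the degree-bucketing is precisely what fixes this.
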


\begin{proof}
We reduce \SetDisjointness{} to the Colored Distance problem as follows. For each set $S_i$ we define a unique color $c_i$. For an element $e\in U$ ($U$ is the universe of the elements in our sets) let $|e|$ denote the number of sets containing $e$ and notice that $\sum_{e\in U} |e| =N$. Since each element in $U$ appears in at most $m$ sets, we partition $U$ into $\Theta(\log m)$ parts where the $i^{th}$ part $P_i$ contains all of the elements $e\in U$ such that $2^{i-1}<|e| \leq 2^i$. An array $X_i$ is constructed from $P_i=\{e_1,\cdots e_{|P_i|}\}$ by assigning an interval $I_j=[f_j, \ell_j]$ in $X_i$ to each $e_j\in P_i$ such that no two intervals overlap. Every interval $I_j$ contains all the colors of sets that contain $e_j$. This implies that $|I_j| = |e_j| \leq 2^i$. Furthermore, for each $e_j$ and $e_{j+1}$ we separate $I_j$ from $I_{j+1}$ with a dummy color $d$ listed $2^i +1$ times at locations $[\ell_j+1,f_{j+1}-1]$. 

We can now simulate a \SetDisjointness{} query on subsets $(S_i,S_j)$ by performing a colored distance query on colors $c_i$ and $c_j$ in each of the $\Theta(\log m)$ arrays. There exists a $P_i$ for which the two points returned from the query are at distance strictly less than $2^i+1$ if and only if there is an element in $U$ that is contained in both $S_i$ and $S_j$. The space usage is $\tilde{O}(S)$ and the query time is $\tilde{O}(T)$. The rest follows directly from the Strong \SetDisjointness{} conjecture.

Finally, notice that the lower bound also holds for the approximate case, as for any constant $\alpha$ the reduction can overcome the $\alpha$ approximation by separating intervals using $\alpha2^i + 1$ listings of $d$.
\qed
\end{proof}

\subsubsection{Document Retrieval Problems with Multiple Patterns}

In the \textit{Document Retrieval problem}~\cite{Muthukrishnan02} we are interested in preprocessing a collection of documents $X=\{D_1,\cdots, D_k\}$ where $N=\sum_{D\in X} |D|$, so that given a pattern $P$ we can quickly report all of the documents that contain $P$. Typically, we are interested in run time that depends on the number of documents that contain $P$ and not in the total number of occurrences of $P$ in the entire collection of documents. In the \textit{Two Patterns Document Retrieval problem} we are given two patterns $P_1$ and $P_2$ during query time, and wish to report all of the documents that contain both $P_1$ and $P_2$.
We consider two versions of the Two Patterns Document Retrieval problem. In the decision version we are only interested in detecting if there exists a document that contains both patterns. In the reporting version we are interested in enumerating all documents that contain both patterns.

All known solutions for the Two Patterns Document Retrieval problem with non trivial preprocessing use at least $\Omega(\sqrt N)$ time per query~\cite{CP10,HSTV10,HSTV12,Muthukrishnan02}. In a recent paper, Larsen, Munro, Nielsen, and Thankachan~\cite{LMNT14} show lower bounds for the Two Patterns Document Retrieval problem conditioned on the hardness of boolean matrix multiplication.

It is straightforward to see that the appropriate versions of the two pattern document retrieval problem solve the corresponding versions of the \SetDisjointness{} and \SetIntersection{} problems. In particular, this can be obtained by creating an alphabet $\Sigma=F$ (one character for each set), and for each $e\in U$ we create a document that contains the characters corresponding to the sets that contain $e$. The intersection between $S_i$ and $S_j$ directly corresponds to all the documents that contain both $a$ and $b$. Thus, all of the lower bound tradeoffs for intersection problems are lower bound tradeoffs for the two pattern document retrieval problem.

\begin{theorem}\label{thm:two_pat_doc_retrieval}
Assume the Strong \SetDisjointness{} conjecture.
Suppose there is a data structure for the decision version of the Two Patterns Document Retrieval problem for a collection of documents $X$ where $N=\sum_{D\in X}|D|$, with $S$ space and query time $T$. Then $S=\tilde \Omega(N^2/T^2)$.
\end{theorem}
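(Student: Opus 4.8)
The plan is to reduce \SetDisjointness{} to the decision version of the Two Patterns Document Retrieval problem by a size-preserving, linear-time reduction, and then apply the Strong \SetDisjointness{} conjecture (Conjecture~\ref{conj:SSDH}) directly. This is essentially the reduction sketched just above the theorem; I will carry it out carefully enough to track the size of the instance. Given a family $F=\{S_1,\dots,S_m\}$ over universe $U$ with total size $N=\sum_i|S_i|$, assume without loss of generality that every $S_i$ is nonempty and that every $e\in U$ lies in at least one set (discarding the rest changes neither $N$ nor the answers). Build an alphabet $\Sigma$ with one fresh character $a_i$ per set $S_i$, and for each $e\in U$ create a document $D_e$ consisting of exactly the characters $\{a_i : e\in S_i\}$ written in any fixed order. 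Then $|D_e|$ equals the number of sets containing $e$, so the total length of the document collection is $\sum_{e\in U}|\{i:e\in S_i\}|=\sum_i|S_i|=N$, and the construction takes linear time.

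Next I would verify the simulation. To answer a \SetDisjointness{} query $(S_i,S_j)$ with $i\neq j$, issue the decision Two Patterns query with the length-one patterns $P_1=a_i$ and $P_2=a_j$. Since each $D_e$ is a plain string over $\Sigma$, the single character $a_i$ occurs as a substring of $D_e$ exactly when $a_i$ is one of its characters, i.e.\ exactly when $e\in S_i$; likewise for $a_j$. Hence some document contains both $P_1$ and $P_2$ if and only if there is an $e$ with $e\in S_i$ and $e\in S_j$, i.e.\ if and only if $S_i\cap S_j\neq\emptyset$. (The case $i=j$ is trivial and answered without a query.) Therefore any data structure for the decision version of Two Patterns Document Retrieval on a collection of total length $N$ using $S$ space and $T$ query time yields a \SetDisjointness{} data structure on an instance of total size $\Theta(N)$ with $\tilde O(S)$ space and $\tilde O(T)$ query time.

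Applying Conjecture~\ref{conj:SSDH} to this \SetDisjointness{} data structure gives $S=\tilde\Omega(N^2/T^2)$, which is exactly the claimed bound. I do not expect any real obstacle here: the only two points needing care are that the reduction must map a size-$N$ \SetDisjointness{} instance to a size-$\Theta(N)$ document collection --- which is why we count document lengths by element multiplicities rather than by $|U|$, so that the $N^2$ in the conjecture carries over unchanged --- and that a single-character pattern faithfully encodes set membership via substring containment, which holds because the $D_e$ are ordinary strings over $\Sigma$. The analogous statement for the reporting version and the Strong \SetIntersection{} conjecture (Conjecture~\ref{conj:SSIH}) follows from the same reduction, replacing the decision query by the enumeration of all documents containing both patterns.
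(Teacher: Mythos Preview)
Your proposal is correct and follows essentially the same reduction as the paper: create one alphabet character per set, one document per universe element listing the characters of the sets containing it, and simulate a \SetDisjointness{} query by a two-pattern query on the corresponding single characters. You simply spell out more carefully than the paper does that the total document length equals $N$ and that single-character substring containment encodes set membership, but the argument is identical.
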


\begin{theorem}\label{thm:two_pat_doc_retrieval_reporting}
Assume the Strong \SetIntersection{} conjecture.
Suppose there is a data structure for the reporting version of the Two Patterns Document Retrieval problem for a collection of documents $X$ where $N=\sum_{D\in X}|D|$, with $S$ space and query time $O(T+op)$ where $op$ is the size of the output. Then $S=\tilde \Omega(N^2/T)$.
\end{theorem}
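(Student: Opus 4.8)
The statement to prove is Theorem~\ref{thm:two_pat_doc_retrieval_reporting}: under the Strong \SetIntersection{} conjecture, any data structure for the reporting version of Two Patterns Document Retrieval with $S$ space and $O(T+op)$ query time must satisfy $S=\tilde\Omega(N^2/T)$. The approach is a direct reduction from \SetIntersection{} to the reporting version of Two Patterns Document Retrieval, exactly as sketched in the paragraph preceding the theorem, and then an invocation of Conjecture~\ref{conj:SSIH}. So the plan is: (i) build the document collection from a \SetIntersection{} instance, (ii) show that a two-pattern query simulates an intersection query while preserving the output-sensitive time bound, (iii) check that the instance size is preserved up to constants (or polylog factors), and (iv) conclude by contradiction with the Strong \SetIntersection{} conjecture.

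First I would spell out the reduction. Given a \SetIntersection{} instance with family $F = \{S_1,\dots,S_m\}$ over universe $U$ and total size $N=\sum_{S\in F}|S|$, create an alphabet $\Sigma$ with one distinct character $a_i$ for each set $S_i$. For each element $e\in U$ create a document $D_e$ consisting of the characters $a_i$ for exactly those $i$ with $e\in S_i$, say concatenated in increasing order of index with a unique separator character between consecutive characters (to prevent spurious pattern matches across boundaries). Then $\sum_{e\in U}|D_e| = O(N)$, so the document-retrieval instance has size $\Theta(N)$ up to constant factors. To answer a \SetIntersection{} query on $(S_i,S_j)$, query the two-pattern structure with the single-character patterns $P_1=a_i$ and $P_2=a_j$; a document $D_e$ contains both patterns iff $e\in S_i\cap S_j$, so the list of reported documents is in bijection with $S_i\cap S_j$. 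Hence the query time $O(T+op)$ of the document structure is exactly $O(T+|S_i\cap S_j|)$, matching the output-sensitive form required by Conjecture~\ref{conj:SSIH}.

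Then the conclusion is routine: if one could build such a data structure with $S = o(N^2/T)$ space (i.e.\ beating $\tilde\Omega(N^2/T)$) and $O(T+op)$ query time, composing it with the above linear-size, linear-time reduction would yield a \SetIntersection{} data structure with the same asymptotic space and an $O(T+op)$ query time, contradicting the Strong \SetIntersection{} conjecture. Since the reduction only adds $O(1)$ work per query and blows up the instance size by at most a constant, all the tradeoff parameters carry over verbatim.

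The main thing to be careful about — rather than a deep obstacle — is making sure the encoding genuinely forbids false matches: inserting a fresh separator symbol (not used elsewhere) between the characters of each document guarantees that a length-one pattern $a_i$ matches $D_e$ precisely when $a_i$ occurs as a symbol of $D_e$, and never "accidentally" as a sub-pattern spanning a boundary; with single-character patterns this is immediate, but stating it cleanly is the only subtlety. The rest is bookkeeping on instance sizes. This mirrors the reductions from \SetIntersection{} already used for the decision version in Theorem~\ref{thm:two_pat_doc_retrieval} and the colored-distance result in Theorem~\ref{thm:dist_oracle_color}, so no new technique is needed.
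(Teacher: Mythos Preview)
Your proposal is correct and follows essentially the same approach as the paper: the paper uses exactly the reduction you describe (alphabet $\Sigma=F$, one document per element $e$ listing the characters of sets containing $e$), and then invokes the Strong \SetIntersection{} conjecture. The separator trick you add is harmless extra care; with single-character patterns it is unnecessary, as you yourself note.
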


\subsubsection{Forbidden Pattern Document Retrieval}\label{section:forbidden_pattern_document_retrieval}

In the \textit{Forbidden Pattern Document Retrieval problem}~\cite{FGKLMSV12} we are also interested in preprocessing the collection of documents but this time given a query $P^+$ and $P^-$ we are interested in reporting all of the documents that contain $P^+$ and do not contain $P^-$. Here too we consider a decision version and a reporting version.

All known solutions for the Forbidden Pattern Document Retrieval problem with non trivial preprocessing use at least $\Omega(\sqrt N)$ time per query~\cite{FGKLMSV12,HSTV12}. In a recent paper, Larsen, Munro, Nielsen, and Thankachan~\cite{LMNT14} show lower bounds for the Forbidden Pattern Document Retrieval problem conditioned on the hardness of boolean matrix multiplication.

\begin{theorem}\label{thm:forbid_pat_doc_retrieval}
Assume the Strong \ThreeSUMIndexing{} conjecture.
Suppose there is a data structure for the decision version of the Forbidden Pattern Document Retrieval problem for a collection of documents $X$ where $N=\sum_{D\in X}|D|$, with $S$ space and query time $T$. Then $S=\tilde \Omega(N^2/T^4)$.
\end{theorem}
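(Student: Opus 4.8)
The plan is to prove the statement by a reduction from the \ThreeSUMIndexing{} data structure problem directly to the decision version of Forbidden Pattern Document Retrieval, reusing the two-level linear-hashing machinery from the proof of Theorem~\ref{thm:3SI_to_SSD} but accounting for space instead of time. The reduction is the space analogue of the \ThreeSUM{}-to-Forbidden-Pattern reduction of~\cite{KPP16}; the reason it is phrased in terms of the Strong \ThreeSUMIndexing{} conjecture (rather than the Strong \SetDisjointness{} conjecture, as for Theorem~\ref{thm:two_pat_doc_retrieval}) is that a Forbidden Pattern query returns a set \emph{difference}, and turning an intersection-emptiness test into one costs an extra quadratic factor in the instance size --- exactly what degrades $N^2/T^2$ to $N^2/T^4$.

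First I would set $R=n^{\gamma}$ for $\tfrac12<\gamma<1$ and $Q=(5n/R)^{2}$ and run the construction of Theorem~\ref{thm:3SI_to_SSD} verbatim: a linear, almost-balanced $h_1:U\to[R]$ that splits $A_1,A_2$ into buckets $\mathcal B_i,\mathcal C_i$ of size $\tilde{O}(n/R)$ (with the heavy elements stored explicitly), and a linear, pairwise-independent $h_2:U\to[Q]$ that for each bucket and each shift $0\le j<\sqrt{Q}$ produces the shifted sets $\mathcal B_{i,j},\mathcal C_{i,j}\subseteq[Q]$, taking $O(\log n)$ independent copies of $h_2$ to drive the false-positive probability down to $1/\mathrm{poly}(n)$. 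As there, a query $z$ is answered by examining $O(R)$ explicitly determined pairs $(\mathcal B_{i,h_2^{\uparrow}(z)},\mathcal C_{i',h_2^{\downarrow}(z)})$ and testing for each whether the two shifted sets intersect, plus $\tilde{O}(n/R)$ verification work and $O(R)$ work for the heavy elements.

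The only new ingredient is realizing that intersection-emptiness test with a Forbidden Pattern query. I would create one document $D_v$ per value $v\in[Q]$; for every set $\mathcal B_{i,j}$ introduce a character $\beta_{i,j}$ occurring in $D_v$ iff $v\in\mathcal B_{i,j}$, and for every set $\mathcal C_{i',j'}$ introduce a character $\gamma_{i',j'}$ occurring in $D_v$ iff $v\notin\mathcal C_{i',j'}$ (with unique delimiters so that single-character patterns match as substrings). Then the Forbidden Pattern query with positive pattern $\beta_{i,j}$ and forbidden pattern $\gamma_{i',j'}$ returns exactly $\{v : v\in\mathcal B_{i,j}\text{ and }v\in\mathcal C_{i',j'}\}=\mathcal B_{i,j}\cap\mathcal C_{i',j'}$, so its decision version answers precisely the test we need. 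Correctness --- no false negatives, by the linearity of $h_1,h_2$, and false positives eliminated by the $O(\log n)$ copies and then verified directly against $A_1,A_2$ --- is inherited from Theorem~\ref{thm:3SI_to_SSD} without change.

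The accounting is the step I expect to matter most. The negated characters are the bottleneck: each $\gamma_{i',j'}$ occurs in $Q-|\mathcal C_{i',j'}|=\Theta(Q)$ documents and there are $\tilde{\Theta}(R\sqrt{Q})$ of them, so the total length of the Forbidden Pattern instance is $N=\tilde{\Theta}(R\sqrt{Q}\cdot Q)=\tilde{\Theta}(n^{3}/R^{2})=\tilde{\Theta}(n^{3-2\gamma})$, the positive incidences $\tilde{O}(n\sqrt{Q})=\tilde{O}(n^{2-\gamma})$ being lower order. A \ThreeSUMIndexing{} query costs $\tilde{O}(R)$ Forbidden Pattern queries plus lower-order work, so a data structure with space $S(N)$ and query time $T(N)$ gives $s_{\TSI{}}=\tilde{O}(S(N))$ (using $S(N)=\Omega(N)$) and $t_{\TSI{}}=\tilde{O}(R\cdot T(N))$. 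Writing $n=\tilde{\Theta}(N^{1/(3-2\gamma)})$ and invoking the Strong \ThreeSUMIndexing{} conjecture exactly as in the last paragraph of Theorem~\ref{thm:3SI_to_SSD} --- either $s_{\TSI{}}=\tilde{\Omega}(n^2)$ or $t_{\TSI{}}=\tilde{\Omega}(n)$ --- if $S(N)=\tilde{\Theta}(N^{2/(3-2\gamma)-\epsilon})$ then $T(N)=\tilde{\Omega}(n/R)=\tilde{\Omega}(N^{(1-\gamma)/(3-2\gamma)})$, so $S(N)\cdot T(N)^{4}=\tilde{\Omega}\big(N^{\,2/(3-2\gamma)-\epsilon+4(1-\gamma)/(3-2\gamma)}\big)=\tilde{\Omega}(N^{2-\epsilon})$ because the exponent $(6-4\gamma)/(3-2\gamma)$ equals $2$; since this holds for every $\epsilon>0$ we conclude $S=\tilde{\Omega}(N^{2}/T^{4})$. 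The remaining thing to check is that the blow-up is genuinely forced --- one might hope to replace $[Q]$ by a smaller secondary universe, but $Q=\Theta((n/R)^{2})$ is exactly what the constant-probability bound on false positives requires, so no cheaper encoding of a complement is available to this style of reduction --- and, as in Theorem~\ref{thm:3SI_to_SSD}, that letting $\gamma$ range over $(\tfrac12,1)$ covers every regime of $S$ and $T$.
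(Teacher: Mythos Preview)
Your proposal is correct and essentially identical to the paper's own proof: both reuse the hard \SetDisjointness{} instance from Theorem~\ref{thm:3SI_to_SSD}, create one document per universe value $v\in[Q]$ containing the characters of the $\mathcal B$-sets that contain $v$ and the characters of the $\mathcal C$-sets that do \emph{not} contain $v$, obtain a Forbidden Pattern instance of size $N=\tilde\Theta(n^{3-2\gamma})$ (dominated by the complemented $\mathcal C$-characters), and finish with the same $S\cdot T^4=\tilde\Omega(N^{2-\epsilon})$ calculation via $(6-4\gamma)/(3-2\gamma)=2$. The only cosmetic difference is that you name two disjoint alphabets $\beta_{i,j},\gamma_{i',j'}$ while the paper keeps a single alphabet $\Sigma=F$; this changes nothing.
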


\begin{proof}
We will make use of the hard instance of \SetDisjointness{} that was used in order to prove Theorem~\ref{thm:3SI_to_SSD}, and reduce this specific hard instance to the decision version of the Forbidden Pattern Document Retrieval problem. Recall that the size of this hard instance is $\tilde{O}(n^{2-\gamma})$, the universe size is $O(n^{2-2\gamma})$, the number of sets is $\tilde{O}(n)$, and we need to perform $\tilde{O}(n^\gamma)$ \SetDisjointness{} queries in order to answer one \ThreeSUMIndexing{} query.

Similar to the proof of Theorem~\ref{thm:two_pat_doc_retrieval} we set $\Sigma=F$ (one character for each set). However, this time for each $e$ we create a document that contains all the characters corresponding to sets $\mathcal{B}_{i,j}$ that contain $e$ and all the characters corresponding to sets $\mathcal{C}_{i,j}$ that do not contain $e$.

The reason that we prove our lower bound based on the Strong \ThreeSUMIndexing{} conjecture and not on the Strong \SetDisjointness{} conjecture is because the size of our instance can become rather large relative to $N$ (as the number of sets that do not contain an element can be extremely large).

Thus, the size of the Forbidden Pattern Document Retrieval instance is $N = \theta(n^{3-2\gamma})$, and the number of queries to answer is $\theta(n^{\gamma} )$. 
Notice that the size of the instance enforces $\gamma$ to be strictly larger than $1/2$.
By the Strong \ThreeSUMIndexing{} conjecture, either $S = s_{\TSI{}} =\tilde \Omega (n^2) = \tilde \Omega(N^{\frac 2 {3-2\gamma}})$ or $O(n^\gamma T) \geq  t_{\TSI{}} \geq \tilde \Omega (n) $, and so $T \geq \tilde \Omega(N^{\frac {1-\gamma} {3-2\gamma}})$.
For any constant $\epsilon>0$, if the Forbidden Pattern Document Retrieval data structure uses $ \tilde \Theta (N^{\frac 2 {3-2\gamma} -\epsilon})$ space, then $S \cdot T^4 = \tilde \Omega (N^{\frac 2 {3-2\gamma} -\epsilon + \frac{4-4\gamma}{3-2\gamma}}) = \tilde \Omega (N^{2-\epsilon})$. Since this holds for any $\epsilon > 0$ it must be that $S \cdot T^4 = \tilde \Omega (N^{2})$.
\qed

\end{proof}

Notice that
if we only allow linear space then we obtain a query time lower bound of $\Omega(N^{\frac{1}{4}-o(1)})$.

\begin{theorem}\label{thm:forbid_pat_doc_retrieval_reporting}
Assume the Strong \ThreeSUMIndexing{} conjecture.
Suppose there is a data structure for the reporting version of the Forbidden Pattern Document Retrieval problem for a collection of documents $X$ where $N=\sum_{D\in X}|D|$, with $S$ space and query time $O(T+op)$ where $op$ is the size of the output.  Then $S=\tilde \Omega(N^2/T)$.
\end{theorem}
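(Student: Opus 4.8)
The plan is to follow the proof of Theorem~\ref{thm:forbid_pat_doc_retrieval} almost verbatim, except that the starting point is the hard \SetIntersection{} instance constructed inside the proof of Theorem~\ref{thm:3SI_to_SSI} rather than the hard \SetDisjointness{} instance of Theorem~\ref{thm:3SI_to_SSD}. Recall that instance: with $R=n^{\gamma}$ and $Q=n^{1+\delta}/R$ for constants $0<\gamma<1$, $\delta>0$, it comprises $O(R\sqrt{Q})$ sets $\mathcal{B}_{i,j},\mathcal{C}_{i,j}$, each of size at most $3n/R$, over a universe of size at most $Q$, and one \ThreeSUMIndexing{} query reduces to $O(R)$ \SetIntersection{} queries on these sets whose outputs have total expected size $O(n^{1-\delta})$.

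As in Theorem~\ref{thm:forbid_pat_doc_retrieval}, I would set $\Sigma=F$ with one character per set and create one document $D_e$ per universe element $e$, where $D_e$ contains the characters of all $\mathcal{B}_{i,j}$ that contain $e$ together with the characters of all $\mathcal{C}_{i,j}$ that do \emph{not} contain $e$. A Forbidden Pattern query with $P^{+}$ the character of a $\mathcal{B}$-set and $P^{-}$ the character of a $\mathcal{C}$-set then returns precisely the documents $D_e$ with $e\in\mathcal{B}\cap\mathcal{C}$, so the \emph{reporting} version enumerates exactly the intersections demanded by the reduction of Theorem~\ref{thm:3SI_to_SSI}. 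Because each intersection is reported in full, every candidate pair can be verified in $O(1)$ time and the $O(\log n)$-fold $h_2$-amplification used in Theorem~\ref{thm:forbid_pat_doc_retrieval} is unnecessary; this is exactly what will turn $S\cdot T^{4}$ into $S\cdot T$, mirroring how Theorem~\ref{thm:3SI_to_SSI} improves on Theorem~\ref{thm:3SI_to_SSD}.

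Then I would carry out the bookkeeping. Each document has size at most the number of sets, $O(R\sqrt{Q})$, and there are at most $Q$ documents, so $N=O(RQ^{3/2})=O(n^{(3+3\delta-\gamma)/2})$; the number of Forbidden Pattern queries per \ThreeSUMIndexing{} query is $O(R)=O(n^{\gamma})$ and their total output size is $O(n^{1-\delta})$. A data structure with space $S$ and query time $O(T+op)$ thus gives a \ThreeSUMIndexing{} data structure with $s_{\TSI{}}=\tilde O(S+n)=\tilde O(S)$ and $t_{\TSI{}}=\tilde O(R\,T+n^{1-\delta})=\tilde O(n^{\gamma}T+n^{1-\delta})$. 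By the Strong \ThreeSUMIndexing{} conjecture, either $S=\tilde\Omega(n^{2})$ or, since $\delta>0$ makes $n^{1-\delta}=o(n)$, $T=\tilde\Omega(n^{1-\gamma})$; using $n=\Theta(N^{2/(3+3\delta-\gamma)})$ this reads $S=\tilde\Omega(N^{4/(3+3\delta-\gamma)})$ or $T=\tilde\Omega(N^{2(1-\gamma)/(3+3\delta-\gamma)})$. Fixing $\epsilon>0$, if $S=\tilde\Theta(N^{4/(3+3\delta-\gamma)-\epsilon})$ then the second alternative must hold, giving $S\cdot T=\tilde\Omega(N^{(6-2\gamma)/(3+3\delta-\gamma)-\epsilon})$; the exponent tends to $2$ as $\delta\to0$, and since $\epsilon>0$ is arbitrary we get $S\cdot T=\tilde\Omega(N^{2})$, i.e.\ $S=\tilde\Omega(N^{2}/T)$.

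The main obstacle, just as in the decision case, is the size blow-up caused by encoding ``does not contain $P^{-}$'': a document must list \emph{all} $\mathcal{C}$-sets missing its element, so applying this reduction to a generic \SetIntersection{} instance can inflate $N$ quadratically and prove nothing — which is why we condition on Strong \ThreeSUMIndexing{} and reduce from the structured instance rather than from the Strong \SetIntersection{} conjecture directly. What rescues the argument is that in the particular instance of Theorem~\ref{thm:3SI_to_SSI} the number of sets and the universe size are small enough that (number of sets)$\times$(universe size)$=RQ^{3/2}$ exceeds the original total size $\tilde O(n\sqrt Q)$ only by a factor $n^{\delta}$, which is swallowed by the $\delta\to0$ limit; everything else is routine exponent arithmetic of the same flavor as in Theorems~\ref{thm:3SI_to_SSI} and~\ref{thm:forbid_pat_doc_retrieval}.
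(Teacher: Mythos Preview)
Your proposal is correct and follows essentially the same approach as the paper: both start from the hard \SetIntersection{} instance of Theorem~\ref{thm:3SI_to_SSI}, build the same document collection (characters for containing $\mathcal{B}$-sets and non-containing $\mathcal{C}$-sets), obtain the same instance size $N=\Theta(n^{(3+3\delta-\gamma)/2})$, and carry out identical exponent arithmetic to conclude $S\cdot T=\tilde\Omega(N^{2})$ after letting $\delta\to0$. Your handling of the total output size as $O(n^{1-\delta})$ (so that it is $o(n)$ and hence harmless) is in fact cleaner than the paper's stated $\Theta(n^{2-\delta})$, which appears to be a typo there.
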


\begin{proof}
Our proof is similar to the proof of Theorem~\ref{thm:forbid_pat_doc_retrieval}, only this time we use the hard instance of \SetIntersection{} from Theorem~\ref{thm:3SI_to_SSI}.
So the number of queries is $\tilde{O}(n^{\gamma})$, the size of the universe is $O(n^{1+\delta-\gamma})$, the number of sets is $\tilde{O}(n^{\frac{1+\delta+\gamma}2})$, and the total size of the output is $\Theta(n^{2-\delta})$.

Thus, the size of the Forbidden Pattern Document Retrieval instance is $N=\Theta(n^{1+\delta-\gamma}n^{\frac{1+\delta+\gamma}2}) = \Theta(n^{\frac{3+3\delta-\gamma}{2}})$, and the number of queries to answer is $\theta(n^{\gamma} )$.
Notice that the size of the instance enforces $3\delta- \gamma  < 1$.
By the Strong \ThreeSUMIndexing{} conjecture, either $S = s_{\TSI{}} =\tilde \Omega (n^2) = \tilde \Omega(N^{\frac 4 {3+3\delta -\gamma}})$ or $O(n^\gamma T) \geq  t_{\TSI{}} \geq \tilde \Omega (n) $, and so $T \geq \tilde \Omega(N^{\frac {2-2\gamma} {3+3\delta -\gamma}})$.
For any constant $\epsilon>0$, if the Forbidden Pattern Document Retrieval data structure uses $ \tilde \Theta (N^{\frac 4 {3+3\delta -\gamma} -\epsilon})$ space, then $S \cdot T = \tilde \Omega (N^{\frac 4 {3+3\delta -\gamma} -\epsilon + \frac{2-2\gamma}{{3+3\delta -\gamma}}}) = \tilde \Omega (N^{2-\frac{6\delta}{3+3\delta-\gamma}-\epsilon})$. Since this holds for any $\epsilon > 0$ and since we can make $\frac{6\delta}{3+3\delta-\gamma}$ as small as we like, it must be that $S \cdot T = \tilde \Omega (N^{2})$.
\qed

\end{proof}

\subsection{Dynamic Problems}

We show space lower bounds on dynamic problems. Lower bounds for these problems from the time perspective were considered by Abboud and Vassilevska-Williams~\cite{AW14}. The first dynamic problem we consider is \emph{st-SubConn} which is defined as follows. Given an undirected graph $G=(V,E)$, two fixed vertices $s$ and $t$ and a set $S \subseteq V$, answer whether $s$ and $t$ are connected using vertices form $S$ only. Vertices can be added or removed from $S$.

The \SetDisjointness{} problem can be reduced to st-SubConn. Given an instance of \SetDisjointness{} we create an undirected graph $G=(V,E)$ as follows. We first create two unique vertices $s$ and $t$. Then, for each set $S_i$ we create two vertices $v_i$ and $u_i$ and for each element $e_j$ we create a vertex $w_j$. Moreover, we define $E = \{ (v_i, w_j) | e_j \in S_i \} \cup \{ (u_i, w_j) | e_j \in S_i \} \cup \{(s,v_i)| 1 \leq i \leq m\} \cup \{(u_i,t)| 1 \leq i \leq m\}$. Initially the set $S$ contains $s$ and $t$ and all the $w_i$s. Given a query $(i,j)$ asking about the emptiness of $S_i \cap S_j$, we add $v_i$ and $u_j$ to the set $S$. Then, we ask if $s$ and $t$ are connected, if so we know that $S_i \cap S_j$ is not empty as the only way to get from $s$ to $t$ is following $v_i$ and $u_j$ and some node representing a common element of $S_i$ and $S_j$. If $s$ and $t$ are not connected then it is clear that the intersection is empty. After the query we remove the two vertices we have added so other queries can be handled properly.
By this construction we get the following result:

\begin{theorem}
Assume the Strong \SetDisjointness{} conjecture.
Suppose there is a data structure for st-SubConn problem for a graph $G=(V,E)$, with $S$ space and update and query time $T$. Then $S=\tilde \Omega(|E|^2/T^2)$.
\end{theorem}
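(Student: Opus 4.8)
The plan is to reduce \SetDisjointness{} to \emph{st-SubConn} so that a data structure for the latter yields one for the former with essentially the same space and asymptotically the same query time, and then to invoke the Strong \SetDisjointness{} conjecture (Conjecture~\ref{conj:SSDH}). I would first spell out the graph construction already sketched above: from an instance of \SetDisjointness{} with $m$ sets $S_1,\dots,S_m$ over elements $e_1,\dots,e_N$, build $G=(V,E)$ with the two fixed vertices $s,t$, a pair $v_i,u_i$ for each set $S_i$, a vertex $w_j$ for each element $e_j$, and edge set $\{(s,v_i)\}\cup\{(u_i,t)\}\cup\{(v_i,w_j):e_j\in S_i\}\cup\{(u_i,w_j):e_j\in S_i\}$. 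Assuming without loss of generality that no set is empty (hence $m\le N$), we have $|E|=\Theta(N)$, because $\sum_i|S_i|=N$ gives $N$ edges of the form $(v_i,w_j)$, the same count of the form $(u_i,w_j)$, and $2m=O(N)$ edges incident to $s$ or $t$.

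Next I would argue correctness of the query simulation. Preprocess $G$ with the assumed st-SubConn structure and initialise the dynamic vertex set to $\{s,t\}\cup\{w_1,\dots,w_N\}$ during preprocessing. To answer a \SetDisjointness{} query $(i,j)$, insert $v_i$ and $u_j$ into the current set, query whether $s$ and $t$ are connected inside it, and then delete $v_i$ and $u_j$ to restore the structure for the next query. The connectivity answer equals ``$S_i\cap S_j\ne\emptyset$'': the only neighbour of $s$ currently in the set is $v_i$; from $v_i$ one can only reach vertices $w_\ell$ with $e_\ell\in S_i$; from such a $w_\ell$ the only neighbour of the form $u_{i'}$ that lies in the set is $u_j$, and this step is possible only if $e_\ell\in S_j$; and the only neighbour of $t$ in the set is $u_j$. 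Hence an $s$--$t$ path inside the set exists if and only if some $e_\ell$ lies in both $S_i$ and $S_j$.

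Finally I would account for the resources. Each \SetDisjointness{} query triggers $O(1)$ updates plus one connectivity query on the st-SubConn structure, so it costs $O(T)$ time; the space is exactly the space $S$ of that structure, since building the initial vertex set is done during preprocessing, which the conjecture does not constrain. Thus \SetDisjointness{} on an instance of total size $N=\Theta(|E|)$ is solved with space $S$ and query time $O(T)$, and Conjecture~\ref{conj:SSDH} forces $S=\tilde\Omega(N^2/T^2)=\tilde\Omega(|E|^2/T^2)$.

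I do not anticipate a genuine difficulty: the reduction is the one sketched before the statement, and the only points needing care are verifying $|E|=\Theta(N)$ so that the bound can be expressed in terms of $|E|$, and checking that the insert/query/delete triple remains valid if the st-SubConn data structure only guarantees \emph{amortised} time --- but since every \SetDisjointness{} query causes only $O(1)$ updates, the amortised cost over any query sequence is still $O(T)$ per query, so nothing changes.
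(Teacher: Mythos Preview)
Your proposal is correct and follows exactly the same reduction the paper gives just before the theorem: same graph construction, same initial active set $\{s,t\}\cup\{w_j\}$, same insert--query--delete simulation, and the same correctness argument. If anything, your write-up is slightly more careful than the paper's, since you explicitly verify $|E|=\Theta(N)$ and note the amortised-time point.
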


There are other dynamic problems that st-SubConn can be efficiently reduced to, as shown by Abboud and Vssilevska-Williams~\cite{AW14}. This includes the following 3 problems:

\begin{problem}
\textbf{(s,t)-Reachability (st-Reach)}. Maintain a directed graph $G=(V,E)$ subject to edge insertions and deletions, so that queries about the reachability of fixed vertices $s$ and $t$ can be answered quickly.
\end{problem}

\begin{problem}
\textbf{Bipartite Perfect Matching (BPMatch)}. Preprocess and maintain undirected bipartite graph $G=(V,E)$ subject to edge insertions and deletions, so that we can quickly answer if the graph has perfect matching.
\end{problem}

\begin{problem}
\textbf{Strong Connectivity (SC)}. Preprocess and maintain directed graph $G=(V,E)$ subject to edge insertions and deletions, so that we can quickly answer if the graph is strongly connected
\end{problem}

Using our last theorem and the reductions by Abboud and Vassilevska-Williams~\cite{AW14}, noting that they do not effect the space usage, we get the following:

\begin{theorem}
Assume the Strong \SetDisjointness{} conjecture.
Suppose there is a data structure for st-Reach/ BPMatch/ SC problem for a graph $G=(V,E)$, with $S$ space and update and query time $T$. Then $S=\tilde \Omega(|E|^2/T^2)$.
\end{theorem}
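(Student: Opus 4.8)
The plan is to piggyback on the immediately preceding theorem, which — under the Strong \SetDisjointness{} conjecture — already establishes $S=\tilde{\Omega}(|E|^2/T^2)$ for st-SubConn, and then to transport this bound to st-Reach, BPMatch, and SC via the known reductions of Abboud and Vassilevska-Williams~\cite{AW14}. Recall that in~\cite{AW14} each of these three problems is shown to be at least as hard as st-SubConn in the fully dynamic setting: there is a reduction that, given a fully dynamic instance of st-SubConn on a graph with $|E|$ edges, produces and maintains a fully dynamic instance of the target problem on a graph with $\Theta(|E|)$ edges, where every st-SubConn update or query is simulated by $O(1)$ (or at most polylogarithmically many) updates/queries to the target data structure. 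The key observation — already used tacitly in the preceding static reductions — is that these reductions are purely combinatorial black-box simulations: besides the data structure for the target problem, the reduction only needs to store the fixed gadget graph and $O(1)$ words of bookkeeping per gadget recording which vertices/edges are currently toggled, which is $O(|E|)$ space in total.

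Concretely, I would carry out the following steps. First, invoke the preceding theorem: under the Strong \SetDisjointness{} conjecture, any fully dynamic data structure for st-SubConn on a graph with $|E|$ edges and update/query time $T$ uses $S=\tilde{\Omega}(|E|^2/T^2)$ space. Second, for each of the three target problems, instantiate the corresponding reduction from~\cite{AW14}; it maps a graph with $|E|$ edges to one with $\Theta(|E|)$ edges and turns each st-SubConn operation into $O(1)$ operations on the target problem. Third, argue that a data structure for the target problem with space $S$ and update/query time $T$ yields a data structure for st-SubConn with space $O(S+|E|)=\tilde{O}(S)$ (using $S=\Omega(|E|)$, since the data structure must at least be able to address its input) and update/query time $O(T)$. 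Plugging these into the st-SubConn bound gives $S=\tilde{\Omega}(|E|^2/T^2)$ for st-Reach, BPMatch, and SC, as claimed.

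The main obstacle — and essentially the only place where care is genuinely needed — is verifying that the reductions of~\cite{AW14} are space-preserving in the strong sense required here, i.e. that running the reduction on top of the target data structure blows up the space by at most a constant (or polylogarithmic) factor. For st-Reach this is immediate: the reduction from undirected subgraph connectivity to directed $(s,t)$-reachability replaces each undirected edge by a constant-size directed gadget and translates vertex toggles into edge toggles, incurring only a fixed $\Theta(|E|)$-size graph as overhead. For BPMatch and SC the reductions route through a reachability-style construction and add further constant-size gadgetry (augmenting-path, respectively strong-connectivity, enforcers); one must check that this additional structure has total size $O(|E|)$ and that the reduction keeps only $O(1)$ state per gadget. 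Since in all cases the auxiliary structure is $O(|E|)=\tilde{O}(S)$ and is subsumed, it does not affect the asymptotics, and the stated space/time tradeoff follows from the st-SubConn bound.
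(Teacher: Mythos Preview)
Your proposal is correct and follows exactly the paper's approach: invoke the preceding st-SubConn bound and transport it to st-Reach, BPMatch, and SC via the reductions of~\cite{AW14}, observing that those reductions do not affect the space usage. In fact you are more careful than the paper, which simply asserts the space-preservation in one clause without the gadget-by-gadget accounting you sketch.
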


We can get better lower bound for these 3 problems on sparse graphs based on the directed reachability conjecture. Given a sparse graph $G=(V,E)$ as an instance of directed reachability we can reduce it to an instance of st-Reach by just adding to special nodes $s$ and $t$ to the graph. Then, we can answer queries of the form "Is $v$ reachable from $u$?" by inserting two edges $(s,v)$ and $(u,t)$ and asking if $t$ is reachable from $s$. After the query we can restore the initial state by deleting these two edges. Thus, by using the reductions from st-Reach to BPMatch and SC as shown in ~\cite{AW14}, we get the following hardness result:

\begin{theorem}
Assume the Directed Reachability conjecture.
Any data structure for the st-Reach/ BPMatch/ SC problem on sparse graphs can not have $\tilde{O}(n^{1-\Omega(1)})$ update and query time and $\tilde{O}(n^{2-\Omega(1)})$ space.
\end{theorem}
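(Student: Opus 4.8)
The plan is to turn a hypothetical dynamic st-Reach data structure into a static oracle for the general \Reachability{} problem on sparse directed graphs, and then to propagate the resulting lower bound to BPMatch and SC through the reductions of Abboud and Vassilevska-Williams~\cite{AW14}.

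Concretely, suppose some data structure solves st-Reach with $\tilde O(n^{1-\Omega(1)})$ update and query time using $\tilde O(n^{2-\Omega(1)})$ space. Given an arbitrary sparse directed graph $G=(V,E)$ with $n=|V|$ and $|E|=O(n)$ on which we wish to answer unrestricted reachability queries, I would add two fresh vertices $s$ and $t$ with no incident edges to obtain a graph $G'$ --- still sparse, still on $\Theta(n)$ vertices --- and build the dynamic st-Reach structure on $G'$; this is the entire preprocessing, using $\tilde O(n^{2-\Omega(1)})$ space. To answer a query ``is $v$ reachable from $u$ in $G$?'', I would insert the two edges $(s,u)$ and $(v,t)$, ask the single st-Reach query (``is $t$ reachable from $s$?''), record the answer, and then delete the two edges so that $G'$ is back in its original state for the next query. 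Correctness is immediate: $s$ has in-degree $0$ and out-degree $1$ and $t$ has out-degree $0$ and in-degree $1$ throughout the query, so every $s$--$t$ walk has the form $s\to u\to\cdots\to v\to t$ with the middle portion lying entirely in $G$; hence $t$ is reachable from $s$ in $G'$ iff $v$ is reachable from $u$ in $G$. Each query uses $O(1)$ updates and one query of the dynamic structure, i.e. $\tilde O(n^{1-\Omega(1)})$ total time, and the space never exceeds $\tilde O(n^{2-\Omega(1)})$.

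This yields a static \Reachability{} data structure for sparse graphs with $\tilde O(n^{2-\Omega(1)})$ space and truly sublinear, $\tilde O(n^{1-\Omega(1)})$, query time. Since a polynomial saving dominates the polylogarithmic slack of the $\tilde O$, this is simultaneously $o(n^2)$ space and $o(n)$ query time, contradicting the Directed Reachability Hypothesis (Conjecture~\ref{DRH}), which asserts that any \Reachability{} data structure must use $\tilde\Omega(n^2)$ space or linear query time. Hence no such dynamic st-Reach structure exists. Finally, the reductions from st-Reach to BPMatch and to SC given in~\cite{AW14} blow up the instance size by only constant or polylogarithmic factors and simulate each st-Reach operation with $O(1)$ (resp. polylog) operations of the target problem; since they are pure graph transformations plus black-box calls, they do not increase the data structure's space beyond lower-order terms, so the same impossibility transfers to BPMatch and SC.

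The step I expect to require the most care is exactly this last transfer: the~\cite{AW14} reductions were designed for \emph{time} lower bounds, so I would have to check that each one is genuinely space-preserving --- that it maintains no auxiliary structure of size $\omega(n^{2-\Omega(1)})$ --- rather than merely time-efficient. A secondary, more routine point is justifying that inserting and then deleting the two query edges really returns the dynamic structure to a correct state for subsequent queries; this is immediate from the data structure being a correct dynamic algorithm under arbitrary update sequences, but it should be stated explicitly.
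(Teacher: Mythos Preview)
Your proposal is correct and follows essentially the same approach as the paper: add fresh source and sink vertices to $G$, simulate each static reachability query by two edge insertions, one st-Reach query, and two deletions, then invoke the \cite{AW14} reductions to transfer the bound to BPMatch and SC. Your write-up is in fact more careful than the paper's sketch --- you state the degree argument for correctness and flag the need to verify that the \cite{AW14} reductions are space-preserving --- and your edge orientations $(s,u),(v,t)$ are the right ones for the query ``is $v$ reachable from $u$?'' (the paper writes $(s,v),(u,t)$, which appears to be a typo).
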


\bibliographystyle{plain} \bibliography{ms}

\newpage
\appendix
\section*{Appendix}
\section{Sketch of the Main Results}\label{sec:sketch}

\begin{figure}[ht]\label{fig:connections}
\centering
\begin{adjustbox}{width=1.2\textwidth,center=18cm}
\input{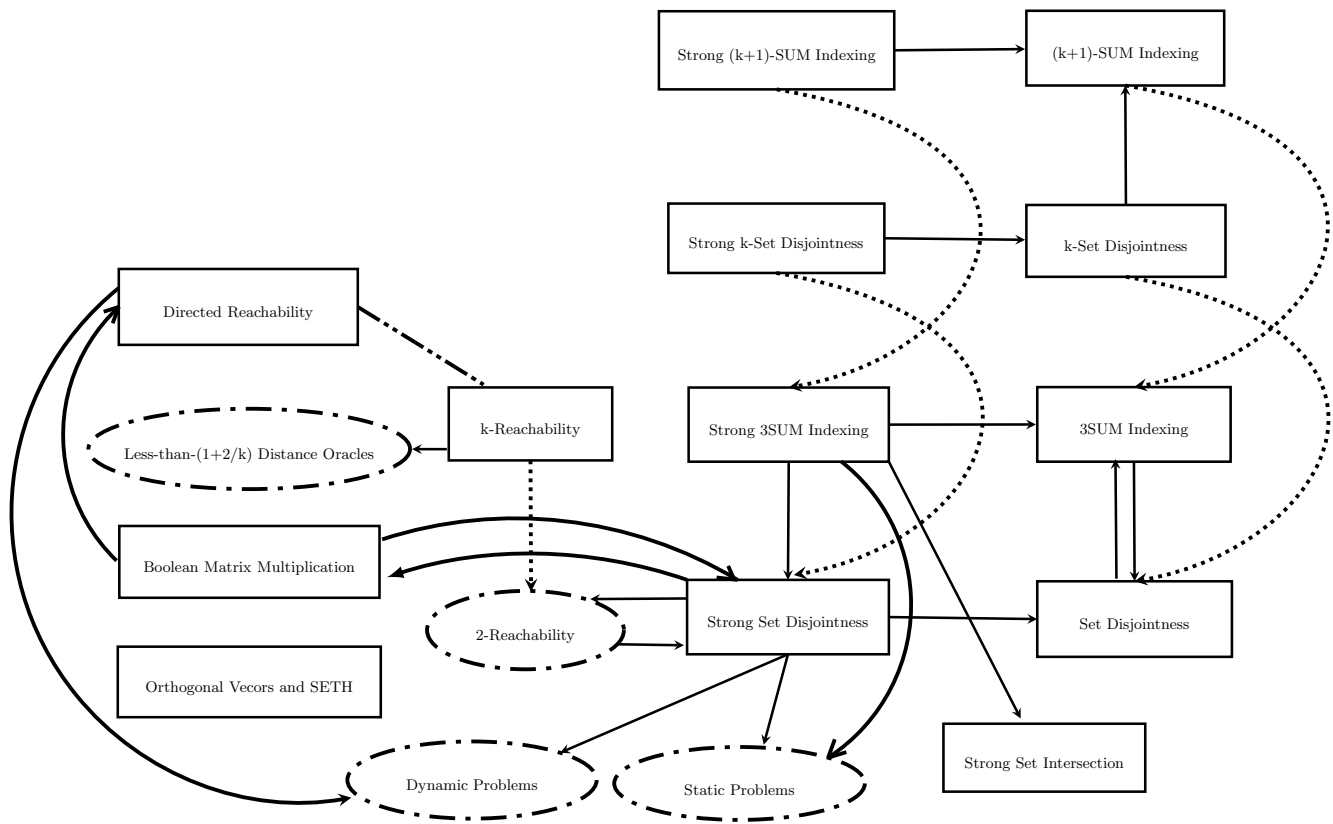}
\end{adjustbox}
\caption{\small\emph{Space conjectures and the connections between them as shown in this paper. Rectangles represent conjectures, while problems shown to be hard based on these conjectures are represented by ovals. Full arrow represents a reduction between two problems which also means an implication in the case of conjectures. Dotted arrow means a generalization which is also an implication, while dashed line means a generalization with no (known) reduction.}} \label{fig:M1}
\end{figure} 

\end{document}